\documentclass[%
 aip,
 jmp,%
 amsmath,amssymb,
]{revtex4-1}

\usepackage{amsthm}
\newtheorem{theo}{Theorem}
\newtheorem{lemm}{Lemma}
\newtheorem{prop}{Proposition}
\newtheorem{coro}{Corollary}
\theoremstyle{definition}
\newtheorem{defi}{Definition}
\newtheorem{rem}{Remark}
\newtheorem{exam}{Example}

\usepackage{enumerate}
\usepackage{braket}
\usepackage{bbm}

\usepackage{graphicx}
\usepackage{dcolumn}
\usepackage{bm}

\newcommand{\natun}{\mathbb{N}}

\newcommand{\cmplx}{\mathbb{C}}
\newcommand{\Real}{\mathbb{R}}
\newcommand{\Reali}{\mathbb{R}^\infty}

\newcommand{\realb}{\mathcal{B} (\mathbb{R})}
\newcommand{\realsp}{ (\Real , \realb )  }
\newcommand{\realmsp}[1]{ (\Real , \realb , #1 )  }
\newcommand{\realbi}{\mathcal{B} (\mathbb{R}^\infty)}
\newcommand{\realspi}{ (\Reali , \realbi )  }
\newcommand{\realmspi}[1]{ (\Reali , \realbi , #1 )  }

\newcommand{\tr}{\mathrm{tr}}
\newcommand{\cB}{\mathcal{B}}
\newcommand{\cL}{\mathcal{L}}
\newcommand{\cH}{\mathcal{H}}

\newcommand{\cK}{\mathcal{K}}
\newcommand{\LH}{\mathcal{L} (\mathcal{H}) }

\newcommand{\LK}{\mathcal{L}  (\mathcal{K})}

\newcommand{\SH}{\mathcal{S} (\mathcal{H}) }

\newcommand{\Th}{\mathcal{T} (\mathcal{H}) }

\newcommand{\TK}{\mathcal{T} (\mathcal{K}) }

\newcommand{\cP}{\mathcal{P}}

\newcommand{\vph}{\varphi}

\newcommand{\phth}{\varphi_\theta}

\newcommand{\phtho}{\varphi_\theta^{(1)}}

\newcommand{\E}{\mathcal{E}}
\newcommand{\thin}{\theta \in \Theta}
\newcommand{\cocp}{\preccurlyeq_{\mathrm{CP}}}
\newcommand{\cosch}{\preccurlyeq_{\mathrm{Sch}}}
\newcommand{\eqcp}{\sim_{\mathrm{CP}}}
\newcommand{\eqsch}{\sim_{\mathrm{Sch}}}

\newcommand{\F}{\mathcal{F}}
\newcommand{\chull}{\mathrm{co}}
\newcommand{\cchull}{\overline{\mathrm{co}}}
\newcommand{\condi}{\mathbb{E}}

\newcommand{\M}{\mathcal{M}}
\newcommand{\N}{\mathcal{N}}
\newcommand{\Min}{\mathcal{M}_{\mathrm{in}}}

\newcommand{\SM}{\mathcal{S} (\mathcal{M})}
\newcommand{\SMin}{\mathcal{S} (\mathcal{M}_{\mathrm{in}})}
\newcommand{\Minast}{\mathcal{M}_{\mathrm{in}\ast}}
\newcommand{\SN}{\mathcal{S} (\mathcal{N})}
\newcommand{\stsp}[1]{\mathcal{S} (#1) }
\newcommand{\s}{\mathrm{s}}

\newcommand{\cpchset}[2]{ \mathbf{Ch}_{\mathrm{CP}}  (#1 \to #2)}
\newcommand{\cpch}[1]{ \mathbf{Ch}_{\mathrm{CP}}  (#1 )}
\newcommand{\schset}[2]{ \mathbf{Ch}_{\mathrm{Sch}}  (#1 \to #2)}
\newcommand{\sch}[1]{ \mathbf{Ch}_{\mathrm{Sch}}  (#1 )}

\newcommand{\oM}{\mathsf{M}}
\newcommand{\oN}{\mathsf{N}}

\newcommand{\tkap}{\tilde{\kappa}}

\newcommand{\id}{\mathrm{id}}
\newcommand{\1}{\mathbbm{1}}

\begin{document}
\preprint{}
\title[]{
Minimal sufficient statistical experiments on von Neumann algebras
}

\author{Yui Kuramochi}
 \affiliation{Department of Nuclear Engineering, Kyoto University, 6158540 Kyoto, Japan}
 \email{kuramochi.yui.22c@st.kyoto-u.ac.jp}

\date{\today}

\begin{abstract}
A statistical experiment on a von Neumann algebra is a parametrized family of normal states on
the algebra.
This paper introduces the concept of minimal sufficiency for 
statistical experiments
in such operator algebraic situations.
We define equivalence relations of statistical experiments indexed by a common parameter set by
completely positive or Schwarz coarse-graining 
and
show that any statistical experiment is equivalent to a minimal sufficient statistical experiment
unique up to normal isomorphism of outcome algebras.
We also establish the relationship between 
the minimal sufficiency condition for statistical experiment in this paper
and those for subalgebra.
These concepts and results are applied to the concatenation relation
for completely positive channels with general input and outcome 
von Neumann algebras. 
In the case of the quantum-classical channel corresponding to
the positive-operator valued measure (POVM),
we prove the equivalence of the minimal sufficient condition
previously proposed by the author and that in this paper.
We also give a characterization of the discreteness of a POVM
up to postprocessing equivalence
in terms of the corresponding
quantum-classical channel.
\end{abstract}

\pacs{03.67.-a, 03.65.Ta, 02.30.Tb}
\keywords{quantum information, quantum measurement, minimal sufficient statistical experiment, postprocessing equivalence relation}
\maketitle

\section{Introduction}
\label{sec:intro}

A statistical experiment, or statistical model,
on a von Neumann algebra $\M$ is a family 
$(\phth)_{\thin}$
of normal states on 
$\M .$
Such operator algebraic statistical experiments 
reflects partial knowledge on the prepared quantum state,
e.g.\ the state is known to be pure or 
to be a Gaussian state parametrized by a finite set of real parameters.
As in the classical mathematical statistics,~\cite{halmos1949application,bahadur1954}
we can consider the (minimal) sufficiency for such noncommutative settings.
Umegaki initiated this line of study in Ref.~\onlinecite{umegaki1959,*umegaki1962}, 
in which the sufficiency of
a von Neumann subalgebra $\M_1$ of $\M$ with respect to
$(\phth)_{\thin}$ is defined by the existence of a 
normal conditional expectation $\condi$ from $\M$ onto $\M_1$
such that $\phth \circ \condi = \phth$
for all $\thin .$
Later Petz~\cite{Petz1986,PETZ01031988}
generalized the sufficiency to arbitrary $2$-positive channel
$\Lambda \colon \N \to \M $
in the Heisenberg picture
with an arbitrary outcome algebra $\N .$
Here $\Lambda$ is sufficient if there exists 
a $2$-positive channel 
$\Gamma \colon \M  \to \N$
such that 
$\phth  \circ \Lambda   \circ \Gamma = \phth$
for all $\thin .$
Operationally, the channel $\Gamma$ can be regarded as 
a reversing channel that reconstruct the original state $\phth$
from the coarse-grained state $\phth \circ \Lambda .$
In this sense, the coarse-grained family of states 
$(\phth \circ \Lambda)_{\thin}$ on $\N$
has the same information about the parameter $\theta$
as the original family $(\phth)_{\thin},$
and such sufficient coarse-grainings induce the equivalence relation 
between noncommutative statistical experiments,~\cite{gutajencova2007}
which is a generalization of the corresponding relation
for classical statistical experiments defined through
sufficient Markov maps.~\cite{torgersen1991comparison}

The minimal sufficiency condition for noncommutative settings so far is 
mainly considered for subalgebras;
a subalgebra is minimal sufficient if it is sufficient and included in
all the sufficient subalgebras.
In Ref.~\onlinecite{Luczak2014} {\L}uczak 
gave a simple proof that
any faithful statistical experiment admits a minimal sufficient subalgebra
by using the mean ergodic theorem 
for von Neumann algebras.~\cite{ISI:A1979HD47500016}
Recently, the author has introduced the concept of the minimal sufficient 
POVM,
which is the least redundant POVM among the POVMs that bring us the 
same information about the measured quantum
system.~\cite{10.1063/1.4934235,*10.1063/1.4961516}
In Ref.~\onlinecite{10.1063/1.4934235,*10.1063/1.4961516}
it is shown that any POVM on a separable Hilbert space is postprocessing equivalent to
a minimal sufficient POVM unique up to almost isomorphism.
Then it is natural to ask whether 
we can generalize the notion of minimal sufficiency 
to noncommutative statistical experiments
and
whether
we can establish existence and uniqueness up to isomorphism
for such general statistical experiments
as in the case of POVM.
In this paper we investigate these questions and give affirmative answers for them.

This paper is organized as follows.
Sec.~\ref{sec:prel} is devoted to the preliminaries on von Neumann algebras and channels between them.
In Sec.~\ref{sec:mse} we introduce two minimal sufficiency conditions on 
statistical experiments by Schwarz and completely positive (CP) coarse-grainings,
which are shown to be equivalent in Theorem~\ref{theo:equiv},
and prove the existence and uniqueness up to isomorphism
of a minimal sufficient statistical experiment equivalent to a given statistical experiment
(Theorem~\ref{theo:main}).
We also establish 
in Theorems \ref{theo:equiv} and \ref{theo:msalg}
that the minimal sufficiency of a statistical experiment can be characterized
in terms of the minimal sufficiency of subalgebra and vice versa. 
We also apply these results to the channel concatenation relation.
In Sec.~\ref{sec:povm} we consider POVMs by identifying them with
quantum-classical (QC) channels
and establish 
in Theorem~\ref{theo:mspovms}
the equivalence between the minimal sufficiency conditions
proposed in this paper and in Ref.~\onlinecite{10.1063/1.4934235,*10.1063/1.4961516}.
We also give a characterization of the discreteness of a POVM up to postprocessing equivalence
in terms of the corresponding QC channel
by using the construction of a minimal sufficient statistical experiment given in Sec.~\ref{sec:mse}
(Theorem~\ref{theo:discreteness}).

\section{Preliminaries}
\label{sec:prel}
In this section we introduce preliminaries on von Neumann algebras and fix the notation.
For general reference on operator algebras, we refer Ref.~\onlinecite{takesakivol1}.

Let $\M$ be a von Neumann algebra.
The unit element of $\M $ is denoted by $\1_\M .$
A bounded linear functional $\varphi \in \M^\ast$
is called normal if it is continuous in the $\sigma$-weak topology
(i.e.\ ultraweak topology) of $\M$
and the set of normal linear functionals on $\M$ is written as
$\M_\ast ,$
which can be identified with the predual space of $\M. $
For each $\varphi \in \M_\ast$ and $A \in \M ,$
$\varphi(A)$ is also denoted as 
$\braket{\varphi , A } .$
A normal linear functional $\varphi \in \M_\ast$ is called a normal state on $\M$
if $\varphi$ is positive and satisfies the normalization condition 
$\varphi (\1_\M ) = 1 .$
The set of normal states on $\M$ is denoted by 
$\SM.$
For each $\varphi \in \SM ,$
the support of $\varphi$ is the smallest projection 
$\s (\varphi) \in \M$ satisfying 
$\varphi ( \s  (\varphi)  )   =1 .$ 
A family of normal states 
$(\phth)_{\thin}$ is said to be faithful if for each positive $A \in \M,$
$\phth (A) = 0 $
for all $\thin$
implies $A=0 .$
This condition is equivalent to 
$\bigvee_{\thin} \s (\phth) = \1_\M ,$
where for a family of projections 
$(P_i)$ in $\M,$
$\bigvee_i P_i$ denotes the supremum projection on $\M .$

The quantum channel describing the general quantum operation or coarse-graining is defined as follows.
Let $\M$ and $\N$ be von Neumann algebras and
let $\Lambda \colon \M  \to \N$ be a bounded linear map.
$\Lambda$ is called unital if $\Lambda (\1_\M) = \1_\N . $
$\Lambda $
is called normal if 
it is continuous in the $\sigma$-weak topologies on $\M $ and $\N .$
For normal $\Lambda \colon \M \to \N $
we define its predual 
$\Lambda_\ast \colon \N_\ast \to \M_\ast$
by $\Lambda_\ast (\varphi) = \varphi \circ \Lambda $
$(\varphi \in \N_\ast).$
The map $\Lambda_\ast$ is also characterized by the equation
$\braket{\varphi , \Lambda (A) }   =   \braket{\Lambda_\ast (\varphi)  ,  A  }$
$(\varphi \in \N_\ast , A \in \M) .$
$\Lambda$ is called positive if $\Lambda (A) \geq 0$ for any $A \geq 0 .$
$\Lambda $ is called $n$-positive ($n \geq 1$) if 
\[
	\sum_{1 \leq i, j \leq n}
	B_i^\ast
	\Lambda (A^\ast_i A_j)
	B_j
	\geq 0
\]
holds for any $A_i \in \M$ and any $B_j \in \N .$
$\Lambda$ is called completely positive (CP) if 
$\Lambda$ is $n$-positive for all $n \geq 1 .$
$\Lambda$ is said to be a Schwarz map if it satisfies
\[
	|| \Lambda  ||
	\Lambda (A^\ast A )
	\geq
	\Lambda (A^\ast)
	\Lambda (A) 
	\quad
	(\forall A \in \M),
\]
which is called the Schwarz, 
or Kadison-Schwarz,
inequality.
If $\Lambda$ is unital,
the Schwarz inequality reduces to
\begin{equation}
	\Lambda (A^\ast A )
	\geq
	\Lambda (A^\ast)
	\Lambda (A) 
	\quad
	(\forall A \in \M) ,
	\label{eq:schineq1}
\end{equation}
or equivalently 
\begin{equation}
	\begin{pmatrix}
		\Lambda (A^\ast A)
		&
		\Lambda (A^\ast)
		\\
		\Lambda (A) 
		&
		\1_{\N}
	\end{pmatrix}
	\geq 
	0
	\quad
	(\forall A \in \M) .
	\label{eq:schineq2}
\end{equation}
From the conditions~\eqref{eq:schineq1} and \eqref{eq:schineq2} 
we can see that any composition and any convex combination 
of unital and Schwarz maps are also unital and Schwarz.
Any $2$-positive map 
is Schwarz.~\cite{choi1974}
If either $\M$ or $\N$ is abelian,
the Schwarz and CP conditions are reduced to 
the simpler condition of positivity.
A linear map
$\Lambda \colon \M \to \N$
is called a Schwarz (respectively, CP)
channel (in the Heisenberg picture)
if $\Lambda$ is normal, unital, and Schwarz (respectively, CP).
The set of Schwarz (respectively, CP)
channels from $\M$ to $\N$ is denoted by 
$\schset{\M}{\N}$
(respectively, $\cpchset{\M}{\N}$).
The sets $\schset{\M}{\M}$ and
$\cpchset{\M}{\M}$ are denoted as
$\sch{\M}$
and $\cpch{\M} ,$
respectively.
The identity map on $\M$ is denoted by 
$\id_\M .$
For a Schwarz or CP channel $\Lambda \colon \M \to \N ,$
$\M$ and $\N$ are called the outcome and input spaces of 
$\Lambda, $
respectively.
Here a channel $\Lambda \colon \M \to \N$ in the Heisenberg picture 
maps a outcome observable
$A \in \M$ to the input observable $\Lambda (A) \in \N .$
On the other hand, the state change in the Schr\"odinger picture 
is described by the predual map $\Lambda_\ast \colon \N_\ast \to \M_\ast $
that maps the input state $\vph \in \SN$ 
to the outcome state $\Lambda_\ast (\vph) \in \SM .$

Let $\M$ be a von Neumann algebra and
let $\M_1$ be a von Neumann subalgebra of $\M .$
A conditional expectation, or CP projection, 
from $\M$ onto $\M_1$ is a normal linear mapping
$\condi \colon \M \to \M_1 $
satisfying
\begin{gather*}
	\condi (B) = B
	\quad 
	(\forall B \in \M_1) ,
	\\
	||  \condi (A) || 
	\leq 
	||A||
	\quad
	(\forall A \in \M ) .
\end{gather*}
If $\condi$ satisfies the above conditions, 
then we have the following:
\begin{gather*}
	\condi (B_1 A B_2) = B_1 \condi (A) B_2 
	\quad
	(\forall A \in \M ; \forall B_1 , \forall B_2 \in \M_1) ,
	\\
	\condi \in \cpchset{\M}{\M_1} .
\end{gather*}
A conditional expectation $\condi \colon \M \to \M_1$
is called faithful if $\condi (A^\ast A) = 0 $ implies $A=0$
for any $A \in \M .$

Now we define (minimal) sufficient subalgebras following Ref.~\onlinecite{Luczak2014}.
Let $\M$ be a von Neumann algebra,
let $\M_1$ be a von Neumann subalgebra of $\M ,$
and let $(\phth)_{\thin}$ be a family of normal states on $\M .$
$\M_1$ is called Schwarz (respectively, CP) sufficient subalgebra
with respect to $(\phth)_{\thin}$
if there exists 
$\Gamma \in \schset{\M}{\M_1}$
(respectively, $\Gamma \in \cpchset{\M}{\M_1}$)
such that
$\phth \circ \Gamma = \phth$
for all $\thin .$
$\M_1$ is called an Umegaki sufficient subalgebra
with respect to $(\phth)_{\thin}$
if there exists a conditional expectation
$\condi$ from $\M$ onto $\M_1$ such that
$\phth \circ \condi = \phth$
for all $\thin .$
The following implications hold for these notions 
of sufficient subalgebra:
\begin{equation}
	\text{$\M_1$ is Umegaki sufficient}
	\implies
	\text{$\M_1$ is CP sufficient}
	\implies
	\text{$\M_1$ is Schwarz sufficient.}
	\label{eq:sufimp}
\end{equation}
A von Neumann subalgebra $\M_1$ of $\M$ is called  
Schwarz (respectively, CP or Umegaki) minimal sufficient with respect to 
$(\phth)_{\thin}$ 
if $\M_1$ is Schwarz (respectively, CP or Umegaki) sufficient
and contained in any Schwarz (respectively, CP or Umegaki) sufficient subalgebras.

An important example of a von Neumann algebra is 
the set of bounded operators $\LH$ on a Hilbert space
$\cH .$
We call such a von Neumann algebra fully quantum.
The predual $\LH_\ast$
(respectively, the set of normal states $\mathcal{S} (\LH)$)
is identified with the set of trace class operators $\Th$
(respectively, the set of density operators $\SH$)
on $\cH$
by the identification
$\braket{T , A} =  \tr (TA)$
($T \in \Th , A \in \LH$).
For a CP channel $\Lambda \in \cpchset{\LK}{\LH} ,$
its predual is a map $\Lambda_\ast \colon \Th \to \TK$
that is CP and trace-preserving.

Another important example is the abelian von Neumann algebra.
Let
$(\Omega, \Sigma , \mu) $
be a localizable~\cite{segal1951}
measure space.
We denote the $L^p$ space of
$(\Omega, \Sigma , \mu) $ by 
$L^p (\Omega , \Sigma , \mu)$
or
$L^p (\mu) $
for $1 \leq p \leq \infty .$
The notion of $\mu$-almost everywhere ($\mu$-a.e.)
equality defines an equivalence relation on the set of complex-valued $\Sigma$-measurable 
functions and the equivalence class to which a measurable function $f$ belongs 
is denoted by $[f]_\mu .$
Note that $L^p (\mu)$ is a set of such equivalence classes.
$L^\infty (\mu)$ is an abelian von Neumann algebra acting on the Hilbert space
$L^2 (\mu) $
and its predual is identified with $L^1 (\mu)$
by the correspondence
\[
\braket{ [g]_\mu , [f]_\mu   }
=
\int_\Omega 
gf d\mu ,
\quad
( [g]_\mu \in L^1 (\mu ) , [f]_\mu \in L^\infty (\mu)   ) .
\]

\section{Minimal sufficient statistical experiment and channel}
\label{sec:mse}
In this section we establish existence and uniqueness theorem for minimal sufficient statistical experiments
on general von Neumann algebras.
We also apply this to the concatenation relation for channels.

\subsection{Minimal sufficient statistical experiment}\label{subsec:msse}
A triple $\E = (\M , \Theta , (\phth)_{\thin}  )$
is called a \emph{statistical experiment} if 
$\M$ is a von Neumann algebra, 
$\Theta \neq \varnothing$ is a set, 
and $(\phth)_{\thin} \in \SM^\Theta$ 
is a family of normal states on $\M$ indexed by $\Theta .$
$\M$ and $\Theta$ are called the outcome space and the parameter set of $\E ,$
respectively.

\begin{defi}
Let 
$\E_1 = (\M_1 , \Theta , (\phth^{(1)})_{\thin}  )$
and
$\E_2 = (\M_2 , \Theta , (\phth^{(2)})_{\thin}  )$
be statistical experiments with the common parameter set $\Theta .$
\begin{enumerate}[(i)]
\item
$\E_1$ is a \emph{Schwarz coarse-graining}   
(respectively, \emph{CP coarse-graining})
of $\E_2 ,$
written $\E_1 \cosch \E_2 $
(respectively, $\E_1 \cocp \E_2$),
if there exists $\Lambda \in \schset{\M_1}{\M_2}$
(respectively, $\Lambda \in \cpchset{\M_1}{\M_2}$)
such that
$\phth^{(1)} = \phth^{(2)} \circ \Lambda$
for all $\thin .$
\item
$\E_1$ and $\E_2$ are called \emph{Schwarz equivalent}
(respectively, \emph{CP equivalent}),
written $\E_1 \eqsch \E_2 $
(respectively, $\E_1 \eqcp \E_2$),
if both
$\E_1 \cosch \E_2$ and $\E_2 \cosch \E_1$
(respectively, $\E_1 \cocp \E_2$ and $\E_2 \cocp \E_1$)
hold.
\item
$\E_1$ and $\E_2$ are said to be \emph{isomorphic}, 
written $\E_1 \cong \E_2 ,$
if there exists a normal isomorphism 
$\pi$ from $\M_1$ onto $\M_2$ such that 
$\phth^{(1)} = \phth^{(2)}   \circ \pi $
for all $\thin .$
\end{enumerate}
\end{defi}
$\cosch$ and $\cocp$ are preorder relations and
$\eqsch,$ $\eqcp ,$ and $\cong$ are equivalence relations for statistical experiments.
The following implications are evident from the definitions:
\begin{gather*}
	\E_1 \cocp \E_2 
	\implies
	\E_1 \cosch \E_2 , 
	\\
	\E_1 \cong \E_2 
	\implies
	\E_1 \eqcp \E_2 
	\implies
	\E_1 \eqsch \E_2 .
\end{gather*}
We will show in Corollary~\ref{coro:coarse} that
the relations $\eqsch$ and $\eqcp$ in fact coincide.

We now define the minimal sufficiency conditions as follows.
\begin{defi}
\label{defi:experiment}
A statistical experiment $\E =(\M , \Theta , (\phth)_{\thin})$ 
is \emph{Schwarz minimal sufficient}
(respectively, \emph{CP minimal sufficient})
if $\phth \circ \Gamma  = \phth $
for all $\thin$
implies
$\Gamma = \id_\M$
for any $\Gamma \in \sch{\M}$
(respectively, for any $\Gamma \in \cpch{\M}$).
\end{defi} 
Apparently a Schwarz minimal sufficient statistical experiment
is CP minimal sufficient.
We will prove in Theorem~\ref{theo:equiv} that these minimal sufficiency conditions
are in fact equivalent.

We now state the mean ergodic theorem,~\cite{ISI:A1979HD47500016}
which is the key to the proofs of the following theorems.
Let $\M$ be a von Neumann algebra
and let $\cL (\M) $ denote the set of bounded linear operators on
$\M .$
The topology
$\sigma  ( \cL  (\M)  ,  \M \otimes \M_\ast   )$
is called the $\sigma$-weak topology on $\cL (\M) .$
For a subset $\F \subseteq \cL  (\M) ,$
we denote by $\chull (\F)$ the convex hull of $\F$
and by $\cchull (\F)$ the closed convex hull of $\F$ 
with respect to the $\sigma$-weak topology on $\cL  (\M) .$

\begin{lemm}[Ref.~\onlinecite{ISI:A1979HD47500016}, Theorem~2.4]
\label{lemm:ergodic}
Let $\M$ be a von Neumann algebra and
let $\F$ be a semigroup of normal Schwarz contractions on $\M .$
Suppose that there exists a faithful family of normal states
$\cP$ 
on $\M$
such that $\varphi \circ \Gamma = \varphi$
for all $ \varphi \in \cP$ and for all $\Gamma \in \F .$ 
Then there exists a normal linear mapping $\condi$ on $\M$ such that
$\condi \in \cchull (\F)$
and
$\condi \circ \Gamma = \Gamma \circ \condi = \condi$
for all $\Gamma \in \F .$
Furthermore, $\condi$ is a conditional expectation onto
the fixed point von Neumann subalgebra
$\condi \M = \set{B \in \M |  \condi (B) = B} .$
\end{lemm}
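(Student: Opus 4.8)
The plan is to produce $\condi$ as the zero element of the semigroup $\cK := \cchull (\F) \subseteq \cL (\M)$ and then to use the faithfulness of $\cP$ to identify $\condi$ as a conditional expectation onto its fixed-point algebra. First I would check that $\cK$ is a compact convex semigroup on which composition is separately $\sigma$-weakly continuous: since the elements of $\F$ are contractions, $\cK$ lies in the closed unit ball of $\cL (\M)$, which is $\sigma$-weakly compact by the Banach--Alaoglu theorem; $\cK$ is $\sigma$-weakly closed and convex by construction; $\chull (\F)$ is stable under composition by bilinearity, and since $T \mapsto T \circ \Gamma$ and $T \mapsto \Gamma \circ T$ are $\sigma$-weakly continuous for each fixed $\Gamma$, the closed convex hull $\cK$ is stable under composition as well. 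Moreover every $T \in \cK$ inherits from the elements of $\F$ that it is a Schwarz contraction (the Schwarz inequality in its $2 \times 2$-matrix form is preserved by convex combinations and by $\sigma$-weak limits) and that it still fixes $\cP$, i.e.\ $\varphi \circ T = \varphi$ for all $\varphi \in \cP$ (the map $T \mapsto \varphi \circ T$ is $\sigma$-weakly continuous). Two consequences of fixing the faithful family $\cP$, both by a short supremum/faithfulness argument: every $T \in \cK$ is unital, and every $T \in \cK$ is normal (for a positive map, $\varphi \circ T = \varphi$ for all $\varphi$ in a faithful family of normal states forces $T$ to preserve suprema of bounded increasing nets).

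Next I would extract a zero of $\cK$. By the structure theory of compact semitopological semigroups, $\cK$ has a kernel (smallest two-sided ideal) $\cJ \neq \varnothing$ containing an idempotent $e$, and $e \circ \cK \circ e$ is a compact semitopological group with identity $e$; by Ellis's theorem it is a topological group. This group sits inside the compact convex set $\cK$, so the barycenter $b$ of its Haar probability measure lies in the convex compact set $e \circ \cK \circ e$, and left/right invariance of Haar measure makes $b$ a two-sided zero of the group; a group that has a zero is trivial, so $e \circ \cK \circ e = \{ e \}$. From this one reads off, for $y = z \circ e$ in the left ideal $\cK \circ e$, that $y \circ e = y$ and $e \circ y = e \circ z \circ e = e$; hence $\cK \circ e$ is a left-zero semigroup, and symmetrically $e \circ \cK$ is a right-zero semigroup. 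Granting for a moment that $\cK \circ e = e \circ \cK = \{ e \}$, every element $a \circ e \circ b$ of $\cJ$ equals $e$, so $\cJ = \{ e \}$ and $\condi := e$ is a two-sided zero of $\cK$; in particular $\condi \in \cchull (\F)$ and $\condi \circ \Gamma = \Gamma \circ \condi = \condi$ for all $\Gamma \in \F \subseteq \cK$.

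It remains to prove $\cK \circ e = \{ e \}$, which is the step I expect to be the main obstacle and the one where faithfulness of $\cP$ is essential; it also yields the conditional-expectation statement. First, $e$ is a faithful normal conditional expectation onto $\N := e \M = \{ B \in \M \mid e (B) = B \}$: faithfulness and normality of $e$ hold by the remarks above, and $\N$ is a von Neumann subalgebra because for $B \in \N$ the Schwarz inequality gives $e (B^\ast B) \geq e (B^\ast) e (B) = B^\ast B$, while $\varphi (e (B^\ast B)) = \varphi (B^\ast B)$ for every $\varphi \in \cP$, so faithfulness forces $e (B^\ast B) = B^\ast B \in \N$; polarizing shows $\N$ is a $\ast$-subalgebra, it is $\sigma$-weakly closed since $e$ is normal, and Tomiyama's theorem then makes the norm-one idempotent $e$ a conditional expectation onto $\N$. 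Now take $y \in \cK \circ e$. From $y \circ e = y$ we get $y = \hat y \circ e$ with $\hat y := y|_\N \colon \N \to \M$ unital, normal and Schwarz, and from $e \circ y = e$ we get $e \circ \hat y = \id_\N$. Fix $B \in \N$ and set $C := \hat y (B) - B$, so $e (C) = 0$; applying $e$ to $\hat y (B^\ast B) \geq \hat y (B)^\ast \hat y (B)$, using $e \circ \hat y = \id_\N$ and the $\N$-bimodule property of $e$ (which annihilates $e (B^\ast C)$ and $e (C^\ast B)$), one obtains $B^\ast B \geq e \bigl( \hat y (B)^\ast \hat y (B) \bigr) = B^\ast B + e (C^\ast C)$, hence $e (C^\ast C) = 0$ and therefore $C = 0$ by faithfulness of $e$. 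Thus $\hat y$ is the inclusion $\N \hookrightarrow \M$ and $y = e$, so $\cK \circ e = \{ e \}$; the computation for $e \circ \cK = \{ e \}$ is symmetric. This completes the argument, with $\condi \M = \N$ the fixed-point von Neumann subalgebra of $\condi$.
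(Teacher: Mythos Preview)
The paper does not prove this lemma; it is quoted from K\"ummerer--Nagel with only the remark that the complete-positivity hypothesis there can be weakened to the Schwarz property. So there is no proof in the paper to compare your argument against, and what follows is an assessment of your attempt on its own.

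Your overall architecture is the right one, and most of the steps are carried out correctly: compactness of $\cK$ in the point--$\sigma$-weak topology, preservation of unitality, positivity, normality, and the Schwarz inequality under convex combinations and $\sigma$-weak limits (using $\cP$-invariance and faithfulness for unitality and normality), the Haar-barycenter argument showing $e\cK e=\{e\}$, and the proof that $e$ is a faithful normal conditional expectation onto $\N=e\M$ are all fine. Your computation that $\cK\circ e=\{e\}$ is also correct.

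The gap is your claim that ``the computation for $e\circ\cK=\{e\}$ is symmetric.'' It is not. In the $\cK\circ e$ case you had $\hat y\colon\N\to\M$ with a \emph{left} inverse $e$, and the Schwarz inequality for $\hat y$ together with faithfulness of $e$ forced $\hat y$ to be the inclusion; this is a uniqueness statement for a \emph{right} inverse of $e$. In the $e\circ\cK$ case you instead have $y\colon\M\to\N$ with $y|_\N=\id_\N$, i.e.\ a \emph{left} inverse of the inclusion, and you must show it coincides with $e$. That is a uniqueness statement for $\cP$-invariant conditional expectations onto $\N$, which is a genuinely different assertion, and the previous computation does not dualize to give it. Concretely, running your $\cK\circ e$ argument with the roles swapped would require the inclusion $\N\hookrightarrow\M$ to play the role of a faithful conditional expectation, which it is not.

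The repair is short but different in flavour. For $y\in e\cK$ you have already observed that $y$ is a conditional expectation onto $\N$ (Tomiyama) with $\varphi\circ y=\varphi$ for all $\varphi\in\cP$. Put $\delta:=y-e$. Then $\delta$ is $\N$-bimodular (difference of two $\N$-bimodule maps), $\delta(\M)\subseteq\N$, and $\varphi\circ\delta=0$ for every $\varphi\in\cP$. For $A\in\M$ one has $\delta(A)^\ast\in\N$, so by bimodularity
\[
\delta(A)^\ast\,\delta(A)\;=\;\delta\bigl(\delta(A)^\ast A\bigr),
\]
and hence $\varphi\bigl(\delta(A)^\ast\delta(A)\bigr)=0$ for all $\varphi\in\cP$. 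Faithfulness of $\cP$ gives $\delta(A)=0$, so $y=e$ and $e\cK=\{e\}$. With this in hand your conclusion that $\condi:=e$ is a two-sided zero of $\cK$, and hence satisfies $\condi\circ\Gamma=\Gamma\circ\condi=\condi$ for all $\Gamma\in\F$, goes through as written.
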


\begin{rem}
\label{rem:ergodic}
While the original statement in Ref.~\onlinecite{ISI:A1979HD47500016}
is for a semigroup of CP contractions,
we can relax this constrains to a semigroup of Schwarz contractions
since the Schwarz property is sufficient for the proof.
\end{rem}

\begin{lemm}
\label{lemm:faithful}
Let $\E =(\M , \Theta , (\phth)_{\thin})$ be a statistical experiment.
Then $\E$ is CP equivalent to a statistical experiment
$ \E_0 =  (\M_0 , \Theta ,  (\phth^{(0)}   )_{\thin}  )$ such that
$(\phth^{(0)}   )_{\thin}$ is faithful on $\M_0 .$
\end{lemm}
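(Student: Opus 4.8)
The plan is to restrict everything to the support projection of the whole family. First I would set $P := \bigvee_{\thin}\s(\phth) \in \M$; this is a nonzero projection, since $\Theta\neq\varnothing$ and every $\s(\phth)$ is nonzero, and I would take $\M_0 := P\M P$, the corner von Neumann algebra with unit $\1_{\M_0} = P$. Then I would let $\phthz$ be the restriction of $\phth$ to $\M_0$: because $\phth(P)=1$, this is a normal state on $\M_0$, and the claim is that $\E_0 := (\M_0,\Theta,(\phthz)_{\thin})$ works.

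The ingredient I would isolate first is the elementary support identity: for a normal state $\varphi$ on a von Neumann algebra and a projection $e$ with $\varphi(e)=1$ one has $\varphi(A)=\varphi(eAe)$ for all $A$, which follows by applying the Cauchy--Schwarz inequality to $\varphi$ so that the off-diagonal corners $\varphi((\1-e)A)$ and $\varphi(eA(\1-e))$ vanish. Applying this with $e=\s(\phth)\leq P$ (twice) yields $\phth(A)=\phth(PAP)$ for every $A\in\M$ and every $\thin$.

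With this in hand, establishing $\E\eqcp\E_0$ reduces to writing down two CP channels. For $\E\cocp\E_0$ I would use the compression $\Lambda\colon\M\to\M_0$, $\Lambda(A):=PAP$, which is a normal unital CP map and satisfies $\phthz\circ\Lambda=\phth$ by the support identity, so $\Lambda\in\cpchset{\M}{\M_0}$. For $\E_0\cocp\E$ I would fix any normal state $\psi$ on $\M_0$, set $Q:=\1_\M-P$, and take $\Gamma\colon\M_0\to\M$, $\Gamma(A):=A+\psi(A)Q$; this is normal, sends $P$ to $P+Q=\1_\M$, is CP as the sum of the inclusion $\M_0\hookrightarrow\M$ and the CP map $A\mapsto\psi(A)Q$, and satisfies $\phth\circ\Gamma=\phthz$ on $\M_0$ because $\phth(Q)=1-\phth(P)=0$. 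Hence both coarse-graining relations hold and $\E\eqcp\E_0$. Finally, for faithfulness I would observe that $\s(\phth)$ already lies in $\M_0$ and is the smallest projection of $\M_0$ with $\phthz$-value $1$, so $\s(\phthz)=\s(\phth)$ and therefore $\bigvee_{\thin}\s(\phthz)=\bigvee_{\thin}\s(\phth)=P=\1_{\M_0}$, which is precisely faithfulness of $(\phthz)_{\thin}$ on $\M_0$. I do not expect any real obstacle here; the only points needing a little care are the support identity and the observation that the naive inclusion $\M_0\hookrightarrow\M$ is not unital, so one must fill in the $Q$-corner as in the definition of $\Gamma$.
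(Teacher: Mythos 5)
Your proposal is correct and follows essentially the same route as the paper: compress to the corner $\M_0 = P\M P$ with $P=\bigvee_{\thin}\s(\phth)$, use the support identity $\phth(A)=\phth(PAP)$, and exhibit the CP channels $\Lambda(A)=PAP$ and $\Gamma(B)=B+\psi(B)(\1_\M-P)$ in both directions. The only difference is that you spell out the Cauchy--Schwarz argument for the support identity and the verification of faithfulness, which the paper leaves implicit.
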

\begin{proof}
Let 
$P =  \bigvee_{\thin} \s (\phth)$
be the support of the family $(\phth)_{\thin}$
and let $\M_0 := P \M  P .$
We define
$\phth^{(0)}$ by the restriction of $\phth$ to $\M_0 .$
Then $ \E_0 = (\M_0 , \Theta ,  (\phth^{(0)}   )_{\thin}  )$ is a statistical experiment
and $(\phth^{(0)}   )_{\thin}$ is faithful.
Now we show $\E \eqcp \E_0 .$
We define channels $\Lambda \in \cpchset{\M}{\M_0}$
and 
$\Gamma \in \cpchset{ \M_0 }{ \M  }$ by
\begin{gather*}
	\Lambda (A) := PAP    \quad (A \in \M) ,
	\\
	\Gamma (B) 
	:=
	B +  \phi(B)  (\1_\M - P)
	\quad
	(B \in \M_0) ,
\end{gather*}
where $\phi$ is an arbitrary fixed normal state on $\M_0 .$
Since we have
$\phth (A) = \phth (PAP)$
$( \thin,  A \in \M ) ,$
for any $A \in \M $ and $B \in \M_0 $ we obtain
\begin{gather*}
	\phth^{(0)} \circ \Lambda (A)
	=
	\phth (PAP)
	=
	\phth (A),
	\\
	\phth \circ \Gamma (B)
	=
	\phth  (B)
	+ \phi (B) \phth (\1_\M - P)
	=
	\phth(B)
	=
	\phth^{(0)} (B) .
\end{gather*}
Therefore $\E \eqcp \E_0 $ holds.
\end{proof}

Now we are in the position to prove the following theorem, 
which is the main result of this paper.
\begin{theo}
\label{theo:main}
Let $\E =(\M , \Theta , (\phth)_{\thin})$ be a statistical experiment.
\begin{enumerate}[(i)]
\item
There exists a Schwarz minimal sufficient statistical experiment $\E_0$ CP equivalent to $\E .$
Furthermore if $\E_1$ is another 
Schwarz minimal sufficient statistical experiment Schwarz equivalent to $\E ,$
then $\E_0 \cong \E_1$ holds.
\item
There exists a CP minimal sufficient statistical experiment 
$\E_0$ CP equivalent to $\E .$
Furthermore if $\E_1$ is another 
CP minimal sufficient statistical experiment CP equivalent to $\E ,$
then $\E_0 \cong \E_1$ holds.
\end{enumerate}
\end{theo}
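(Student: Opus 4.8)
The plan is to prove (i) and (ii) at once, using that a Schwarz minimal sufficient experiment is automatically CP minimal sufficient. Since $\eqcp$ and $\eqsch$ are transitive, replacing $\E$ by the CP-equivalent experiment of Lemma~\ref{lemm:faithful} alters neither the hypotheses nor the conclusions, so I assume $(\phth)_{\thin}$ is faithful on $\M$. Put
\[
\F := \set{ \Gamma \in \sch{\M} | \phth \circ \Gamma = \phth \ \text{for all}\ \thin } .
\]
Then $\F$ is a semigroup of normal Schwarz contractions on $\M$ containing $\id_\M$ (a composition of unital Schwarz channels is of the same kind, and unital Schwarz maps are contractions), and $(\phth)_{\thin}$ is $\F$-invariant and faithful; so Lemma~\ref{lemm:ergodic} provides a conditional expectation $\condi \in \cchull(\F)$ onto the von Neumann subalgebra $\M_0 := \condi\M$ with $\condi \circ \Gamma = \Gamma \circ \condi = \condi$ for all $\Gamma \in \F$. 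For fixed $A \in \M$ and $\thin$, $T \mapsto \phth(T(A))$ is $\sigma$-weakly continuous on $\cL(\M)$ and equals $\phth(A)$ throughout $\F$, hence throughout $\cchull(\F)$; thus $\phth \circ \condi = \phth$, and since $\condi$ is normal, unital and CP we get $\condi \in \F$.

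Now define $\E_0 := (\M_0 , \Theta , (\phthz)_{\thin})$, with $\phthz$ the restriction of $\phth$ to $\M_0$. The inclusion $\iota \colon \M_0 \hookrightarrow \M$ belongs to $\cpchset{\M_0}{\M}$ and satisfies $\phth \circ \iota = \phthz$; together with $\condi \in \cpchset{\M}{\M_0}$, for which $\phthz \circ \condi = \phth \circ \condi = \phth$, this shows $\E_0 \eqcp \E$. For minimal sufficiency, let $\Gamma_0 \in \sch{\M_0}$ satisfy $\phthz \circ \Gamma_0 = \phthz$ for all $\thin$, and set $\Gamma := \iota \circ \Gamma_0 \circ \condi \in \sch{\M}$. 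A short computation (using $\phthz \circ \Gamma_0 = \phthz$ and $\phth \circ \condi = \phth$) gives $\phth \circ \Gamma = \phth$, so $\Gamma \in \F$ and hence $\condi \circ \Gamma = \condi$. For $B \in \M_0$ we have $\condi(B) = B$, so $\Gamma(B) = \Gamma_0(B) \in \M_0$, whence $\Gamma_0(B) = \condi(\Gamma(B)) = \condi(B) = B$. Therefore $\Gamma_0 = \id_{\M_0}$, i.e.\ $\E_0$ is Schwarz — and a fortiori CP — minimal sufficient, which proves the existence claims in (i) and (ii).

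For uniqueness, let $\E_1 = (\M_1 , \Theta , (\phtho)_{\thin})$ be Schwarz (respectively, CP) minimal sufficient and Schwarz (respectively, CP) equivalent to $\E$. From $\E_0 \eqcp \E$ and transitivity we get $\E_0 \eqsch \E_1$ (respectively, $\E_0 \eqcp \E_1$), so there are Schwarz (respectively, CP) channels $\Lambda \colon \M_0 \to \M_1$ and $\Xi \colon \M_1 \to \M_0$ with $\phthz = \phtho \circ \Lambda$ and $\phtho = \phthz \circ \Xi$ for all $\thin$. Then $\Xi \circ \Lambda \in \sch{\M_0}$ (respectively, $\cpch{\M_0}$) fixes every $\phthz$, so $\Xi \circ \Lambda = \id_{\M_0}$ by minimal sufficiency of $\E_0$; likewise $\Lambda \circ \Xi = \id_{\M_1}$. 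Hence $\Lambda$ is a normal unital Schwarz bijection with Schwarz inverse $\Xi$. Applying the Schwarz inequality to $\Lambda$ and to $\Xi$ and invoking positivity of $\Xi$ forces equality $\Lambda(A^\ast A) = \Lambda(A)^\ast \Lambda(A)$ for every $A \in \M_0$, and polarization then upgrades this to $\Lambda(AB) = \Lambda(A)\Lambda(B)$ for all $A, B$; being unital, bijective and $\ast$-preserving, $\Lambda$ is a normal $\ast$-isomorphism, so $\phthz = \phtho \circ \Lambda$ gives $\E_0 \cong \E_1$. I expect the main obstacles to be this rigidity of Schwarz bijections — it is what forces a genuine \emph{isomorphism} rather than a mere equivalence — and the verification that the ergodic mean $\condi$ lies in $\F$; everything else is bookkeeping with compositions of channels and transitivity of $\cosch$ and $\cocp$.
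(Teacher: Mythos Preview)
Your proof is correct and follows essentially the same route as the paper's: reduce to the faithful case via Lemma~\ref{lemm:faithful}, apply the mean ergodic theorem (Lemma~\ref{lemm:ergodic}) to the semigroup $\F$ of invariant Schwarz channels to obtain the conditional expectation $\condi$ onto $\M_0$, verify $\condi\in\F$ by continuity, and then deduce minimal sufficiency and uniqueness from the absorbing property of $\condi$ together with the Schwarz-inequality squeeze that forces a Schwarz bijection with Schwarz inverse to be multiplicative. The only cosmetic differences are that you use $\condi\circ\Gamma=\condi$ where the paper uses $\Gamma\circ\condi=\condi$ (both are given by Lemma~\ref{lemm:ergodic}), and you spell out the polarization step that the paper leaves implicit.
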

\begin{proof}
We first show the existence part of (i).
According to Lemma~\ref{lemm:faithful},
we may assume that $(\phth)_{\thin}$ is faithful on $\M .$
We define a semigroup of Schwarz channels $\F$ by
\begin{equation*}
	\F
	:=
	\set{
	\Gamma \in \sch{\M}
	|
	\phth \circ \Gamma = \phth
	\,
	(\forall \thin)
	} .
\end{equation*}
Then Lemma~\ref{lemm:ergodic} implies that
there exists a conditional expectation $\condi \in \cchull \F$ onto 
the fixed point von Neumann subalgebra $\condi \M = : \M_0$ such that
$\condi \circ \Gamma = \Gamma \circ \condi = \condi$ for all
$\Gamma \in \F .$
Thus there exists a net 
$(\Gamma_\alpha) \subseteq \mathrm{co} \F = \F$
converging to $\condi$ in the $\sigma$-weak topology.
Then we have
\begin{equation*}
	\braket{ \condi_\ast ( \phth ) , A   }
	=
	\lim_\alpha
	\braket{ \phth , \Gamma_\alpha (A)    }
	=
	\phth (A)
	\quad (\thin,  A \in \M),
\end{equation*}
and therefore $\condi \in \F .$
Let $\phth^{(0)}$ denote the restriction of $\phth$ to $\M_0$
and let $\E_0 := ( \M_0  , \Theta ,   (\phth^{(0)})_{\thin}   ) .$
We now show $\E \eqcp \E_0 .$
Since the identity $\id_{\M_0}$ can be regarded as a channel in $\cpchset{\M_0}{\M} ,$
we have $\E_0 \cocp \E .$
On the other hand, from $\condi \in \F , $ we obtain
$\phth (A) =   \phth \circ \condi (A)  = \phth^{(0)} \circ \condi (A)$
for all $\thin$ and for all $A \in \M .$
Since a conditional expectation is CP, 
we have $\E \cocp \E_0 .$
Thus we have shown $\E \eqcp \E_0 .$
To show the Schwarz minimal sufficiency of $\E_0 ,$ we take a channel
$\Gamma \in \sch{\M_0}$ such that
$\phth^{(0)} \circ \Gamma = \phth^{(0)}$ 
$(\thin) .$
Then we have
$
\phth \circ \Gamma \circ \condi (A)
=
\phth^{(0)} \circ \Gamma (  \condi (A)  )
=
\phth^{(0)} (  \condi (A)  )
=
\phth (A)
$
$(A \in \M ,  \thin) .$
Thus $\Gamma \circ \condi \in \F$
and hence $\Gamma \circ \condi = (\Gamma \circ \condi) \circ \condi = \condi ,$
which implies $\Gamma = \id_{\M_0} . $
Therefore $\E_0$ is Schwarz minimal sufficient.

To show the uniqueness part of (i), 
we take another Schwarz minimal sufficient statistical experiment
$\E_1 = (\M_1 , \Theta , (\phth^{(1)})_{\thin}    )  $
Schwarz equivalent to $\E .$
Since we have $\E_0 \eqsch \E_1 ,$
there exist Schwarz channels 
$\Gamma_0 \in \schset{\M_0}{\M_1}$
and
$\Gamma_1 \in \schset{\M_1}{\M_0}$
such that
$\phth^{(0)} = \phth^{(1)} \circ \Gamma_0$
and
$\phth^{(1)} = \phth^{(0)} \circ \Gamma_1$
for all $\thin .$
Then we have
\begin{equation*}
	\phth^{(0)} = \phth^{(0)} \circ \Gamma_1 \circ \Gamma_0,
	\quad
	\phth^{(1)} = \phth^{(1)} \circ \Gamma_0 \circ \Gamma_1
	\quad
	(\forall \thin),
\end{equation*}
and the Schwarz minimal sufficiency of $\E_0$ and $\E_1$ implies that
$\Gamma_1 \circ \Gamma_0 = \id_{\M_0}$
and
$\Gamma_0 \circ \Gamma_1 = \id_{\M_1} ,$
i.e.\
$\Gamma_0$ and $\Gamma_1$ are bijections with $\Gamma^{-1}_0 = \Gamma_1 .$
Now we show that $\Gamma_0$ is a normal isomorphism from $\M_0$ onto $\M_1 .$
For this it is sufficient to prove 
$\Gamma_0 (A^\ast A ) = \Gamma_0 (A^\ast) \Gamma_0 (A)$
for all $ A \in \M_0 .$
By using the Schwarz inequality we have
\begin{equation*}
	A^\ast A = 
	\Gamma_1 \circ \Gamma_0 (A^\ast A)
	\geq
	\Gamma_1 
	\left(
	\Gamma_0 (A^\ast)  \Gamma_0 (A)
	\right)
	\geq
	\Gamma_1 \circ \Gamma_0 (A^\ast)  
	\Gamma_1 \circ \Gamma_0 (A)
	=
	A^\ast A ,
\end{equation*}
which implies 
$
	\Gamma_1 \circ \Gamma_0 (A^\ast A)
	=
	\Gamma_1 
	\left(
	\Gamma_0 (A^\ast)  \Gamma_0 (A)
	\right) .
$
Thus we obtain
$\Gamma_0 (A^\ast A ) = \Gamma_0 (A^\ast) \Gamma_0 (A) ,$
proving $\E_0 \cong \E_1 .$

The existence part of the claim (ii) is immediate from (i) 
and the uniqueness part 
can be shown in a similar manner as in (i).
\end{proof}

\begin{rem}
\label{rem:main}
The construction of $\M_0$ in the proof of Theorem~\ref{theo:main}
is due to Ref.~\onlinecite{Luczak2014}
(Theorem~1), in which $\M_0$ is shown to be 
an Umegaki minimal sufficient subalgebra with respect to 
$(\phth)_{\thin} .$
Under more restrictive conditions on $\E ,$
a related result for the uniqueness part of our Theorem~\ref{theo:main} 
is obtained in Ref.~\onlinecite{gutajencova2007}
(Corollary~3.4)
by using the theory of Connes' cocycles.~\cite{takesakivol2}
\end{rem}

We can now show the equivalence of 
the two coarse-graining equivalence relations 
as in the following corollary.
\begin{coro}
\label{coro:coarse}
Let $\E_1$ and $\E_2$ be statistical experiments
with a common parameter set.
Then 
$\E_1 \eqsch \E_2$
if and only if
$\E_1 \eqcp \E_2 .$
\end{coro}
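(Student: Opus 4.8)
The plan is to deduce Corollary~\ref{coro:coarse} from Theorem~\ref{theo:main} by passing through minimal sufficient representatives. One direction is trivial: since every CP channel is a Schwarz channel, $\E_1 \cocp \E_2$ implies $\E_1 \cosch \E_2$, so $\E_1 \eqcp \E_2$ immediately gives $\E_1 \eqsch \E_2$. The content is the converse.

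Assume $\E_1 \eqsch \E_2$. First I would invoke Theorem~\ref{theo:main}(i) to obtain Schwarz minimal sufficient statistical experiments $\E_1^{(0)}$ and $\E_2^{(0)}$ that are CP equivalent to $\E_1$ and $\E_2$ respectively. In particular $\E_i^{(0)} \eqsch \E_i$ for $i=1,2$ (CP equivalence implies Schwarz equivalence), and by transitivity of $\eqsch$ we get $\E_1^{(0)} \eqsch \E_2^{(0)}$. Now both $\E_1^{(0)}$ and $\E_2^{(0)}$ are Schwarz minimal sufficient, so the uniqueness clause of Theorem~\ref{theo:main}(i) applies: $\E_1^{(0)} \cong \E_2^{(0)}$. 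Since an isomorphism is in particular a CP channel in both directions, $\E_1^{(0)} \cong \E_2^{(0)}$ implies $\E_1^{(0)} \eqcp \E_2^{(0)}$. Chaining the CP equivalences $\E_1 \eqcp \E_1^{(0)} \eqcp \E_2^{(0)} \eqcp \E_2$ and using transitivity of $\eqcp$ yields $\E_1 \eqcp \E_2$, as desired.

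There is no real obstacle here; the corollary is essentially a bookkeeping consequence of the existence-and-uniqueness theorem. The only point requiring a moment's care is the implication ``$\cong$ implies $\eqcp$'': one must note that a normal isomorphism $\pi$ and its inverse $\pi^{-1}$ are unital normal $\ast$-isomorphisms, hence completely positive (indeed CP channels), so they realize both $\E_1^{(0)} \cocp \E_2^{(0)}$ and $\E_2^{(0)} \cocp \E_1^{(0)}$. Everything else is transitivity of the preorders and equivalence relations, already noted after the definitions. I would write the proof in three or four lines, referencing Theorem~\ref{theo:main}(i) for both existence and uniqueness of the Schwarz minimal sufficient representatives.
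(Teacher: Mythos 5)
Your proof is correct and follows exactly the same route as the paper: pass to Schwarz minimal sufficient representatives via Theorem~\ref{theo:main}~(i), apply its uniqueness clause to get an isomorphism, and chain the CP equivalences. No gaps; the remark that a normal isomorphism is a CP channel in both directions is the same (implicitly used) observation the paper relies on.
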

\begin{proof}
Assume $\E_1 \eqsch \E_2 .$
Then according to Theorem~\ref{theo:main}~(i) there exist Schwarz minimal sufficient statistical experiments
$\tilde{\E}_1$ and $\tilde{\E}_2$ satisfying 
$\E_1 \eqcp \tilde{\E}_1$
and
$\E_2 \eqcp \tilde{\E}_2 .$
Thus we have 
$\tilde{\E_1} \eqsch \tilde{\E_2}$
and the uniqueness part of Theorem~\ref{theo:main}~(i)
implies $\tilde{\E_1} \cong \tilde{\E_2} .$
Therefore we obtain
$
\E_1 \eqcp 
\tilde{\E}_1
\cong
\tilde{\E}_2
\eqcp
\E_2,
$ 
which implies $\E_1 \eqcp \E_2 .$
The converse is evident.
\end{proof}

The following theorem gives equivalent conditions of the 
minimal sufficiency for statistical experiment.
\begin{theo}
\label{theo:equiv}
Let $\E = (\M , \Theta , (\phth)_{\thin})$
be a statistical experiment.
Then the following conditions are equivalent.
\begin{enumerate}[(i)]
\item
$\E$ is Schwarz minimal sufficient.
\item
$\E$ is CP minimal sufficient.
\item
$(\phth)_{\thin}$
is faithful
and $\M$ is a Schwarz minimal sufficient subalgebra 
with respect to $(\phth)_{\thin} .$
\item
$(\phth)_{\thin}$
is faithful
and $\M$ is a CP minimal sufficient subalgebra 
with respect to $(\phth)_{\thin}.$
\item
$(\phth)_{\thin}$
is faithful
and $\M$ is an Umegaki minimal sufficient subalgebra 
with respect to $(\phth)_{\thin} .$
\end{enumerate}
\end{theo}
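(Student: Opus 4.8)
The plan is to run through the cycle of implications
(v)~$\implies$~(i)~$\implies$~(iii)~$\implies$~(iv)~$\implies$~(v), supplemented by the
trivial (i)~$\implies$~(ii) (every CP channel is a Schwarz channel, so the defining
requirement in Definition~\ref{defi:experiment} is weaker for the CP case) and by
(ii)~$\implies$~(iv); together these make all five conditions equivalent. The
implications (iii)~$\implies$~(iv)~$\implies$~(v) are pure bookkeeping. The key
observation is that $\id_\M$ is simultaneously a Schwarz channel, a CP channel, and a
conditional expectation onto $\M$, so $\M$ is always trivially Schwarz/CP/Umegaki
sufficient; hence ``$\M$ is a Schwarz (resp.\ CP, Umegaki) minimal sufficient
subalgebra'' simply means ``$\M$ admits no \emph{proper} Schwarz (resp.\ CP, Umegaki)
sufficient subalgebra''. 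By the chain~\eqref{eq:sufimp}, a proper CP sufficient
subalgebra would be a proper Schwarz sufficient one, and a proper Umegaki sufficient
subalgebra would be a proper CP sufficient one, so ``no proper Schwarz sufficient
subalgebra'' $\implies$ ``no proper CP sufficient subalgebra'' $\implies$ ``no proper
Umegaki sufficient subalgebra''; faithfulness is carried along verbatim.

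Next I would treat (i)~$\implies$~(iii) (and, verbatim with ``Schwarz'' replaced by
``CP'', (ii)~$\implies$~(iv)). Faithfulness first: if $P:=\bigvee_{\thin}\s(\phth)\neq
\1_\M$, then the composite of the two channels $\Lambda$ and $\Gamma$ from the proof of
Lemma~\ref{lemm:faithful}, namely $A\mapsto PAP+\phi(PAP)(\1_\M-P)$, lies in
$\cpch{\M}$ and satisfies $\phth\circ(\Gamma\circ\Lambda)=\phth$ for all $\thin$, yet it
sends $\1_\M-P\neq 0$ to $0$ and so is not $\id_\M$; this contradicts the Schwarz (a
fortiori CP) minimal sufficiency of $\E$, so $(\phth)_{\thin}$ is faithful. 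For the
subalgebra part, suppose $\M_1\subsetneq\M$ were a Schwarz sufficient subalgebra with
witness $\Gamma_1\in\schset{\M}{\M_1}$. Composing with the inclusion
$\M_1\hookrightarrow\M$, which is a normal unital $\ast$-homomorphism and hence a
Schwarz channel, yields $\Gamma\in\sch{\M}$ with $\phth\circ\Gamma=\phth$ for all
$\thin$ but with range contained in $\M_1\neq\M$, so $\Gamma\neq\id_\M$ --- contradicting
(i). Thus no proper Schwarz sufficient subalgebra exists, which gives (iii); the CP
version is identical.

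The crux is (v)~$\implies$~(i), and here the mean ergodic theorem does the real work.
Assume $(\phth)_{\thin}$ is faithful and $\M$ is an Umegaki minimal sufficient
subalgebra, and take any $\Gamma\in\sch{\M}$ with $\phth\circ\Gamma=\phth$ for all
$\thin$. The set $\F:=\set{\Gamma^n | n\geq 1}$ is a semigroup of normal unital Schwarz
maps --- in particular of normal Schwarz contractions --- and the faithful family
$(\phth)_{\thin}$ is invariant under every element of $\F$. Lemma~\ref{lemm:ergodic}
then yields a conditional expectation $\condi\in\cchull(\F)$ onto $\M_0:=\condi\M$ with
$\Gamma\circ\condi=\condi\circ\Gamma=\condi$. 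Choosing a net in $\chull(\F)$ converging
$\sigma$-weakly to $\condi$, each member of which is a convex combination of the
$\phth$-preserving maps $\Gamma^n$, shows $\phth\circ\condi=\phth$ for all $\thin$,
exactly as in the proof of Theorem~\ref{theo:main}. Hence $\M_0$ is an Umegaki
sufficient subalgebra, so by hypothesis $\M_0=\M$, i.e.\ $\condi=\id_\M$; substituting
into $\Gamma\circ\condi=\condi$ gives $\Gamma=\id_\M$, so $\E$ is Schwarz minimal
sufficient. The hard part is precisely this last implication: one must bootstrap from
the \emph{weakest} subalgebra hypothesis (Umegaki minimal sufficiency) all the way up to
the \emph{strongest} experiment conclusion (Schwarz minimal sufficiency), and the only
bridge is that the ergodic average supplied by Lemma~\ref{lemm:ergodic} is automatically
a genuine normal, state-preserving conditional expectation; everything else is routine
manipulation of channels and of the implications~\eqref{eq:sufimp}.
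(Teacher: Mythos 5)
Your proposal is correct; every implication you use is sound, and together they do close the equivalence. The engine is the same as in the paper --- the K\"ummerer--Nagel mean ergodic theorem (Lemma~\ref{lemm:ergodic}) --- but your routing differs in three places worth noting. First, the paper proves (ii)$\implies$(i) directly by invoking both the existence and the uniqueness parts of Theorem~\ref{theo:main}: it produces a Schwarz minimal sufficient $\E_0$ CP equivalent to $\E$, observes that $\E_0$ is then also CP minimal sufficient, and concludes $\E\cong\E_0$ from uniqueness. You instead reach (ii)$\implies$(i) around the long way through (iv) and (v), which avoids the uniqueness theorem entirely. Second, for (v)$\implies$(i) the paper simply cites the construction inside the proof of Theorem~\ref{theo:main}, where the ergodic theorem is applied to the full semigroup of all $\phth$-preserving Schwarz channels; you instead apply Lemma~\ref{lemm:ergodic} afresh to the cyclic semigroup $\{\Gamma^n\}_{n\ge1}$ generated by the single given channel, obtaining a $\phth$-preserving conditional expectation $\condi$ onto its fixed-point algebra and forcing $\condi=\id_\M$, hence $\Gamma=\Gamma\circ\condi=\condi=\id_\M$. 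This is more self-contained and isolates exactly what the ergodic theorem is needed for. Third, the paper disposes of faithfulness in (i)$\implies$(iii) by referring back to the construction in Theorem~\ref{theo:main}, whereas you give the direct argument with the explicit channel $A\mapsto PAP+\phi(PAP)(\1_\M-P)$ from Lemma~\ref{lemm:faithful}, which is cleaner and checks out (it is normal, unital, CP, preserves every $\phth$, and kills $\1_\M-P$). The trade-off is that your proof is slightly longer but depends only on Lemmas~\ref{lemm:ergodic} and~\ref{lemm:faithful}, while the paper's is shorter by leaning on the already-established existence and uniqueness machinery of Theorem~\ref{theo:main}.
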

\begin{proof}
The implications 
(i)$\implies$(ii)
and 
(iii)$\implies$(iv)$\implies$(v)
are evident from the definitions.

(ii)$\implies$(i).
Assume that $\E$ is CP minimal sufficient.
Then Theorem~\ref{theo:main}~(i) implies that there exists a Schwarz minimal sufficient 
statistical experiment $\E_0$ CP equivalent to $\E .$
Since $\E_0$ is also CP minimal sufficient, the uniqueness part of
Theorem~\ref{theo:main}~(ii)
implies $\E \cong \E_0 .$ 
Therefore $\E$ is Schwarz minimal sufficient.

(i)$\implies$(iii).
Assume (i) and let $\M_1$ be an arbitrary Schwarz sufficient 
subalgebra of $\M$ with respect to $(\phth)_{\thin} .$
Then there exists a channel 
$
\Gamma \in \schset{\M}{\M_1} 
\subseteq \sch{\M}
$
such that 
$\phth \circ \Gamma = \phth$
for all $\thin .$
Therefore the Schwarz minimal sufficiency of $\E$ implies 
$\Gamma = \id_{\M}.$
Thus we have $\M_1 = \M$
and $\M$ is a Schwarz minimal sufficient subalgebra.
The faithfulness of $(\phth)_{\thin}$
follows from the construction of a minimal sufficient statistical 
experiment
given in Theorem~\ref{theo:main}.

(v)$\implies$(i).
Assume (v).
Then from the proof of Theorem~\ref{theo:main}
there exists an Umegaki sufficient subalgebra $\M_0$ of $\M$
such that
$
\E_0 =  
(\M_0 , \Theta , (\phth^{(0)})_{\thin})
$
is a Schwarz minimal sufficient statistical experiment
CP equivalent to $\E ,$
where $\phth^{(0)}$ is the restriction of $\phth$ to $\M_0 .$
From the condition~(v) we should have $\M=\M_0$
and therefore $\E = \E_0$ is Schwarz minimal sufficient.
\end{proof}
Thanks to Theorem~\ref{theo:equiv} the Schwarz and CP minimal sufficiency conditions coincide.
From now on we shall call a statistical experiment minimal sufficient
not specifying Schwarz or CP.

For the minimal sufficiency conditions for subalgebra,
we obtain the following theorem.
\begin{theo}
\label{theo:msalg}
Let $\E = (\M , \Theta , (\phth)_{\thin})$
be a statistical experiment,
let 
$\M_1$ be a von Neumann subalgebra of $\M ,$
and let $\phtho$ denote the restriction of $\phth$ to
$\M_1 .$
Suppose that $(\phth)_{\thin}$ is faithful on $\M .$
Then the following conditions are equivalent.
\begin{enumerate}[(i)]
\item
$\M_1$ is a Schwarz minimal sufficient subalgebra with respect to $(\phth)_{\thin} .$
\item
$\M_1$ is a CP minimal sufficient subalgebra with respect to $(\phth)_{\thin} .$
\item
$\M_1$ is an Umegaki minimal sufficient subalgebra with respect to $(\phth)_{\thin} .$
\item
$\M_1$ is a Schwarz sufficient subalgebra with respect to $(\phth)_{\thin}$
and the statistical experiment
$\E_1 = (\M_1 ,  \Theta , (\phtho)_{\thin} )$ is minimal sufficient.
\item
$\M_1$ is a CP sufficient subalgebra with respect to $(\phth)_{\thin}$
and the statistical experiment
$\E_1 = (\M_1 ,  \Theta , (\phtho)_{\thin} )$ is minimal sufficient.
\item
$\M_1$ is an Umegaki sufficient subalgebra with respect to $(\phth)_{\thin}$
and the statistical experiment
$\E_1 = (\M_1 ,  \Theta , (\phtho)_{\thin} )$ is minimal sufficient.
\end{enumerate}
\end{theo}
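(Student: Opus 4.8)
The plan is to show that each of the six conditions is equivalent to the single equality $\M_1 = \M_0$, where $\M_0$ is the von Neumann subalgebra produced by the proof of Theorem~\ref{theo:main} applied to $\E$; this application is legitimate because $(\phth)_{\thin}$ is assumed faithful. Recall from that proof that $\M_0 = \condi \M$ for a conditional expectation $\condi$ lying in the semigroup $\F := \set{\Gamma \in \sch{\M} | \phth \circ \Gamma = \phth \, (\forall \thin)}$, that $\condi \circ \Gamma = \Gamma \circ \condi = \condi$ for every $\Gamma \in \F$, and that $\E_0 := (\M_0 , \Theta , (\phth|_{\M_0})_{\thin})$ is minimal sufficient; in particular $\M_0$ is Umegaki sufficient with respect to $(\phth)_{\thin}$ via $\condi$.

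The first step I would carry out is the containment claim: $\M_0$ is contained in \emph{every} Schwarz sufficient subalgebra of $\M$. Indeed, if $\M_2$ is such a subalgebra, pick $\Gamma \in \schset{\M}{\M_2} \subseteq \F$; then $\Gamma \circ \condi = \condi$ gives $\M_0 = \condi \M = \Gamma (\condi \M) \subseteq \Gamma (\M) \subseteq \M_2$. Since $\M_0$ is Umegaki (hence CP, hence Schwarz) sufficient and is contained in every Schwarz (hence every CP, hence every Umegaki) sufficient subalgebra, it is simultaneously the Schwarz, the CP, and the Umegaki minimal sufficient subalgebra with respect to $(\phth)_{\thin}$. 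Conversely, a subalgebra that is minimal sufficient in any one of the three senses is Schwarz sufficient, so it contains $\M_0$, and it is contained in $\M_0$ because $\M_0$ is sufficient in that sense; hence it equals $\M_0$. Thus conditions (i), (ii), and (iii) are each equivalent to $\M_1 = \M_0$.

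For the remaining conditions, the implications (vi)$\implies$(v) and (v)$\implies$(iv) are immediate from~\eqref{eq:sufimp}, and $\M_1 = \M_0$ implies (vi) because then $\M_1$ is Umegaki sufficient via $\condi$ and $\E_1 = \E_0$ is minimal sufficient by Theorem~\ref{theo:main}. The crux is the implication (iv)$\implies \M_1 = \M_0$. Assuming (iv), the containment claim gives $\M_0 \subseteq \M_1$, so $\condi|_{\M_1}$ is a normal contraction of $\M_1$ onto $\M_0 \subseteq \M_1$ fixing $\M_0$ pointwise, i.e.\ a conditional expectation, and hence a channel in $\cpch{\M_1}$; moreover $\phtho \circ \condi|_{\M_1} = \phtho$ for all $\thin$ since $\condi \in \F$. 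Because $\E_1$ is minimal sufficient, equivalently CP minimal sufficient by Theorem~\ref{theo:equiv}, this forces $\condi|_{\M_1} = \id_{\M_1}$, whence $\M_1 = \condi (\M_1) \subseteq \M_0$ and therefore $\M_1 = \M_0$. Combining everything, all six conditions are equivalent to $\M_1 = \M_0$, and hence to each other. I expect this last implication to be the main obstacle; the delicate point is that $\M_0$ must be used in two roles at once — as the smallest Schwarz sufficient subalgebra, to ensure $\M_0 \subseteq \M_1$ so that $\condi|_{\M_1}$ maps into $\M_1$, and as the outcome algebra of a minimal sufficient experiment — before the minimal sufficiency of $\E_1$ can be invoked.
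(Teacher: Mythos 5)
Your proposal is correct and follows essentially the same route as the paper: it constructs $\M_0 = \condi\M$ from the proof of Theorem~\ref{theo:main}, shows $\M_0$ is contained in every Schwarz sufficient subalgebra via $\Gamma\circ\condi=\condi$, and closes the cycle by applying the minimal sufficiency of $\E_1$ to $\condi|_{\M_1}$ (the paper phrases this last step through Theorem~\ref{theo:equiv}~(v), but the content is identical). Organizing all six conditions around the single equality $\M_1=\M_0$ is a clean presentation of the same argument.
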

\begin{proof}
Let 
$\F ,$
$\M_0 \subseteq \M,$ 
$\vph^{(0)}_{\theta},$
and $\condi$
be the same as in the proof of Theorem~\ref{theo:main},
in which we have shown that
$\M_0$ is an Umegaki sufficient subalgebra of $\M$ with respect to 
$(\phth)_{\thin}$
and that
$\E_0 =  (\M_0 , \Theta , (\phth^{(0)}   )_{\thin})$
is a Schwarz minimal sufficient statistical experiment.
First we prove that $\M_0$ is a minimal sufficient subalgebra 
with respect to $(\phth)_{\thin}$
in the sense of Schwarz, CP, and Umegaki.
Let $\M_2$ be a Schwarz sufficient subalgebra of $\M$
with respect to $(\phth)_{\thin}$
and let $\Gamma \in \schset{\M}{\M_2} \subseteq \sch{\M}$ be a channel
satisfying
$\phth \circ \Gamma = \phth$
for all $\thin .$
Then we have $\Gamma \in \F$ and hence 
$\Gamma \circ \condi = \condi .$
From this we obtain $\M_0 \subseteq \M_2$
and therefore $\M_0$ is a Schwarz minimal sufficient subalgebra.
Since $\M_0$ is an Umegaki sufficient subalgebra, 
this shows that $\M_0$ is also minimal sufficient in the sense 
of CP and Umegaki.

(i)$\implies$(vi).
Assume (i).
Since $\M_0$ and $\M_1$ are both Schwarz minimal sufficient,
we have $\M_1 = \M_0 .$
Therefore 
$\M_1 =\M_0$
is an Umegaki sufficient subalgebra and 
$\E_1 = 
(\M_1 , \Theta , (\phth^{(1)}   )_{\thin})
=
(\M_0 , \Theta , (\phth^{(0)}   )_{\thin})$
is a minimal sufficient statistical experiment.

Similar proofs apply to the implications 
(ii)$\implies$(vi)
and
(iii)$\implies$(vi).

The implications (vi)$\implies$(v)$\implies$(iv)
are immediate from~\eqref{eq:sufimp}.

(iv)$\implies$(i), (ii), and (iii).
Assume (iv).
Since $\M_0 $ is minimal sufficient in the sense of Schwarz and Umegaki
with respect to $(\phth)_{\thin}$,
$\M_0$ is an Umegaki sufficient subalgebra of $\M_1$
with respect to $(\phtho)_{\thin} .$ 
Then from the assumption~(iv) and Theorem~\ref{theo:equiv}~(v)
we obtain
$\M_1 = \M_0.$ 
Thus $\M_1 = \M_0$ is a minimal sufficient subalgebra
in the sense of Schwarz, CP, and Umegaki.
\end{proof}

Now we consider finite dimensional case, which reduces to 
the decomposition theorem by Koashi and Imoto.~\cite{PhysRevA.66.022318}
\begin{exam}\label{exam:finite}
Let $\E = (\LH ,\Theta , (\rho_\theta)_{\thin})$ be a statistical experiment
with $\cH$ finite dimensional.
As mentioned in Sec.~\ref{sec:prel},
we regard $\rho_\theta$ as a density operator on $\cH .$
For simplicity, we assume that $(\rho_\theta)_{\thin}$ is faithful on $\LH .$
Let $\M_0$ be the minimal sufficient subalgebra
of $\LH$
with respect to $(\rho_\theta)_{\thin}$
and let $\condi$ be the conditional expectation from $\LH$
onto $\M_0$ satisfying $\condi_\ast (\rho_\theta) = \rho_\theta$
for all $\thin .$
As shown in Ref.~\onlinecite{Hayden2004} (Appendix~A),
$\cH ,$ $\M_0 ,$ $\condi ,$ and $\rho_\theta$
are decomposed as follows:
\begin{gather}
	\cH
	=
	\bigoplus_\alpha
	\cH_\alpha
	\otimes 
	\cK_\alpha ,
	\notag
	\\
	\M_0
	=
	\bigoplus_\alpha
	\cL (\cH_\alpha)
	\otimes
	\1_{\cK_\alpha} ,
	\notag
	\\
	\condi (A)
	=
	\bigoplus_\alpha
	\tr_{\cK_\alpha} \left[P_\alpha A P_\alpha 
	(\1_{\cH_\alpha} \otimes \omega_\alpha)  \right]
	\otimes
	\1_{\cK_\alpha}
	\quad (A \in \LH) ,
	\notag
	\\
	\rho_\theta 
	=
	\bigoplus_\alpha
	q_{\alpha , \theta }
	\rho_{\alpha , \theta}
	\otimes
	\omega_\alpha ,
	\notag
\end{gather}
where $\cH_\alpha$ and $\cK_\alpha$ are Hilbert spaces,
$P_\alpha$ is the orthogonal projection onto 
$\cH_\alpha \otimes \cK_\alpha ,$
$\omega_\alpha \in \stsp{ \cK_\alpha } ,$
$\tr_{\cK_\alpha} [\cdot ]$ denotes the partial trace over $\cK_\alpha ,$
$ q_{\alpha , \theta}$ is a discrete probability distribution over $\alpha ,$
and $\rho_{\alpha , \theta} \in \stsp{\cH_\alpha} .$
This decomposition further satisfies the following:
for any $\Gamma \in \cpch{\LH}$ 
satisfying
$\Gamma_\ast (\rho_\theta ) = \rho_\theta$
for all $\thin ,$
$\Gamma \rvert_{\cL  (\cH_\alpha \otimes \cK_\alpha)} = 
\id_{\cL (\cH_\alpha)} \otimes \Gamma_\alpha 
$
for all $\alpha$
where $\Gamma_\alpha \in \cpch{\cL (\cK_\alpha)}$ is a channel
satisfying
$\Gamma_{\alpha \ast} (\omega_\alpha) = \omega_\alpha .$
The existence of such decomposition of 
$\cH$ and $\rho$ satisfying this condition
is first proved by Koashi and Imoto.~\cite{PhysRevA.66.022318}
Later another operator algebraic proof analogous to ours
is obtained in Ref.~\onlinecite{Hayden2004},
in which,
due to the finite dimensionality, 
the conditional expectation $\condi$
is constructed by using a weaker version of 
mean ergodic theorem (Lemma~11).
In this sense, our results in this section can be considered as
a generalization of
the Koashi-Imoto decomposition in more general operator algebraic settings.
\end{exam}

\subsection{Minimal sufficient channel} 
\label{subsec:msch}
Now we apply the general theory developed in 
Subsection~\ref{subsec:msse}
to the concatenation relation for CP channels.

\begin{defi}
\label{def:ppequiv}
Let $\M_1,$ $\M_2$ and $\Min$ be von Neumann algebras
and let
$\Lambda_1 \in \cpchset{\M_1}{\Min}$
and
$\Lambda_2 \in \cpchset{\M_2}{\Min}$
be CP channels with the common input space $\Min .$
\begin{enumerate}
\item
$\Lambda_1$ is a \emph{concatenation},  
or \emph{coarse-graining},
of $\Lambda_2 ,$
written $\Lambda_1 \cocp \Lambda_2 ,$
if there exists a channel $\Gamma \in \cpchset{\M_1}{\M_2}$
such that $\Lambda_1 = \Lambda_2 \circ \Gamma .$
\item
$\Lambda_1$ and $\Lambda_2$ are said to be \emph{concatenation equivalent},
written 
$\Lambda_1 \eqcp \Lambda_2 ,$
if both
$\Lambda_1 \cocp \Lambda_2$
and
$\Lambda_2 \cocp \Lambda_1$
hold.
\item
$\Lambda_1$ and $\Lambda_2$ are said to be \emph{isomorphic},
written 
$\Lambda_1 \cong \Lambda_2 ,$
if there exists a normal isomorphism 
$\pi$ from $\M_1$ onto $\M_2$ such that
$\Lambda_1 = \Lambda_2 \circ \pi .$
\end{enumerate}
\end{defi}

\begin{defi}
\label{def:mschannel}
Let $\M$ and $\Min$ be von Neumann algebras.
Then a channel $\Lambda \in \cpchset{\M}{\Min} $ is called 
\emph{minimal sufficient}
if 
$\Lambda \circ \Gamma = \Lambda$
implies 
$\Gamma = \mathrm{id}_{\M} $
for any $\Gamma \in \cpch{\M} .$
\end{defi}

For each channel $\Lambda \in \cpchset{\M}{\Min} ,$
we define by
\[
\E_{\Lambda} :=  (\M , \SMin ,   (\Lambda_{\ast} (\varphi))_{\varphi \in \SMin} )  
\]
the statistical experiment associated with $\Lambda .$
The concepts in Definitions~\ref{def:ppequiv} and \ref{def:mschannel} 
can be rephrased in terms of the associated statistical experiments
as follows.

\begin{prop}
\label{prop:chex}
Let $\M_1 , $ $\M_2$ and $\Min$ be von Neumann algebras
and
let 
$\Lambda_1 \in \cpchset{\M_1}{\Min}$
and 
$\Lambda_2 \in \cpchset{\M_2}{\Min}$
be channels.
Then we have the following.
\begin{enumerate}[(i)]
\item
$\Lambda_1 \cocp \Lambda_2$ if and only if 
$\E_{\Lambda_1} \cocp \E_{\Lambda_2} .$
\item
$\Lambda_1 \eqcp \Lambda_2$ if and only if 
$\E_{\Lambda_1} \eqcp \E_{\Lambda_2} .$
\item
$\Lambda_1 \cong \Lambda_2$
if and only if
$\E_{\Lambda_1} \cong \E_{\Lambda_2} .$
\item
$\Lambda_1$ is minimal sufficient if and only if $\E_{\Lambda_1}$ is minimal sufficient.
\end{enumerate}
\end{prop}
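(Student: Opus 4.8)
The plan is to prove each equivalence by directly unpacking the definitions and exploiting the fact that the map $\Lambda \mapsto \E_\Lambda$ translates composition of channels into coarse-graining of the associated experiments, together with the elementary observation that a CP channel $\Gamma \colon \M_1 \to \M_2$ satisfies $\Lambda_1 = \Lambda_2 \circ \Gamma$ in $\cpchset{\M_1}{\Min}$ if and only if $\Lambda_{1\ast}(\vph) = \Lambda_{2\ast}(\Gamma_\ast(\vph))$ for all $\vph \in \SMin$, i.e.\ $\phth^{(1)} = \phth^{(2)} \circ \Gamma$ with $\phth^{(j)} := \Lambda_{j\ast}(\vph)$ ranging over $\vph \in \SMin$. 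The only slightly delicate point is that equality of the states $\Lambda_{1\ast}(\vph)$ and $\Lambda_{2\ast}(\Gamma_\ast(\vph))$ on \emph{all} normal states $\vph \in \SMin$ is genuinely equivalent to equality of the channels $\Lambda_1$ and $\Lambda_2 \circ \Gamma$ as maps $\M_1 \to \Min$; this follows because the normal states span $\Min_\ast$ and a normal functional on $\M_1$ is determined by its values, so I would record this as a one-line lemma at the start.

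For (i): if $\Lambda_1 \cocp \Lambda_2$ via $\Gamma \in \cpchset{\M_1}{\M_2}$, then for every $\vph \in \SMin$ we have $\Lambda_{1\ast}(\vph) = (\Lambda_2 \circ \Gamma)_\ast(\vph) = \Lambda_{2\ast}(\vph) \circ \Gamma$, so the same $\Gamma$ witnesses $\E_{\Lambda_1} \cocp \E_{\Lambda_2}$ (note both experiments have the common parameter set $\SMin$). Conversely, if $\E_{\Lambda_1} \cocp \E_{\Lambda_2}$ via some $\Gamma \in \cpchset{\M_1}{\M_2}$, then $\Lambda_{1\ast}(\vph) = \Lambda_{2\ast}(\vph) \circ \Gamma$ for all $\vph \in \SMin$, which by the preliminary lemma gives $\Lambda_1 = \Lambda_2 \circ \Gamma$, i.e.\ $\Lambda_1 \cocp \Lambda_2$. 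Part (ii) is then immediate by applying (i) in both directions. Part (iii) is the same argument with $\Gamma$ replaced by a normal isomorphism $\pi$: $\Lambda_1 = \Lambda_2 \circ \pi$ iff $\Lambda_{1\ast}(\vph) = \Lambda_{2\ast}(\vph) \circ \pi$ for all $\vph \in \SMin$ iff $\phth^{(1)} = \phth^{(2)} \circ \pi$ for all parameters, which is exactly $\E_{\Lambda_1} \cong \E_{\Lambda_2}$.

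For (iv): by definition $\Lambda_1$ is minimal sufficient iff for every $\Gamma \in \cpch{\M_1}$, $\Lambda_1 \circ \Gamma = \Lambda_1$ implies $\Gamma = \id_{\M_1}$; and $\E_{\Lambda_1}$ is minimal sufficient (in the CP sense, which by Theorem~\ref{theo:equiv} is the same as Schwarz) iff for every $\Gamma \in \cpch{\M_1}$, $\Lambda_{1\ast}(\vph) \circ \Gamma = \Lambda_{1\ast}(\vph)$ for all $\vph \in \SMin$ implies $\Gamma = \id_{\M_1}$. Applying the preliminary lemma with $\M_2 = \M_1$ and $\Lambda_2 = \Lambda_1$, the hypothesis $\Lambda_1 \circ \Gamma = \Lambda_1$ is equivalent to the hypothesis $\Lambda_{1\ast}(\vph) \circ \Gamma = \Lambda_{1\ast}(\vph)$ for all $\vph \in \SMin$, so the two minimal sufficiency conditions have literally the same premise and the same conclusion, proving the equivalence. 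I do not expect a real obstacle here; the only thing to be careful about is keeping straight which experiment carries which parameter set and checking that the CP channel $\Gamma$ used to witness coarse-graining of experiments can be taken to be the \emph{same} map that witnesses concatenation — which it is, since nothing in the translation changes $\Gamma$.
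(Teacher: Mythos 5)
Your proof is correct and matches the paper's intent: the paper states this proposition without proof, treating it as an immediate unpacking of the definitions, which is exactly what you do. The one point you rightly isolate --- that $\Lambda_1 = \Lambda_2\circ\Gamma$ as maps is equivalent to $\vph\circ\Lambda_1 = \vph\circ\Lambda_2\circ\Gamma$ for all $\vph\in\SMin$, because the normal states span the predual $\Minast$, which separates the points of $\Min$ --- is the only content, and your handling of (iv), including the appeal to Theorem~\ref{theo:equiv} to reconcile the Schwarz and CP versions of minimal sufficiency for the experiment with the purely CP definition for the channel, is sound.
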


Applications of Theorem~\ref{theo:main} and 
Corollary~\ref{coro:coarse} to CP channels 
immediately give the following corollaries.

\begin{coro}
\label{coro:main}
Let $\M $ and $\Min$ be von Neumann algebras
and let 
$\Lambda \in \cpchset{\M}{\Min} $ be a channel.
Then there exists a minimal sufficient CP channel $\Lambda_0$
concatenation equivalent to $\Lambda .$
Furthermore such $\Lambda_0 $ is unique up to isomorphism.
\end{coro}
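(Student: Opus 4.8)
The plan is to reduce everything to the corresponding statements for statistical experiments, which are already established, via the dictionary provided by Proposition~\ref{prop:chex}. First I would pass from the channel $\Lambda \in \cpchset{\M}{\Min}$ to its associated statistical experiment $\E_\Lambda = (\M , \SMin , (\Lambda_\ast(\varphi))_{\varphi \in \SMin})$. By Theorem~\ref{theo:main}~(ii) there exists a CP (hence minimal sufficient, by Theorem~\ref{theo:equiv}) statistical experiment $\E_0$ CP equivalent to $\E_\Lambda$; inspecting the construction in the proof of Theorem~\ref{theo:main}, $\E_0 = (\M_0 , \SMin , (\phi^{(0)}_\varphi)_{\varphi \in \SMin})$ with $\M_0 = \condi\M \subseteq \M$ for a conditional expectation $\condi \in \F$, and the states $\phi^{(0)}_\varphi$ are the restrictions of $\Lambda_\ast(\varphi)$ to $\M_0$, i.e.\ $\phi^{(0)}_\varphi = \Lambda_\ast(\varphi) \circ \iota = \varphi \circ \Lambda \circ \iota$ where $\iota \colon \M_0 \hookrightarrow \M$ is the inclusion. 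Hence $\E_0 = \E_{\Lambda_0}$ for the channel $\Lambda_0 := \Lambda \circ \iota \in \cpchset{\M_0}{\Min}$.

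Next I would transport the properties of $\E_0$ back to $\Lambda_0$. Since $\E_0$ is minimal sufficient, Proposition~\ref{prop:chex}~(iv) gives that $\Lambda_0$ is a minimal sufficient CP channel. Since $\E_{\Lambda_0} = \E_0 \eqcp \E_\Lambda = \E_{\Lambda_1}$ (with $\Lambda_1 := \Lambda$), Proposition~\ref{prop:chex}~(ii) gives $\Lambda_0 \eqcp \Lambda$. This proves the existence part.

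For uniqueness, suppose $\Lambda_0' \in \cpchset{\M_0'}{\Min}$ is another minimal sufficient CP channel with $\Lambda_0' \eqcp \Lambda$. By Proposition~\ref{prop:chex}~(iv) the experiment $\E_{\Lambda_0'}$ is minimal sufficient, and by Proposition~\ref{prop:chex}~(ii) we have $\E_{\Lambda_0'} \eqcp \E_\Lambda \eqcp \E_{\Lambda_0}$. Thus $\E_{\Lambda_0}$ and $\E_{\Lambda_0'}$ are two minimal sufficient statistical experiments that are CP equivalent, so the uniqueness part of Theorem~\ref{theo:main}~(ii) yields $\E_{\Lambda_0} \cong \E_{\Lambda_0'}$. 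Finally Proposition~\ref{prop:chex}~(iii) translates this into $\Lambda_0 \cong \Lambda_0'$, completing the proof.

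The only genuine subtlety — and the step I would be most careful about — is the first one: identifying the minimal sufficient experiment produced by Theorem~\ref{theo:main} as $\E_{\Lambda_0}$ for an honest CP channel $\Lambda_0$ with the \emph{same} input space $\Min$. This requires noting that the outcome space of $\E_0$ is a subalgebra $\M_0$ of $\M$, that the inclusion $\M_0 \hookrightarrow \M$ is a CP channel, and that the family of states of $\E_0$ is exactly the family of states obtained by feeding $\SMin$ through $\Lambda$ and restricting to $\M_0$ — i.e.\ that $(\Lambda \circ \iota)_\ast(\varphi) = \Lambda_\ast(\varphi)|_{\M_0}$. Once this bookkeeping is in place, everything else is a mechanical application of Proposition~\ref{prop:chex} and Theorem~\ref{theo:main}~(ii). (Alternatively, one can avoid referring to the internal construction of Theorem~\ref{theo:main} and instead argue abstractly: pick any CP channel $\Lambda_0$ whose associated experiment is a minimal sufficient representative of the CP-equivalence class of $\E_\Lambda$; such a $\Lambda_0$ exists because any statistical experiment with outcome algebra $\N$ is of the form $\E_{\Gamma}$ for the channel $\Gamma = $ the map sending $\SMin \ni \varphi$ to the corresponding state — but the cleanest route is simply to use the explicit $\M_0 \subseteq \M$ above.)
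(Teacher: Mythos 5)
Your proposal is correct and follows exactly the route the paper intends: the corollary is presented as an immediate application of Proposition~\ref{prop:chex} together with Theorem~\ref{theo:main}~(ii), and you supply the only nontrivial bookkeeping, namely realizing the minimal sufficient experiment $\E_0$ as $\E_{\Lambda_0}$ for $\Lambda_0 = \Lambda \circ \iota$. One tiny caveat: if $(\Lambda_\ast(\vph))_{\vph \in \SMin}$ is not faithful on $\M$, the construction of Theorem~\ref{theo:main} first compresses to $P\M P$ with $P$ the support projection, so $\iota$ itself is not unital; but $\Lambda(P) = \1_{\Min}$ (since $\vph(\Lambda(\1_\M - P)) = 0$ for every normal state $\vph$ on $\Min$ and $\Lambda(\1_\M - P) \geq 0$), hence $\Lambda \circ \iota$ is still a channel and your argument goes through unchanged.
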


\begin{coro}
\label{coro:chcoarse}
Let $\M_1 , $ $\M_2$ and $\Min$ be von Neumann algebras
and
let 
$\Lambda_1 \in \cpchset{\M_1}{\Min}$
and 
$\Lambda_2 \in \cpchset{\M_2}{\Min}$
be CP channels.
Then 
$\Lambda_1 \eqcp \Lambda_2$
if and only if
there exist Schwarz channels 
$\Gamma_1 \in \schset{\M_1}{\M_2}$
and
$\Gamma_2 \in \schset{\M_2}{\M_1}$
such that
$\Lambda _1 = \Lambda_2 \circ \Gamma_1$
and 
$\Lambda_2 = \Lambda_1 \circ \Gamma_2 .$
\end{coro}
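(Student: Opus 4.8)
The plan is to reduce everything to Corollary~\ref{coro:coarse} via the dictionary between CP channels and their associated statistical experiments recorded in Proposition~\ref{prop:chex}.

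First I would dispose of the forward implication, which is trivial: if $\Lambda_1 \eqcp \Lambda_2$, then Definition~\ref{def:ppequiv} furnishes CP channels $\Gamma_1 \in \cpchset{\M_1}{\M_2}$ and $\Gamma_2 \in \cpchset{\M_2}{\M_1}$ with $\Lambda_1 = \Lambda_2 \circ \Gamma_1$ and $\Lambda_2 = \Lambda_1 \circ \Gamma_2$; since every CP channel is $2$-positive and hence Schwarz, these $\Gamma_i$ are already Schwarz channels of the required type.

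For the converse, suppose Schwarz channels $\Gamma_1 \in \schset{\M_1}{\M_2}$ and $\Gamma_2 \in \schset{\M_2}{\M_1}$ satisfy $\Lambda_1 = \Lambda_2 \circ \Gamma_1$ and $\Lambda_2 = \Lambda_1 \circ \Gamma_2$. Passing to preduals, for every $\varphi \in \SMin$ one has
\[
	\Lambda_{1\ast}(\varphi) = \Lambda_{2\ast}(\varphi) \circ \Gamma_1,
	\qquad
	\Lambda_{2\ast}(\varphi) = \Lambda_{1\ast}(\varphi) \circ \Gamma_2 .
\]
With the parameter set $\Theta = \SMin$, these identities say precisely that $\Gamma_1$ witnesses $\E_{\Lambda_1} \cosch \E_{\Lambda_2}$ and $\Gamma_2$ witnesses $\E_{\Lambda_2} \cosch \E_{\Lambda_1}$, so $\E_{\Lambda_1} \eqsch \E_{\Lambda_2}$. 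By Corollary~\ref{coro:coarse} this upgrades to $\E_{\Lambda_1} \eqcp \E_{\Lambda_2}$, and Proposition~\ref{prop:chex}~(ii) then yields $\Lambda_1 \eqcp \Lambda_2$.

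The only nontrivial ingredient is Corollary~\ref{coro:coarse} (which itself rests on the existence and uniqueness Theorem~\ref{theo:main}); once that is available there is no real obstacle. The remaining work is the routine verification that the channel factorizations correspond, under the predual, to the Schwarz coarse-graining relations of the associated statistical experiments uniformly in $\varphi \in \SMin$, which is the short computation displayed above.
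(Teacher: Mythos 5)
Your proof is correct and follows exactly the route the paper intends: the paper states this corollary as an immediate application of Proposition~\ref{prop:chex} together with Corollary~\ref{coro:coarse}, which is precisely your reduction (CP channels are Schwarz for the forward direction; the predual translation plus the upgrade from $\eqsch$ to $\eqcp$ for the converse). No gaps.
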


\section{Minimal sufficient POVM} \label{sec:povm}
In this section we consider minimal sufficiency conditions 
for POVMs on a given input von Neumann algebra
$\Min $
and relate the results of this paper to the one obtained in 
Ref.~\onlinecite{10.1063/1.4934235,*10.1063/1.4961516}.
Throughout this section we assume that 
the input von Neumann algebra $\Min$ is 
$\sigma$-finite, or more strongly
that $\Min$ has separable predual.
A von Neumann algebra is $\sigma$-finite if and only if
it admits a faithful normal state $\vph_0 .$
Throughout this section $\vph_0$ denotes a fixed faithful normal state on $\Min.$ 

\subsection{POVMs as QC channels}

A POVM on $\Min$ is a triple
$(\Omega , \Sigma , \oM)$
such that $(\Omega , \Sigma)$ is a measurable space
and
$\oM \colon \Sigma \to \Min$
is a mapping satisfying
\begin{enumerate}[(i)]
\item
$\oM (E) \geq 0$
($\forall E \in \Sigma$);
\item
$\oM (\Omega ) = \1_{\Min}$;
\item
for any disjoint sequence $\{ E_n \} \subseteq \Sigma  ,$
$\oM  ( \cup_n E_n  ) = \sum_n \oM(E_n) ,$
where the RHS is convergent in the weak operator topology.
\end{enumerate}
For each normal state $\vph \in \SMin $
we define the outcome probability measure $P^\oM_\vph $ on $(\Omega , \Sigma)$
by
$P^\oM_\vph  (E) := \braket{ \vph , \oM(E) }$
($E \in \Sigma$).

Now we will see that a POVM can be regarded as a 
QC channel~\cite{PITHolevo2012}
in the following sense.
For faithful $\vph_0 \in \SMin ,$
we define a normal and unital mapping
$\Gamma^{\oM} \colon  L^\infty(P^\oM_{\vph_0}) \to \Min  $
by
\begin{equation}
	\Gamma^\oM ([f]_{\oM})
	:=
	\int_\Omega
	f(\omega)
	d \oM (\omega) 
	\quad 
	([f]_{\oM} \in L^\infty  (P^{\oM}_{\vph_0})) .
	\label{eq:qcch}
\end{equation}
Here $[f]_{ P^{\oM}_{\vph_0}}$
is written as $[f]_{\oM}$ since 
the notions of $P^{\oM}_{\vph_0}$-a.e.\
and
$\oM$-a.e.\
equalities coincide.
The outcome space $L^\infty  (P^{\oM}_{\vph_0}) $ is independent 
of the choice of faithful
$\vph_0 .$
The predual of $\Gamma^\oM$ is the mapping 
$\Gamma^\oM_\ast \colon 
\Minast \to  L^1 (P^\oM_{\vph_0})$
such that
\begin{equation*}
	\Gamma^\oM_\ast 
	(\vph)
	=
	\left[
	\frac{ d P^\oM_{\vph}  }{ d P^\oM_{\vph_0}   } 
	\right]_{\oM}
	\quad
	( \vph \in \SMin  ),
\end{equation*}
which can be identified with the outcome probability measure $P^\oM_\vph  .$
Since $\Gamma^{\oM}_{\ast}$ is positive, 
we have
$\Gamma^\oM \in \cpchset{L^\infty (P^\oM_{\vph_0})  }{ \Min  }.$
$\Gamma^\oM$ is called the \emph{QC channel} 
of $\oM .$

Let 
$(\Omega_1 , \Sigma_1 , \oM)$ and $(\Omega_2 , \Sigma_2 , \oN)$
be POVMs on a $\sigma$-finite von Neumann algebra $\Min .$
An $\oM$-$\oN$ weak Markov kernel~\cite{Dorofeev1997349,jencova2008} is a mapping
$\kappa (\cdot | \cdot ) \colon \Sigma_1 \times \Omega_2 \to [0,1]$
such that
\begin{enumerate}[(i)]
\item
$\kappa (E| \cdot)$
is $\Sigma_2$-measurable for each $E\in \Sigma_1 $;
\item
$\kappa (\Omega_1 | \omega_2 ) = 1,$         
$\oN (\omega_2)$-a.e.;
\item
for every disjoint sequence $\{ E_n \} \subseteq \Sigma_1 , $
$   \kappa(\cup_n E_n | \omega_2 )   =    \sum_n \kappa(E_n |\omega_2) ,$
$\oN(\omega_2)$-a.e.;
\item
for any $\oM$-null set $N \in \Sigma_1 ,$
$\kappa (N| \omega_2) = 0 ,$ 
$\oN(\omega_2)$-a.e.
\end{enumerate}
For a pair of measures $\mu$ and $\nu, $
$\mu$-$\nu$ weak Markov kernel is defined similarly.
A weak Markov kernel $\kappa (\cdot| \cdot)$ is called a regular Markov kernel if 
$\kappa(\cdot|\omega_2)$
is a probability measure for each $\omega_2 \in \Omega_2 .$
If $(\Omega_1 , \Sigma_1)$ is a standard Borel space~\cite{kechris1995classical},
for every $\oM$-$\oN$ weak Markov kernel $\kappa (\cdot | \cdot)$
there exists a regular Markov kernel
$\tilde{\kappa} (\cdot | \cdot)$ such that 
$\kappa(E|\omega_2)  =  \tilde{\kappa} (E|\omega_2) ,$
$\oN(\omega_2)$-a.e.\ for each 
$E \in \Sigma_1 .$
$\oM$ is a \emph{postprocessing} of $\oN,$ 
written $\oM \preceq \oN ,$
if
there exists an $\oM$-$\oN$ weak Markov kernel $\kappa (\cdot| \cdot)$
such that
\begin{equation}
	\oM(E)
	=
	\int_{\Omega_2}
	\kappa(E | \omega_2) 
	d\oN (\omega_2) 
	\quad
	( E \in \Sigma_1) .
	\label{eq:post}
\end{equation}
$\oM$ and $\oN$ are said to be postprocessing equivalent,
written $\oM \simeq \oN ,$
if both $\oM \preceq \oN$ and $\oN \preceq \oM$ hold.
The relations $\preceq$ and $\simeq$ are
preorder and equivalence relations of POVMs on $\Min ,$ 
respectively.

\begin{rem}
\label{rem:kernel}
In the definition of the weak Markov kernel applied in Refs.~\onlinecite{Dorofeev1997349,jencova2008},
the condition~(iv) is not required.
Still the condition~(iv) 
makes no difference in the definitions of the postprocessing relations $\preceq$ and $\simeq$
since (iv) follows from (i)-(iii) and \eqref{eq:post}.
\end{rem}

We denote by $\chi_E$ the indicator function of a set $E .$

\begin{lemm}
\label{lemm:linf}
Let $(\Omega_1 , \Sigma_1 ,  \mu)$
and $(\Omega_2 , \Sigma_2 , \nu)$
be localizable measure spaces.
Then for every channel $\Gamma \in \cpchset{ L^\infty(\mu) }{ L^\infty (\nu)  }$
there exists a $\mu$-$\nu$ weak Markov kernel 
$\kappa(\cdot|\cdot) \colon \Sigma_1 \times \Omega_2 \to [0,1]$
such that
\begin{equation}
	\Gamma ([\chi_E ]_\mu)
	=
	[\kappa (E|\cdot) ]_\nu 
	\quad
	(\forall E \in \Sigma_1) .
	\label{eq:chker}
\end{equation}
Conversely, 
for each $\mu$-$\nu$ weak Markov kernel $\kappa (\cdot | \cdot)$
there exists a unique channel 
$\Gamma \in \cpchset{ L^\infty(\mu) }{ L^\infty (\nu)  }$
satisfying~\eqref{eq:chker}.
\end{lemm}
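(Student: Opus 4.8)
The plan is to treat the two directions separately: the forward one is a measurable-selection argument, and the converse is built through the predual so that well-definedness on $L^\infty$-classes comes for free.

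Forward direction. Given a channel $\Gamma \in \cpchset{ L^\infty(\mu) }{ L^\infty (\nu)  }$, for each $E \in \Sigma_1$ we have $0 \le \Gamma([\chi_E]_\mu) \le \Gamma([\chi_{\Omega_1}]_\mu) = \1$ since $\Gamma$ is positive and unital, so $\Gamma([\chi_E]_\mu)$ admits a representative $\kappa(E|\cdot) \colon \Omega_2 \to [0,1]$; I would fix one such representative for every $E \in \Sigma_1$. Measurability of each $\kappa(E|\cdot)$ gives condition (i); unitality gives $\kappa(\Omega_1|\cdot) = 1$ $\nu$-a.e.\ (condition (ii)); and if $N \in \Sigma_1$ is $\mu$-null then $[\chi_N]_\mu = 0$, whence $\kappa(N|\cdot) = 0$ $\nu$-a.e.\ (condition (iv)). For condition (iii), given a disjoint sequence $\{E_n\}$, the partial sums $[\chi_{E_1 \cup \cdots \cup E_n}]_\mu$ form a bounded increasing sequence in $L^\infty(\mu)$ with supremum $[\chi_{\cup_n E_n}]_\mu$; since a normal positive map preserves suprema of bounded increasing sequences, $\sum_n \Gamma([\chi_{E_n}]_\mu) = \Gamma([\chi_{\cup_n E_n}]_\mu)$ in $L^\infty(\nu)$, and reading this off on the chosen representatives yields $\sum_n \kappa(E_n|\omega_2) = \kappa(\cup_n E_n|\omega_2)$ for $\nu$-a.e.\ $\omega_2$.

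Converse direction. Given a $\mu$-$\nu$ weak Markov kernel $\kappa$, for $g \in L^1(\nu)_+$ I define a set function $\lambda_g(E) := \int_{\Omega_2} \kappa(E|\omega_2) g(\omega_2)\, d\nu(\omega_2)$ on $\Sigma_1$. Using the bound $\kappa(E|\cdot) \le 1$, the $\nu$-a.e.\ countable additivity of $\kappa(\cdot|\omega_2)$ together with the monotone convergence theorem, and condition (iv), one checks that $\lambda_g$ is a finite measure with $\lambda_g \ll \mu$ and $\lambda_g(\Omega_1) = \int g\, d\nu$. Since $\mu$ is localizable, the Radon--Nikodym theorem supplies a density, so I set $\Gamma_\ast(g) := [d\lambda_g/d\mu]_\mu \in L^1(\mu)_+$ and extend by linearity via the Jordan decomposition to a positive linear map $\Gamma_\ast \colon L^1(\nu) \to L^1(\mu)$ with $\|\Gamma_\ast\| \le 1$. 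Its Banach-space adjoint $\Gamma := (\Gamma_\ast)^\ast$, under the identifications $L^\infty(\mu) = L^1(\mu)^\ast$ and $L^\infty(\nu) = L^1(\nu)^\ast$ (valid by localizability), is automatically $\sigma$-weakly continuous, i.e.\ normal, as well as positive; pairing with an arbitrary $g \in L^1(\nu)$ gives $\Gamma(\1) = \1$ and $\Gamma([\chi_E]_\mu) = [\kappa(E|\cdot)]_\nu$. As $L^\infty(\nu)$ is abelian, positivity is the same as complete positivity, so $\Gamma \in \cpchset{L^\infty(\mu)}{L^\infty(\nu)}$ and satisfies \eqref{eq:chker}. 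Uniqueness is then immediate: any channel satisfying \eqref{eq:chker} agrees with $\Gamma$ on all indicators, hence on the norm-dense simple functions by linearity, hence everywhere by norm-contractivity of channels.

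The routine parts are the verification of the kernel axioms in the forward direction and the measure-theoretic bookkeeping for $\lambda_g$. The main obstacle is making the converse precise: one must keep track of the fact that the $\nu$-null exceptional sets in conditions (ii)--(iv) depend on the set or sequence in question — harmless once one integrates against $g\, d\nu$ and applies monotone convergence — and one must invoke Radon--Nikodym in the localizable, not merely $\sigma$-finite, setting, which is exactly where Segal's characterization of localizable measure spaces is used.
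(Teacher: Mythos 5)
Your proposal is correct and follows essentially the same route as the paper: the converse direction is the identical construction (integrate the kernel against $g\,d\nu$, invoke Radon--Nikodym via localizability to get the predual map, dualize, and use abelianness for complete positivity), while the forward direction, which the paper dismisses as immediate, you spell out by choosing representatives and using normality for countable additivity. The only cosmetic difference is in the uniqueness step, where you use norm-density of simple functions in $L^\infty(\mu)$ instead of the paper's uniformly bounded $\mu$-a.e.\ convergent sequences; both work.
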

\begin{proof}
From the definition of the channel,
it is immediate that
the condition~\eqref{eq:chker} uniquely determines
a $\mu$-$\nu$ weak Markov kernel $\kappa (\cdot | \cdot)$
up to $\nu$-a.e.\ equality.
To show the converse, we take  
an arbitrary $\mu$-$\nu$ weak Markov kernel
$\kappa(\cdot | \cdot) . $
For each $[f]_\nu \in L^1 (\nu),$ 
we define a complex measure 
$\kappa \ast f $ on $(\Omega_1 , \Sigma_1)$
by
\begin{equation*}
	\kappa \ast f (E)
	: =
	\int_{\Omega_2}
	\kappa (E|\omega_2)
	f(\omega_2)
	d \nu (\omega_2)
	\quad 
	(E \in \Sigma_1),
\end{equation*}
which is absolutely continuous with respect to $\mu $
from the definition of the weak Markov kernel.
Therefore we may define a positive linear map
$\Gamma_\ast \colon L^1 (\nu) \to L^1 (\mu) $ by
\begin{equation}
	\Gamma_\ast
	([f]_\nu)
	:=
	\left[
	\frac{
	d ( \kappa \ast f )
	}{
	d \mu
	}
	\right]_\mu .
	\notag
\end{equation}
(Note that a measure is localizable if and only if the Radon-Nikodym theorem
is valid for the measure~\cite{segal1951}).
Let $\Gamma \colon L^\infty (\mu ) \to L^\infty (\nu)$
be the dual map of $\Gamma_\ast ,$
which is positive, therefore completely positive, and normal linear map.
Then for any $[f]_{\nu} \in L^1 (\nu)$ and $E \in \Sigma_1$ we have
\begin{align*}
	\braket{   
	  [f]_\nu 
	,    
	\Gamma (  [\chi_E]_\mu )
	}
	&=
	\braket{   
	\Gamma_\ast ( [f]_\nu )
	,    
	[\chi_E]_\mu
	}
	\\
	&=
	\int_{\Omega_1}
	\chi_E
	\frac{
	d ( \kappa \ast f )
	}{
	d \mu
	}
	d \mu
	\\
	&=
	\kappa \ast f (E)
	\\
	&=
	\int_{\Omega_2}
	\kappa (E|\omega_2)
	f(\omega_2)
	d \nu (\omega_2) 
	\\
	&=
	\braket{  [f]_\nu , [\kappa (E|\cdot)]_\nu  } ,
\end{align*}
which implies the condition~\eqref{eq:chker}.
Thus $\Gamma $ is unital and therefore 
$\Gamma \in \cpchset{ L^\infty(\mu) }{ L^\infty (\nu)  }.$

To establish the uniqueness, we take another channel 
$\Lambda \in \cpchset{ L^\infty(\mu) }{ L^\infty (\nu)  }$
satisfying
$
	\Lambda ([\chi_E ]_\mu)
	=
	[\kappa (E|\cdot) ]_\nu 
$
$
	( E \in \Sigma_1) .
$
Then we have
$
\Gamma ([\chi_E ]_\mu)
=
\Lambda ([\chi_E ]_\mu)
$
for each $E \in \Sigma_1.$
By taking a uniformly bounded $\mu$-a.e.\ convergent sequence of simple functions,
we can show 
$
\Gamma ([f ]_\mu)
=
\Lambda ([f ]_\mu)
$
for each $[f]_\mu \in L^\infty (\mu)  ,$
proving $\Gamma = \Lambda .$
\end{proof}

The postprocessing relation of POVMs can be rephrased 
in terms of the concatenation relation for the corresponding
QC channels as follows.

\begin{prop}
\label{prop:obsconc}
Let $\Min$ be a $\sigma$-finite von Neumann algebra
and let 
$(\Omega_1 , \Sigma_1 , \oM)$ and $(\Omega_2 , \Sigma_2 , \oN)$
be POVMs on $\Min .$
Then the following conditions are equivalent.
	\begin{enumerate}[(i)]
	\item
	$\Gamma^\oM \cocp \Gamma^\oN .$
	\item
	$\oM \preceq \oN .$
	\end{enumerate}
\end{prop}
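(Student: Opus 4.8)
The plan is to prove the two implications separately, moving between the world of weak Markov kernels and the world of CP channels between abelian algebras via Lemma~\ref{lemm:linf}. For the direction (ii)$\implies$(i), suppose $\oM \preceq \oN$, so there is an $\oM$-$\oN$ weak Markov kernel $\kappa(\cdot|\cdot) \colon \Sigma_1 \times \Omega_2 \to [0,1]$ satisfying \eqref{eq:post}. Since $\kappa$ is in particular a $P^\oN_{\vph_0}$-$P^\oM_{\vph_0}$ weak Markov kernel once we note that every $\oN$-null set is $P^\oN_{\vph_0}$-null and every $\oM$-null set is $P^\oM_{\vph_0}$-null (by faithfulness of $\vph_0$), Lemma~\ref{lemm:linf} applied with $\mu = P^\oN_{\vph_0}$ on $\Omega_1$ and $\nu = P^\oM_{\vph_0}$ on $\Omega_2$ — wait, one must be careful about which space carries which measure: the kernel sends sets in $\Sigma_1$ to functions on $\Omega_2$, so in the notation of Lemma~\ref{lemm:linf} we take $\mu = P^\oM_{\vph_0}$ on $(\Omega_1,\Sigma_1)$ and $\nu = P^\oN_{\vph_0}$ on $(\Omega_2,\Sigma_2)$, and obtain a channel $\Gamma \in \cpchset{L^\infty(P^\oM_{\vph_0})}{L^\infty(P^\oN_{\vph_0})}$ with $\Gamma([\chi_E]_\oM) = [\kappa(E|\cdot)]_\oN$. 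I would then verify $\Gamma^\oM = \Gamma^\oN \circ \Gamma$ first on indicators $[\chi_E]_\oM$, where the left side is $\oM(E)$ and the right side is $\int_{\Omega_2} \kappa(E|\omega_2)\, d\oN(\omega_2)$, which equals $\oM(E)$ by \eqref{eq:post}; then extend to all of $L^\infty(P^\oM_{\vph_0})$ by taking uniformly bounded a.e.-convergent sequences of simple functions and using normality of the three maps, exactly as in the uniqueness argument of Lemma~\ref{lemm:linf}. This gives $\Gamma^\oM \cocp \Gamma^\oN$.

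For (i)$\implies$(ii), suppose $\Gamma^\oM = \Gamma^\oN \circ \Gamma$ for some $\Gamma \in \cpchset{L^\infty(P^\oM_{\vph_0})}{L^\infty(P^\oN_{\vph_0})}$. By Lemma~\ref{lemm:linf} there is a $P^\oM_{\vph_0}$-$P^\oN_{\vph_0}$ weak Markov kernel $\kappa(\cdot|\cdot)$ with $\Gamma([\chi_E]_\oM) = [\kappa(E|\cdot)]_\oN$. I claim this $\kappa$ is also an $\oM$-$\oN$ weak Markov kernel and realizes $\oM \preceq \oN$: evaluating the identity $\Gamma^\oM = \Gamma^\oN \circ \Gamma$ at $[\chi_E]_\oM$ gives
\[
	\oM(E)
	=
	\Gamma^\oN\bigl([\kappa(E|\cdot)]_\oN\bigr)
	=
	\int_{\Omega_2} \kappa(E|\omega_2)\, d\oN(\omega_2),
\]
which is precisely \eqref{eq:post}. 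Conditions (i)-(iii) in the definition of a weak Markov kernel transfer from the $P^\oM_{\vph_0}$-$P^\oN_{\vph_0}$ version to the $\oM$-$\oN$ version because $\oN$-a.e.\ and $P^\oN_{\vph_0}$-a.e.\ equality coincide (again faithfulness of $\vph_0$), and the only genuinely new requirement, condition (iv) — that $\oM$-null sets are sent to the zero function $\oN$-a.e. — follows automatically from \eqref{eq:post}, as noted in Remark~\ref{rem:kernel}. Hence $\oM \preceq \oN$.

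The step I expect to require the most care is the bookkeeping around the two measures $P^\oM_{\vph_0}$, $P^\oN_{\vph_0}$ versus the POVMs $\oM$, $\oN$: one must check at each point that null sets, a.e.-equalities, and the Radon-Nikodym derivatives match up so that a weak Markov kernel in one sense is a weak Markov kernel in the other, and that the $L^\infty$ spaces appearing as outcome algebras of $\Gamma^\oM$ and $\Gamma^\oN$ are the ones Lemma~\ref{lemm:linf} produces. Once this identification is made cleanly, both implications reduce to evaluating the channel identity on indicator functions and invoking the density of simple functions together with normality; there is no deeper obstacle.
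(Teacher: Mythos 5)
Your proposal is correct and follows essentially the same route as the paper's proof: both directions are obtained by translating between weak Markov kernels and CP channels via Lemma~\ref{lemm:linf}, evaluating the concatenation identity on indicator functions, and extending by uniformly bounded a.e.-convergent sequences of simple functions together with normality. The extra bookkeeping you flag (identifying $\oM$-null with $P^\oM_{\vph_0}$-null sets and deducing condition (iv) from \eqref{eq:post} as in Remark~\ref{rem:kernel}) is exactly what the paper relies on implicitly, so there is no gap.
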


\begin{proof}
Assume $\Gamma^\oM \cocp \Gamma^\oN . $
Then there exists a channel 
$\Gamma \in \cpchset{L^\infty (P_{\vph_0}^\oM)}{ L^\infty ( P_{\vph_0}^\oN) }$
such that $\Gamma^\oM = \Gamma^\oN \circ \Gamma .$
From Lemma~\ref{lemm:linf} there exists an $\oM$-$\oN$ weak Markov kernel
$\kappa (\cdot|\cdot)$ such that
$\Gamma ([\chi_E]_\oM ) = [\kappa (E|\cdot)]_\oN .$
Then for each $E \in \Sigma_1$ we have
\begin{equation*}
	\oM (E)
	=
	\Gamma^\oM ([\chi_E]_\oM)
	=
	\Gamma^\oN \circ \Gamma ([\chi_E]_\oM)
	=
	\Gamma ^\oN ([\kappa (E|\cdot)]_\oN)
	=
	\int_{\Omega_2} 
	\kappa (E|\omega_2)
	d \oN (\omega_2),
\end{equation*}
which implies $\oM \preceq \oN .$

Conversely, if we assume $\oM \preceq \oN ,$
then there exists an $\oM$-$\oN$ weak Markov kernel $\kappa (\cdot | \cdot)$
satisfying~\eqref{eq:post}.
Then Lemma~\ref{lemm:linf} implies that there exists a channel
$\Gamma \in \cpchset{L^\infty (P_{\vph_0}^\oM)}{ L^\infty ( P_{\vph_0}^\oN) }$
satisfying
$\Gamma ([ \chi_E  ]_\oM) = [\kappa(E|\cdot)]_\oN $
$(E \in \Sigma_1).$
Thus for each $E \in \Sigma_1$ it holds that
\begin{equation*}
	\Gamma^\oN \circ \Gamma ( [\chi_E]_\oM )
	=
	\Gamma ^\oN ([\kappa (E|\cdot)]_\oN)
	=
	\int_{\Omega_2} 
	\kappa (E|\omega_2)
	d \oN (\omega_2)
	=
	\oM (E) 
	=
	\Gamma^\oM ([\chi_E]_\oM) .
\end{equation*}
By taking a uniformly bounded $\oM$-a.e.\ convergent sequence of simple functions,
this implies that
$
\Gamma^\oM  ([f]_\oM)
=
\Gamma^\oN \circ \Gamma ( [f]_\oM )
$
for every $[f]_\oM \in L^\infty (P_{\vph_0}^\oM) .$
Thus we obtain $\Gamma^\oM \cocp \Gamma^\oN .$
\end{proof}

\begin{coro}
\label{coro:obsconc}
Let $\Min ,$  
$(\Omega_1 , \Sigma_1 , \oM)$ and $(\Omega_2 , \Sigma_2 , \oN)$
be the same as in Proposition~\ref{prop:obsconc}.
Then the following conditions are equivalent.
	\begin{enumerate}[(i)]
	\item
	$\Gamma^\oM \eqcp \Gamma^\oN .$
	\item
	$\oM \simeq \oN .$
	\end{enumerate}
\end{coro}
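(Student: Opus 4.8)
The plan is to derive this corollary immediately from Proposition~\ref{prop:obsconc} by unfolding the definitions of the two equivalence relations involved. Recall that $\Gamma^\oM \eqcp \Gamma^\oN$ means, by Definition~\ref{def:ppequiv}, that both $\Gamma^\oM \cocp \Gamma^\oN$ and $\Gamma^\oN \cocp \Gamma^\oM$ hold; likewise $\oM \simeq \oN$ means that both $\oM \preceq \oN$ and $\oN \preceq \oM$ hold.

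First I would apply Proposition~\ref{prop:obsconc} with the pair $(\oM , \oN)$ to obtain the equivalence $\Gamma^\oM \cocp \Gamma^\oN \iff \oM \preceq \oN$. Then I would apply the same proposition with the roles of $\oM$ and $\oN$ exchanged, which is legitimate since the hypotheses are symmetric in the two POVMs, to obtain $\Gamma^\oN \cocp \Gamma^\oM \iff \oN \preceq \oM$. Conjoining these two biconditionals yields
\[
	\left( \Gamma^\oM \cocp \Gamma^\oN \text{ and } \Gamma^\oN \cocp \Gamma^\oM \right)
	\iff
	\left( \oM \preceq \oN \text{ and } \oN \preceq \oM \right) ,
\]
which is precisely the assertion $\Gamma^\oM \eqcp \Gamma^\oN \iff \oM \simeq \oN$.

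There is no real obstacle here: the corollary is purely a formal consequence of Proposition~\ref{prop:obsconc} together with the definitions of $\eqcp$ and $\simeq$ as mutual one-sided relations, so the only thing to be careful about is noting that the hypotheses of Proposition~\ref{prop:obsconc} are unchanged under swapping $\oM$ and $\oN$, so that it may be invoked in both directions.
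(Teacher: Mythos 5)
Your proof is correct and is exactly the intended argument: the paper states this corollary without proof as an immediate consequence of Proposition~\ref{prop:obsconc}, obtained by applying that proposition in both directions and conjoining, just as you do. Your observation that the hypotheses are symmetric in $\oM$ and $\oN$ is the only point that needs checking, and it holds.
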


The above discussion shows that any POVM
can be regarded as a channel with an abelian 
outcome space.
Conversely we have the following.

\begin{prop}
\label{prop:conv}
Let $\Min$ be a $\sigma$-finite von Neumann algebra
and let $\M$ be an abelian von Neumann algebra.
Then for any channel $\Gamma \in \cpchset{\M}{\Min}$
there exists a POVM $\oM$ on $\Min$ such that 
$\Gamma^{\oM} \eqcp \Gamma . $
If we further assume that 
$\Gamma_\ast (\SMin)$
is faithful on $\M$,
then $\oM$ can be taken such that
$\Gamma^{\oM} \cong \Gamma .$
\end{prop}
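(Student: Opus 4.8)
The plan is to realize the abelian outcome algebra $\M$ as an $L^\infty$ space of a localizable measure space and then read off a POVM directly from $\Gamma$. First I would use the structure theory of abelian von Neumann algebras: since $\Min$ is $\sigma$-finite and $\M$ admits a faithful normal state (as it is a subalgebra-like object receiving a channel — more carefully, one reduces to the faithful case via Lemma~\ref{lemm:faithful} applied to the associated experiment $\E_\Gamma$, using Proposition~\ref{prop:chex}), $\M$ is $\sigma$-finite and hence isomorphic to $L^\infty(\Omega,\Sigma,\mu)$ for some (localizable, in fact $\sigma$-finite) measure space. Fixing such an isomorphism $\M \cong L^\infty(\mu)$, define $\oM(E) := \Gamma([\chi_E]_\mu)$ for $E \in \Sigma$. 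The axioms of a POVM follow from the properties of $\Gamma$: positivity of $\Gamma$ gives $\oM(E)\ge 0$, unitality gives $\oM(\Omega)=\1_{\Min}$, and normality of $\Gamma$ (continuity in the $\sigma$-weak topologies) together with $\sigma$-weak convergence of $\sum_n [\chi_{E_n}]_\mu$ to $[\chi_{\cup_n E_n}]_\mu$ for disjoint $\{E_n\}$ gives countable additivity in the weak operator topology.

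Next I would identify the QC channel $\Gamma^\oM$ with $\Gamma$ up to the requisite equivalence. By construction $\Gamma^\oM([\chi_E]_{\oM}) = \int_\Omega \chi_E \, d\oM = \oM(E) = \Gamma([\chi_E]_\mu)$, so $\Gamma^\oM$ and $\Gamma$ agree on indicator functions after the identification of $L^\infty(P^\oM_{\vph_0})$ with a subalgebra/quotient of $L^\infty(\mu)$. The delicate point is the relationship between the measure $\mu$ and the outcome measure $P^\oM_{\vph_0} = \Gamma^\oM_\ast(\vph_0)$: in general $P^\oM_{\vph_0}$ is only absolutely continuous with respect to $\mu$, so $L^\infty(P^\oM_{\vph_0})$ is a quotient of $L^\infty(\mu)$, namely $L^\infty(\mu)/\ker$, where the kernel consists of classes supported off the support projection of $\vph_0 \circ \Gamma$ in $\M$. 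This quotient map is precisely a CP channel in both directions when restricted appropriately — going down is the quotient $L^\infty(\mu)\to L^\infty(P^\oM_{\vph_0})$, and going up is a section built by extending by an arbitrary fixed value on the complementary null set, exactly as in the proof of Lemma~\ref{lemm:faithful}. Composing these with the agreement on indicators (extended to all of $L^\infty$ by taking uniformly bounded a.e.-convergent sequences of simple functions, as in Lemma~\ref{lemm:linf} and Proposition~\ref{prop:obsconc}) yields $\Gamma^\oM \eqcp \Gamma$.

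Finally, for the sharper conclusion under the faithfulness hypothesis $\Gamma_\ast(\SMin)$ faithful on $\M$: faithfulness of the family forces the support projection of $\vph_0\circ\Gamma$ — indeed of the whole family $(\vph\circ\Gamma)_{\vph\in\SMin}$ — to be $\1_\M$, which means $P^\oM_{\vph_0}$ and $\mu$ have the same null sets, so $L^\infty(P^\oM_{\vph_0}) = L^\infty(\mu) \cong \M$ as von Neumann algebras. Under this honest isomorphism $\pi$ the identity $\Gamma^\oM \circ \pi = \Gamma$ holds on indicators by construction and hence everywhere by the same uniform-boundedness/a.e.-convergence density argument, giving $\Gamma^\oM \cong \Gamma$. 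The main obstacle I anticipate is the bookkeeping in the non-faithful case — correctly matching $L^\infty(P^\oM_{\vph_0})$ (which is intrinsically tied to $\vph_0$) against the possibly larger algebra $L^\infty(\mu)\cong\M$ and verifying that the two comparison channels one writes down are genuinely CP, unital and normal; this is where one must be careful that "independence of $\vph_0$" of the outcome space (already noted after \eqref{eq:qcch}) is used correctly and that the section map is well-defined despite the choice of an arbitrary state on the complement.
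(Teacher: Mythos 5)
Your proposal is correct and follows essentially the same route as the paper: represent $\M$ as $L^\infty(\mu)$ for a localizable measure space, set $\oM(E):=\Gamma([\chi_E]_\mu)$, compare $\Gamma$ with $\Gamma^{\oM}$ via the quotient homomorphism $L^\infty(\mu)\to L^\infty(P^{\oM}_{\vph_0})$ and a unital CP section supported on $\Omega_0=\{dP^{\oM}_{\vph_0}/d\mu>0\}$ (the Lemma~\ref{lemm:faithful}-style extension by a fixed state), and observe that faithfulness of $\Gamma_\ast(\SMin)$ makes $\mu$ and $P^{\oM}_{\vph_0}$ mutually absolutely continuous so the quotient becomes an isomorphism. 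The preliminary reduction to a $\sigma$-finite $\M$ is unnecessary (the localizable representation holds for any abelian von Neumann algebra), but it does no harm.
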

\begin{proof}
Since $\M$ is abelian,
we may identify $\M$ with $L^\infty (\mu)$ for some localizable measure space
$(\Omega , \Sigma , \mu ) $
(Ref.~\onlinecite{sakaibook}, Sec.~1.18).
We define a POVM $(\Omega , \Sigma , \oM )$
by 
$\oM (E) :=  \Gamma ( [ \chi_E  ]_\mu   )  $
$(E \in \Sigma ) .$
Now we show $\Gamma^{\oM} \eqcp \Gamma .$
Since $P^{\oM}_{\vph_0}$ is absolutely continuous with respect to $\mu ,$
the mapping
\begin{equation}
	\pi \colon
	L^\infty (\mu)
	\ni 
	[f]_\mu
	\longmapsto
	[f]_{\oM} 
	\in
	L^\infty
	(P^{\oM}_{\vph_0}) 
	\label{eq:pi}
\end{equation}
is a well-defined normal homomorphism.
Since we have 
$\Gamma ([\chi_E]_\mu)  = \oM (E) = \Gamma^{\oM} \circ \pi ([ \chi_E  ]_\mu )  $
for any $E\in \Sigma , $
by taking a uniformly bounded $\mu$-a.e.\ convergent sequence of simple functions,
we obtain $\Gamma ([f]_\mu)  =  \Gamma^{\oM} \circ \pi ([ f]_\mu ) $
for any $[f]_\mu \in L^\infty (\mu) .$
Thus we have shown $\Gamma \cocp \Gamma^{\oM} .$
To show 
$\Gamma^{\oM} \cocp \Gamma ,$
we define
$g_0 := d P^{\oM}_{\vph_0} / d \mu$ and 
$\Omega_0 := \set{\omega \in \Omega |  g_0 (\omega) > 0} .$
For each $[g]_\mu \in L^1 (\mu) $ we have 
\[
	\int_{\Omega} | g g_0^{-1}  \chi_{\Omega_0} | d P^{\oM}_{\vph_0}
	=
	\int_{\Omega}
	|g| g_0^{-1 } \chi_{\Omega_0} g_0 d\mu
	=
	\int_{\Omega} |g| \chi_{\Omega_0} d\mu
	\leq 
	|| [ g ]_\mu ||_{L^1 (\mu)} ,
\]
where $|| \cdot ||_{L^1 (\mu)}$ denotes the $L^1$-norm on $L^1 (\mu) .$
Thus the mapping
\[
	\Lambda_{0 \ast}
	\colon
	L^1 (\mu)
	\ni 
	[g]_\mu
	\longmapsto
	[g g_0^{-1} \chi_{\Omega_0}]_{\oM}
	\in
	L^1 (P^{\oM}_{\vph_0})
\]
is well-defined and positive.
For any $[f]_{\oM} \in L^\infty (P^{\oM}_{\vph_0})$
and $[g]_\mu \in L^1 (\mu)$ we have
\begin{align*}
	\braket{
	\Lambda_{0\ast}
	( [g]_\mu )
	,
	[f]_{\oM}
	}
	&=
	\int_\Omega
	g g^{-1}_0 f \chi_{\Omega_0}
	d P^{\oM}_{\vph_0}
	=
	\int_{\Omega}
	gf \chi_{\Omega_0} d\mu
	=
	\braket{
	[g]_\mu
	,
	[f \chi_{\Omega_0}]_\mu
	},
\end{align*}
which implies that dual map $\Lambda_0$ of $\Lambda_{0\ast}$ is given by
$\Lambda_0 ([f]_{\oM}) =  [f\chi_{\Omega_0}]_\mu$
$([f]_{\oM} \in L^\infty (P^{\oM}_{\vph_0})) .$
Now we define a channel $\Lambda \in \cpchset{ L^\infty (P^{\oM}_{\vph_0})  }{  L^\infty (\mu)  }$
by
\[
	\Lambda ([f]_{\oM})
	=
	[f \chi_{\Omega_0}  ]_{\mu}
	+
	\braket{ [h_0]_{\oM} , [f]_{\oM}  }
	[\chi_{\Omega \setminus \Omega_0}]_\mu 
	\quad
	([f]_\oM \in  L^\infty (P^{\oM}_{\vph_0})) ,
\]
where $[h_0]_{\oM} \in L^1 (P^{\oM}_{\vph_0})$ is a fixed normal state 
on $L^\infty (P^{\oM}_{\vph_0}) .$
Then for each $E \in \Sigma$
we have
\[
	\Gamma \circ \Lambda 
	( [ \chi_E ]_{\oM}  )
	=
	\Gamma(
	[\chi_{E \cap \Omega_0 } ]_\mu
	)
	+ 
	\braket{ [h_0]_{\oM} , [\chi_E]_{\oM}  }
	\Gamma (
	[\chi_{\Omega \setminus \Omega_0}]_\mu
	)
	=
	\oM(E) 
	=
	\Gamma^{\oM} ([\chi_E]_{\oM}) ,
\]
where the second equality follows from that $\Omega \setminus \Omega_0$ 
is an $\oM$-null set.
From this we obtain
$\Gamma \circ \Lambda = \Gamma^{\oM} ,$
proving
$\Gamma^{\oM} \eqcp \Gamma .$

Now we assume that $\Gamma_\ast (\SMin)$ is faithful
on $\M .$
Then $\Gamma_\ast (\vph_0)$ is faithful and therefore
$\mu$ and $P^{\oM}_{\vph_0}$ are mutually absolutely continuous.
Thus $\pi$ given by \eqref{eq:pi} is an isomorphism between the outcome spaces
of $\Gamma$ and $\Gamma^{\oM} .$
Hence we have $\Gamma^{\oM} \cong \Gamma .$
\end{proof}

\subsection{Minimal sufficiency}

Now we introduce two minimal sufficiency conditions
for POVM as follows.

\begin{defi}
\label{defi:mspovm}
Let $\Min$ be a $\sigma$-finite von Neumann algebra and let
$(\Omega , \Sigma , \oM)$ be a POVM on $\Min .$
\begin{enumerate}[(i)]
\item
$\oM$ is \emph{kernel minimal sufficient} if for any 
$\oM$-$\oM$ weak Markov kernel $\kappa (\cdot | \cdot) ,$
\begin{equation}
	\oM (E) = \int_{\Omega} \kappa (E|\omega) d \oM(\omega)
	\quad
	(\forall E \in \Sigma)
	\label{eq:kernelsr}
\end{equation}
implies $\kappa (E|\omega) =  \chi_E (\omega) ,$
$\oM (\omega)$-a.e.\ for every $E \in \Sigma .$
\item
$\oM$ is \emph{relabeling minimal sufficient} if 
for any POVM $(\Omega_1 , \Sigma_1 ,\oN)$ postprocessing equivalent to $\oM $
there exists a $\Sigma_1/\Sigma$-measurable mapping 
$f \colon \Omega_1 \to \Omega$ such that
the POVM $( \Omega , \Sigma , \oN_f )$ defined by
$\oN_f (E) :=  \oN  (f^{-1}  (E) ) $ 
$(E \in \Sigma)$
coincides with
$(\Omega , \Sigma , \oM) .$
\end{enumerate}
\end{defi}
The relabeling minimal sufficiency is introduced in Ref.~\onlinecite{10.1063/1.4934235,*10.1063/1.4961516}
in which the corresponding POVM is called just ``minimal sufficient''.
We will see in Theorem~\ref{theo:mspovms} that 
these minimal sufficiency conditions  for POVM coincide 
under the assumptions of the standard Borel outcome space
and of the separability of the predual $\Minast .$

A POVM $(\Omega , \Sigma , \oM)$ on a $\sigma$-finite
von Neumann algebra $\Min$ is called \emph{complete}, 
or \emph{injective},~\cite{Dorofeev1997349}
if 
\[
	\int_{\Omega} f(\omega) d \oM (\omega) = 0
\]
implies $f = 0,$ $\oM$-a.e.\
for any bounded and measurable $f .$ 
A POVM 
$(\Omega , \Sigma , \oM)$ is called a projection-valued measure (PVM)
if $\oM (E)$ is a projection for each $E \in \Sigma .$
It is immediate from the definition that any complete POVM is kernel minimal sufficient,
and it is also known~\cite{Dorofeev1997349} that any PVM
is complete. 
Therefore we have
\begin{prop}
\label{prop:pvm}
Let $(\Omega ,\Sigma, \oM)$ be a PVM on a 
$\sigma $-finite von Neumann algebra $\Min .$
Then $\oM$ is kernel minimal sufficient.
\end{prop}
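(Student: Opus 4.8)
\emph{Proof idea.} The plan is to deduce the statement from two facts that are already in hand: first, that every PVM on a $\sigma$-finite von Neumann algebra is complete (the cited result of Ref.~\onlinecite{Dorofeev1997349}); and second, that completeness of a POVM implies kernel minimal sufficiency. The first is imported as given, so the actual work is just to record the short argument for the second implication.

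For that implication, let $(\Omega , \Sigma , \oM)$ be complete and let $\kappa(\cdot | \cdot)$ be an $\oM$-$\oM$ weak Markov kernel satisfying \eqref{eq:kernelsr}. Fix $E \in \Sigma .$ By the definition of a weak Markov kernel, the function $\omega \mapsto \kappa(E|\omega)$ is $\Sigma$-measurable with values in $[0,1] ,$ so $f_E := \kappa(E|\cdot) - \chi_E$ is a bounded measurable function. Since $\int_\Omega \chi_E(\omega)\, d\oM(\omega) = \oM(E)$ and, by \eqref{eq:kernelsr}, $\int_\Omega \kappa(E|\omega)\, d\oM(\omega) = \oM(E) ,$ linearity of the operator-valued integral in its bounded measurable integrand gives $\int_\Omega f_E(\omega)\, d\oM(\omega) = 0 .$ Completeness of $\oM$ then forces $f_E = 0 ,$ that is, $\kappa(E|\omega) = \chi_E(\omega)$ holds $\oM(\omega)$-a.e.; as $E \in \Sigma$ was arbitrary, $\oM$ is kernel minimal sufficient. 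Applying this to the PVM $\oM ,$ which is complete by Ref.~\onlinecite{Dorofeev1997349}, yields the proposition.

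There is no genuine obstacle beyond the bookkeeping just described; the only substantive input is the completeness of PVMs, which is quoted rather than reproved. The one point worth making explicit in the write-up is that the integral $\int_\Omega (\cdot)\, d\oM$ appearing in \eqref{eq:kernelsr} and the one in the definition of completeness are the same operator-valued integral over bounded real-valued measurable functions, so that the subtraction step producing $\int_\Omega f_E\, d\oM = 0$ is legitimate.
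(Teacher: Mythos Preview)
Your proposal is correct and follows essentially the same route as the paper: the paper states just before the proposition that any complete POVM is kernel minimal sufficient (``immediate from the definition'') and that any PVM is complete by Ref.~\onlinecite{Dorofeev1997349}, which is precisely your two-step argument. Your write-up simply spells out the ``immediate'' implication via the function $f_E = \kappa(E|\cdot) - \chi_E$, which is exactly the intended verification.
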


Now we assume that the predual $\Minast$ of the input von Neumann algebra $\Min$
is separable with respect to the norm topology.
Then there exists a countable family of normal states
$(\vph_n)_{n \geq 1} \subseteq \SMin$
dense in $\SMin .$
Following~Ref.~\onlinecite{10.1063/1.4934235,*10.1063/1.4961516},
for each POVM $(\Omega , \Sigma , \oM)$ on $\Min$
we define the following $\Sigma / \realbi$-measurable mapping
\begin{equation}
	T 
	\colon
	\Omega 
	\ni 
	\omega
	\longmapsto
	\left(
	\frac{
	d P^{\oM}_{\vph_n}
	}{
	d P^{\oM}_{\vph_0}
	}
	(\omega)
	\right)_{n \geq 1}
	\in \Reali ,
	\label{eq:lsb}
\end{equation}
where $\realspi$ is the countable product space of the real line
$\realsp $
equipped with the Borel $\sigma$-algebra
$\realb .$
Note that while the mapping $T$ depends on the choices of 
the Radon-Nikodym derivatives,
the POVM $\realmspi{\oM_T}$
defined by 
$\oM_T(E) = \oM (T^{-1} (E))$
$(E \in \realbi)$
does not.
The following two lemmas can be shown similarly as in 
Ref.~\onlinecite{10.1063/1.4934235,*10.1063/1.4961516}.

\begin{lemm}
\label{lemm:sstat}
Let $\Min$ be a $\sigma$-finite von Neumann algebra,
let $(\Omega , \Sigma , \oM)$ be a POVM on $\Min ,$
and let $f \colon \Omega \to \Omega_1$ be a measurable mapping
between the measurable spaces 
$(\Omega , \Sigma)$ and $(\Omega_1 , \Sigma_1 ) .$
Define a POVM $(\Omega_1 , \Sigma_1  , \oM_f)$ 
by $\oM_f (E) := \oM (f^{-1} (E)) $
$(E \in \Sigma_1) .$
Then the following conditions are equivalent.
\begin{enumerate}[(i)]
\item
$\oM \simeq \oM_f .$
\item
$\displaystyle
	\frac{
	d P^{\oM}_{\vph}
	}{
	d P^{\oM}_{\vph_0}
	}
	(\omega)
	=
	\frac{
	d P^{\oM_f}_{\vph}
	}{
	d P^{\oM_f}_{\vph_0}
	}
	(f(\omega)) ,
$
$\oM(\omega)$-a.e.\
for all $\vph \in \SMin .$
\end{enumerate}
\end{lemm}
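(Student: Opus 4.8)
The plan is to prove the two implications separately; throughout I work with the probability measures $\mu := P^{\oM}_{\vph_0}$ on $(\Omega,\Sigma)$ and $\mu_f := P^{\oM_f}_{\vph_0}$ on $(\Omega_1,\Sigma_1)$. Because $\vph_0$ is faithful, $\mu$ dominates every $P^\oM_\vph$ and $\mu_f$ dominates every $P^{\oM_f}_\vph$, so the Radon--Nikodym derivatives in~(ii) are meaningful and ``$\oM$-a.e.'' coincides with ``$\mu$-a.e.'' (likewise for $\oM_f$, $\mu_f$). From $\oM_f(A)=\oM(f^{-1}(A))$ one obtains $\mu_f=f_\ast\mu$ and, more generally, $P^{\oM_f}_\vph=f_\ast P^\oM_\vph$; hence composition with $f$ is an isometry $L^1(\mu_f)\hookrightarrow L^1(\mu)$ and a normal unital $\ast$-homomorphism $\pi\colon L^\infty(\mu_f)\to L^\infty(\mu)$, $\pi([g]_{\mu_f}):=[g\circ f]_\mu$, with image a von Neumann subalgebra $\N\subseteq L^\infty(\mu)$; a computation on simple functions gives $\Gamma^{\oM_f}=\Gamma^{\oM}\circ\pi$. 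In particular $\oM_f\preceq\oM$ always, via the deterministic kernel $(A,\omega)\mapsto\chi_A(f(\omega))$, so~(i) is equivalent to $\oM\preceq\oM_f$, and it suffices to prove (ii)$\iff(\oM\preceq\oM_f)$.

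For (ii)$\Rightarrow(\oM\preceq\oM_f)$: condition~(ii) says each $\frac{dP^\oM_\vph}{d\mu}$ is, mod $\mu$-null sets, $\sigma(f)$-measurable, i.e.\ $f$ is a sufficient statistic for $(P^\oM_\vph)_{\vph\in\SMin}$. I would build the recovering kernel from the conditional expectation $\condi[\,\cdot\mid\sigma(f)\,]$ relative to $\mu$: for $E\in\Sigma$ the $\sigma(f)$-measurable function $\condi[\chi_E\mid\sigma(f)]$ equals $g_E\circ f$ for some $\Sigma_1$-measurable $g_E\colon\Omega_1\to[0,1]$ by the Doob--Dynkin lemma, and I set $\kappa(E\mid\omega_1):=g_E(\omega_1)$. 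Axioms~(i)--(iii) of a weak Markov kernel follow from the a.e.\ linearity, countable additivity and null-set behaviour of conditional expectation, transported from $\mu$ to $\mu_f=f_\ast\mu$, and axiom~(iv) then comes for free once~\eqref{eq:post} is checked (Remark~\ref{rem:kernel}). To verify $\oM(E)=\int_{\Omega_1}\kappa(E\mid\omega_1)\,d\oM_f(\omega_1)$ one may, since normal states separate $\Min$, test against each $\vph$: the right-hand side becomes $\int_{\Omega_1}g_E\,dP^{\oM_f}_\vph$, which by $P^{\oM_f}_\vph=f_\ast P^\oM_\vph$, then by~(ii) (rewriting $\bigl(\frac{dP^{\oM_f}_\vph}{d\mu_f}\bigr)\circ f$ as $\frac{dP^\oM_\vph}{d\mu}$), and then by the identity $\int\condi[\chi_E h\mid\sigma(f)]\,d\mu=\int\chi_E h\,d\mu$ applied to the $\sigma(f)$-measurable $h=\frac{dP^\oM_\vph}{d\mu}$, reduces to $P^\oM_\vph(E)$.

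For $(\oM\preceq\oM_f)\Rightarrow$(ii), the substantive half, I would run the minimal-sufficient construction of Theorem~\ref{theo:main}. By Proposition~\ref{prop:obsconc} and Lemma~\ref{lemm:linf} there is a channel $\Gamma_\kappa\in\cpchset{L^\infty(\mu)}{L^\infty(\mu_f)}$ with $\Gamma^\oM=\Gamma^{\oM_f}\circ\Gamma_\kappa$; combined with $\Gamma^{\oM_f}=\Gamma^\oM\circ\pi$ this shows $\Psi:=\pi\circ\Gamma_\kappa\in\cpch{L^\infty(\mu)}$ obeys $\Gamma^\oM\circ\Psi=\Gamma^\oM$, hence $\phi_\vph\circ\Psi=\phi_\vph$ for every state $\phi_\vph:=\bigl[\frac{dP^\oM_\vph}{d\mu}\bigr]_\mu$ of the experiment $\E_{\Gamma^\oM}$. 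This family is faithful ($\phi_{\vph_0}=\1$), so applying Lemma~\ref{lemm:ergodic} to the semigroup $\F$ of all such Schwarz channels exactly as in the proof of Theorem~\ref{theo:main} yields a conditional expectation $\condi$ onto a von Neumann subalgebra $\M_0=\condi L^\infty(\mu)$ with $\condi\in\F$ and $\condi\circ\Gamma=\Gamma\circ\condi=\condi$ for $\Gamma\in\F$. The decisive observation is that $\Psi$ factors through $\pi$, so $\mathrm{ran}\,\Psi\subseteq\N$; since fixed points of $\Psi$ lie in its range and $\M_0\subseteq\mathrm{Fix}(\Psi)$, we get $\M_0\subseteq\N$. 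Identifying $\N$ with $L^\infty(\sigma(f))$ through $\pi$ and Doob--Dynkin, $\M_0=L^\infty(\mathcal{G})$ for a sub-$\sigma$-algebra $\mathcal{G}\subseteq\sigma(f)$ mod $\mu$-null sets; then $\condi\in\F$ forces $\phi_\vph=\condi_\ast(\phi_\vph)=\condi[\phi_\vph\mid\mathcal{G}]$ to be $\sigma(f)$-measurable, and a change of variables identifies its Doob--Dynkin factor $g_\vph$ with a density of $f_\ast P^\oM_\vph=P^{\oM_f}_\vph$ against $f_\ast\mu=P^{\oM_f}_{\vph_0}$, so $g_\vph=\frac{dP^{\oM_f}_\vph}{dP^{\oM_f}_{\vph_0}}$ $\mu_f$-a.e., which is~(ii). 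The main obstacle I anticipate is the measure-theoretic bookkeeping in this last step --- pinning down $\N\cong L^\infty(\sigma(f))$, that $\mathrm{Fix}(\Psi)$ sits inside $\N$, and the interchange of $\oM$- and $\mu$-a.e.\ statements --- rather than any genuinely new idea.
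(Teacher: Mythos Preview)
Your argument is correct. Note first that the paper does not actually prove this lemma in the text; it states only that Lemmas~\ref{lemm:sstat} and~\ref{lemm:lsb} ``can be shown similarly as in'' the author's earlier work, so there is no in-paper proof to compare against line by line.

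Your direction (ii)$\Rightarrow(\oM\preceq\oM_f)$ via the conditional-expectation kernel $\kappa(E\mid\cdot)$ built from $\condi[\chi_E\mid\sigma(f)]$ is the standard classical-sufficiency construction and is surely what the cited reference intends. For the converse you take a route that is specific to the present paper: you feed $\Psi=\pi\circ\Gamma_\kappa$ into the mean-ergodic construction underlying Theorem~\ref{theo:main}, obtain a conditional expectation $\condi$ onto $\M_0\subseteq\N$, and then use $\condi\in\F$ together with the identification of $\condi$ with the probabilistic conditional expectation $\condi[\,\cdot\mid\mathcal G]$ (uniqueness of the $\mu$-preserving normal conditional expectation onto $L^\infty(\mathcal G)$ in the commutative case) to force each density $dP^{\oM}_\vph/d\mu$ to be $\sigma(f)$-measurable. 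This is valid and self-contained within the framework of Sec.~\ref{sec:mse}. The cited earlier paper, which predates that machinery, presumably argues more directly with Halmos--Savage-type manipulations of Radon--Nikodym derivatives; your version trades that for a short appeal to Lemma~\ref{lemm:ergodic}, which has the virtue of illustrating the reach of the paper's main construction. One small simplification worth noting: you do not need the full semigroup $\F$; the single operator $\Psi$ already satisfies $\Psi_\ast\phi_\vph=\phi_\vph$ for every $\vph$ and $\mathrm{Fix}(\Psi)\subseteq\mathrm{ran}(\Psi)\subseteq\N$, so the ordinary (commutative) mean ergodic theorem applied to $\{\Psi^n:n\geq 0\}$ alone already produces the required $\condi$. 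The residual bookkeeping you anticipate---identifying $\N$ with $L^\infty(\sigma(f))$, recognising $\condi$ as classical conditional expectation, and the final Doob--Dynkin/change-of-variables step---is indeed routine.
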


\begin{lemm}
\label{lemm:lsb}
Let $\Min$ be a von Neumann algebra with separable predual,
let $(\vph_n)_{n \geq 1} \subseteq \SMin$ be dense in $\SMin ,$
let $(\Omega , \Sigma , \oM)$ be a POVM on $\Min ,$
and let $T$ be the mapping defined by~\eqref{eq:lsb}.
Then the POVM $\realmspi{\oM_T}$ induced by $T$ satisfies the following conditions.
\begin{enumerate}[(i)]
\item
$\oM_T \simeq \oM . $
\item
$
\displaystyle
\left(
	\frac{
	d P^{\oM_T}_{\vph_n}
	}{
	d P^{\oM_T}_{\vph_0}
	}
	(t)
	\right)_{n \geq 1}
	=
	t,
$
$\oM_T (t)$-a.e.
\item
$\oM_T$ is relabeling minimal sufficient.
\end{enumerate}
\end{lemm}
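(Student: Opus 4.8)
The plan is to prove the three items roughly in the order (i), (ii), (iii), deriving each from the preceding one together with the already-established Lemmas~\ref{lemm:sstat} and~\ref{lemm:sstat}. For item~(i), I would check that the map $T$ of \eqref{eq:lsb} satisfies condition~(ii) of Lemma~\ref{lemm:sstat} with $f = T$ and $\Omega_1 = \Reali$, and then invoke that lemma. Concretely, one must show that for every $\vph \in \SMin$ one has $dP^{\oM}_\vph / dP^{\oM}_{\vph_0}(\omega) = dP^{\oM_T}_\vph / dP^{\oM_T}_{\vph_0}(T(\omega))$, $\oM(\omega)$-a.e. By density of $(\vph_n)_{n\geq 1}$ in $\SMin$ and norm-continuity of $\vph \mapsto P^{\oM}_\vph$ (hence of the Radon--Nikodym derivatives in $L^1(P^{\oM}_{\vph_0})$), it suffices to verify this for $\vph = \vph_n$; and for $\vph_n$ the $n$-th coordinate projection $\pi_n \colon \Reali \to \Real$ is, by the very definition of $T$, a version of $dP^{\oM}_{\vph_n}/dP^{\oM}_{\vph_0}$ pulled back along $T$, which is exactly the statement that $dP^{\oM_T}_{\vph_n}/dP^{\oM_T}_{\vph_0} = \pi_n$, $\oM_T$-a.e. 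Pushing this forward gives item~(ii) at the same time.

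For item~(ii) I would argue as just sketched: by the change-of-variables formula for pushforward measures, for each $n$ and each $E \in \realbi$,
\begin{equation*}
	\int_E \pi_n \, d P^{\oM_T}_{\vph_0}
	=
	\int_{T^{-1}(E)} \pi_n \circ T \, d P^{\oM}_{\vph_0}
	=
	\int_{T^{-1}(E)} \frac{d P^{\oM}_{\vph_n}}{d P^{\oM}_{\vph_0}} \, d P^{\oM}_{\vph_0}
	=
	P^{\oM}_{\vph_n}(T^{-1}(E))
	=
	P^{\oM_T}_{\vph_n}(E),
\end{equation*}
so $\pi_n$ is a version of $dP^{\oM_T}_{\vph_n}/dP^{\oM_T}_{\vph_0}$; stacking over $n$ gives that the $\Reali$-valued function $t \mapsto t$ is a version of the whole family, which is item~(ii).

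For item~(iii), relabeling minimal sufficiency, let $(\Omega_1, \Sigma_1, \oN)$ be any POVM postprocessing equivalent to $\oM_T$. Since $\oM_T \simeq \oM$ by (i), we also have $\oN \simeq \oM$. I would form the analogous map $T' \colon \Omega_1 \to \Reali$ built from $\oN$ and the same dense sequence $(\vph_n)$, so that $\oN_{T'} \simeq \oN \simeq \oM_T$ and, by (ii) applied to $\oN$, the identity function is a version of the Radon--Nikodym family for $\oN_{T'}$. Because $\oN \simeq \oM_T$, Lemma~\ref{lemm:sstat} gives $dP^{\oN}_{\vph}/dP^{\oN}_{\vph_0}(\omega_1) = dP^{\oM_T}_{\vph}/dP^{\oM_T}_{\vph_0}(T'(\omega_1))$, $\oN$-a.e., and combining this with item~(ii) for $\oM_T$ shows $T'$ is, $\oN$-a.e., the composition of the canonical embedding with the identity — i.e. the Radon--Nikodym structure of $\oN$ factors through $T'$ onto exactly the coordinates of $\oM_T$. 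Then one checks directly from the definitions of $\oM_T$ and $\oN$ that $\oN_{T'}(E) = \oN((T')^{-1}(E))$ and $\oM_T(E)$ agree for all $E \in \realbi$: both are determined by the common outcome-probability data through the now-identical Radon--Nikodym coordinates, using that $\Min$ is $\sigma$-finite so that the states $\vph_n$ separate points of $\Min$. This exhibits the required measurable $f := T' \colon \Omega_1 \to \Reali$ with $\oN_f = \oM_T$, proving relabeling minimal sufficiency.

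The main obstacle I anticipate is item~(iii), specifically the measure-theoretic bookkeeping needed to pass from "the Radon--Nikodym families agree after relabeling by $T'$" to the operator identity $\oN_{T'} = \oM_T$: one must argue that two POVMs into $\Min$ whose outcome probabilities $P^{\cdot}_{\vph_n}$ coincide for all $n$ in the dense set are literally equal as POVMs, which uses $\sigma$-finiteness (equivalently, that $\{\vph_n\}$ is separating) plus a monotone-class / uniformly-bounded-convergence argument to upgrade from the dense family to all normal states and from indicators to all bounded measurable functions. The analytic content is otherwise routine, mirroring the arguments in Ref.~\onlinecite{10.1063/1.4934235,*10.1063/1.4961516}, so I would keep those steps brief and cite the earlier lemmas and the standard-Borel structure where needed.
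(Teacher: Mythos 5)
Your treatment of items (i) and (ii) is correct and is essentially the intended route: the change-of-variables computation shows that the $n$-th coordinate projection is a version of $dP^{\oM_T}_{\vph_n}/dP^{\oM_T}_{\vph_0}$, which gives (ii), verifies condition~(ii) of Lemma~\ref{lemm:sstat} for the dense family $(\vph_n)$, and the $L^1$-continuity of $\vph \mapsto dP^{\oM}_{\vph}/dP^{\oM}_{\vph_0}$ together with the isometry $g \mapsto g\circ T$ extends the identity to all of $\SMin$, yielding (i).

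Item (iii) has a genuine gap. Your candidate relabeling map $T'$ is the right one, but the step ``$\oN_{T'}=\oM_T$ because both are determined by the common outcome-probability data through the now-identical Radon--Nikodym coordinates'' does not follow. Property (ii) pins down a POVM on $\Reali$ only modulo its base measure: it says $P_{\vph_n}(E)=\int_E t_n\,dP_{\vph_0}(t)$, so the whole family is determined by $P_{\vph_0}$, but nothing in your argument identifies $P^{\oN_{T'}}_{\vph_0}$ with $P^{\oM_T}_{\vph_0}$; two distinct POVMs on $\Reali$ can both satisfy (ii). (Relatedly, your invocation of Lemma~\ref{lemm:sstat} for the pair $(\oN,\oM_T)$ is a misapplication: that lemma compares $\oN$ with its own pushforward $\oN_{T'}$, not with $\oM_T$, and replacing $\oN_{T'}$ by $\oM_T$ on the right-hand side presupposes the equality you are trying to prove.) What is needed is to exploit the relation $\oN_{T'}\simeq\oM_T$ itself: take a regular Markov kernel $\tkap$ implementing $\oM_T\preceq\oN_{T'}$ (regularization is available because $\Reali$ is standard Borel), form the joint POVM $\oP$ on $\Reali\times\Reali$ whose marginals are $\oM_T$ and $\oN_{T'}$, observe $\oP\simeq\oM_T\simeq\oN_{T'}$, and apply Lemma~\ref{lemm:sstat} to the two coordinate projections together with property (ii) for both marginals to conclude $t_1=t_2$, $\oP$-a.e.; the kernel is then concentrated on the diagonal and $\oM_T=\oN_{T'}$ follows. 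This is exactly the diagonal argument used in the proof of Theorem~\ref{theo:mspovms}, (iii)$\implies$(i); without it (or the equivalent classical fact that equivalent dominated experiments have the same standard measure), the proof of (iii) is incomplete.
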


The following theorem establishes the relationship between
the two minimal sufficiency conditions for a POVM in Definition~\ref{defi:mspovm}
and that for the corresponding QC channel.

\begin{theo}
\label{theo:mspovms}
Let $\Min$ be a $\sigma$-finite von Neumann algebra
and let $(\Omega , \Sigma , \oM)$ be a POVM on $\Min .$
Then the following conditions are equivalent.
\begin{enumerate}[(i)]
\item
$\oM$ is kernel minimal sufficient.
\item
$\Gamma^{\oM}$ is minimal sufficient.
\end{enumerate}
If we further assume that $\Minast$ is separable 
and $(\Omega , \Sigma)$ is standard Borel, 
then the conditions~(i) and (ii) are equivalent to
\begin{enumerate}
\item[(iii)]
$\oM$ is relabeling minimal sufficient.
\end{enumerate}
\end{theo}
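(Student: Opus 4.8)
The plan is to prove the chain (i) $\Leftrightarrow$ (ii) first, then add (iii) under the stronger hypotheses. For (i) $\Rightarrow$ (ii): suppose $\oM$ is kernel minimal sufficient and take $\Gamma \in \cpch{L^\infty(P^\oM_{\vph_0})}$ with $\Gamma^\oM \circ \Gamma = \Gamma^\oM$. Since $L^\infty(P^\oM_{\vph_0})$ is abelian, Lemma~\ref{lemm:linf} gives an $\oM$-$\oM$ weak Markov kernel $\kappa(\cdot|\cdot)$ with $\Gamma([\chi_E]_\oM) = [\kappa(E|\cdot)]_\oM$ for all $E \in \Sigma$. Applying $\Gamma^\oM$ and using $\Gamma^\oM([\chi_E]_\oM) = \oM(E)$ together with~\eqref{eq:qcch}, the fixed-point condition becomes exactly $\oM(E) = \int_\Omega \kappa(E|\omega)\,d\oM(\omega)$ for all $E$; kernel minimal sufficiency forces $\kappa(E|\omega) = \chi_E(\omega)$, $\oM(\omega)$-a.e., hence $\Gamma([\chi_E]_\oM) = [\chi_E]_\oM$ for all $E \in \Sigma$. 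By the uniformly bounded $\oM$-a.e.\ convergent simple-function approximation argument already used in Lemma~\ref{lemm:linf} and Proposition~\ref{prop:obsconc}, $\Gamma = \id$. For (ii) $\Rightarrow$ (i): given $\oM$-$\oM$ weak Markov kernel $\kappa$ satisfying~\eqref{eq:kernelsr}, Lemma~\ref{lemm:linf} produces $\Gamma \in \cpch{L^\infty(P^\oM_{\vph_0})}$ with $\Gamma([\chi_E]_\oM) = [\kappa(E|\cdot)]_\oM$, and the same computation shows $\Gamma^\oM \circ \Gamma = \Gamma^\oM$ on indicators, hence everywhere; minimal sufficiency of $\Gamma^\oM$ gives $\Gamma = \id$, so $[\kappa(E|\cdot)]_\oM = [\chi_E]_\oM$ for all $E$, which is the desired conclusion.

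Now assume $\Minast$ separable and $(\Omega,\Sigma)$ standard Borel. For (i) $\Rightarrow$ (iii): let $(\Omega_1,\Sigma_1,\oN)$ be postprocessing equivalent to $\oM$. Form the Lehmann--Scheff\'e-type map $T$ of~\eqref{eq:lsb} for $\oN$ and the analogous $S$ for $\oM$; by Lemma~\ref{lemm:lsb}(i)--(ii) both $\oN_T$ and $\oM_S$ are postprocessing equivalent to their originals and carry the canonical coordinate form of the Radon--Nikodym derivatives. Since $\oN \simeq \oM$, we get $\oN_T \simeq \oM_S$, and Lemma~\ref{lemm:sstat} (applied in both directions, using the coordinate form from Lemma~\ref{lemm:lsb}(ii)) shows that $\oM_S$ and $\oN_T$ actually coincide as POVMs on $\realspi$. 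Now kernel minimal sufficiency of $\oM$ is used to show that the canonical postprocessing $S \colon \Omega \to \Reali$ is essentially a relabeling, i.e.\ there is a measurable right inverse modulo $\oM$-null sets; this is where standard Borel-ness enters, via a measurable selection argument applied to the regular Markov kernel witnessing $\oM_S \preceq \oM$ (which exists because $(\Omega,\Sigma)$ is standard Borel). Composing, one extracts from the postprocessing $\oN \preceq \oM$ a genuine measurable $f\colon \Omega_1 \to \Omega$ with $\oN_f = \oM$, which is exactly relabeling minimal sufficiency. For (iii) $\Rightarrow$ (i): if $\oM$ is relabeling minimal sufficient and $\kappa$ satisfies~\eqref{eq:kernelsr}, one shows the kernel cannot be a nontrivial deterministic-plus-noise perturbation: the standard-Borel hypothesis again upgrades $\kappa$ to a regular Markov kernel, and relabeling minimal sufficiency forces the associated postprocessing to be trivial, giving $\kappa(E|\cdot) = \chi_E$ $\oM$-a.e.

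The main obstacle is the (i) $\Leftrightarrow$ (iii) equivalence, specifically extracting a pointwise-defined measurable relabeling $f$ from an abstract postprocessing kernel: weak Markov kernels only satisfy the defining identities almost everywhere, so one must first invoke standard Borel-ness to pass to a regular Markov kernel and then run a measurable selection / von Neumann selection theorem to turn "$\oM_S \preceq \oM$ with trivial kernel" into an honest measurable map. The computations feeding into (i) $\Leftrightarrow$ (ii) are routine given Lemma~\ref{lemm:linf}, and the bridge $\oM_S = \oN_T$ is essentially bookkeeping with Lemmas~\ref{lemm:sstat} and~\ref{lemm:lsb}; but the selection step, together with careful handling of $\oM$-null sets so that the relabeling is well-defined everywhere it needs to be, is the delicate part. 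I would structure the write-up so that (i) $\Leftrightarrow$ (ii) is dispatched quickly and the bulk of the argument is the (i) $\Leftrightarrow$ (iii) portion, citing the parallel treatment in Ref.~\onlinecite{10.1063/1.4934235,*10.1063/1.4961516} for the portions that transfer verbatim.
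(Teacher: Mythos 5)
Your treatment of (i)$\iff$(ii) is correct and is essentially the paper's argument: Lemma~\ref{lemm:linf} translates fixed points of $\Gamma^{\oM}$ into self-postprocessing kernels of $\oM$ and back, and the simple-function approximation closes both directions. Your plan for (i)$\implies$(iii) has the right core idea (use kernel minimal sufficiency to invert the canonical statistic), but it is under-specified and partly misdirected. The paper applies the regular Markov kernel witnessing $\oM \preceq \oM_T$ (not the trivial direction $\oM_T \preceq \oM$), composes with $T$ to get $\oM(E)=\int\kappa(E|T(\omega))\,d\oM(\omega)$, applies kernel minimal sufficiency to force $\kappa(E|T(\omega))=\chi_E(\omega)$ a.e., and then uses a countable separating family plus the Lusin--Souslin theorem (injective Borel images are Borel) to invert $T$ off an $\oM$-null set; no selection theorem is needed, and relabeling minimal sufficiency of $\oM$ then follows by composing with Lemma~\ref{lemm:lsb}(iii) rather than by proving $\oN_T=\oM_S$. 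Note also that Lemma~\ref{lemm:sstat} compares a POVM with its own pushforward under a map, so ``applying it in both directions'' does not by itself yield the identity $\oN_T=\oM_S$ for two merely postprocessing-equivalent POVMs; that identity needs the diagonal argument described below.

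The genuine gap is (iii)$\implies$(i). You assert that ``relabeling minimal sufficiency forces the associated postprocessing to be trivial,'' but that is precisely the statement to be proved, and you give no mechanism: a relabeling $f$ with $\oM_f=\oM$ supplied by the definition says nothing a priori about an arbitrary kernel $\kappa$ satisfying \eqref{eq:kernelsr}. The paper's route is to first invoke the uniqueness theorem for relabeling minimal sufficient POVMs (Theorem~5 of Ref.~\onlinecite{10.1063/1.4934235,*10.1063/1.4961516}) to obtain an almost isomorphism between $\oM$ and the canonical POVM $\oM_T$, reducing the claim to kernel minimal sufficiency of $\oM_T$; then, given a self-postprocessing regular kernel $\tkap$ of $\oM_T$, one forms the joint POVM $\oN(E)=\int\tkap(E\rvert_{t_2}|t_2)\,d\oM_T(t_2)$ on $\Reali\times\Reali$, checks that $\oN\simeq\oM_T$ with both coordinate marginals equal to $\oM_T$, and uses Lemmas~\ref{lemm:sstat} and \ref{lemm:lsb}(ii) to conclude $t_1=t_2$ $\oN$-a.e., i.e.\ $\tkap(\cdot|t)=\delta_t$. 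The essential ingredient is the coordinate form of the Radon--Nikodym derivatives on $\Reali$, which $\oM$ itself need not possess; without the detour through $\oM_T$ and this diagonal argument, relabeling minimal sufficiency gives you no handle on $\kappa$, and your sketch cannot be completed as written.
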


\begin{proof}
(i)$\implies$(ii).
Assume (i).
We take arbitrary $\Gamma \in \cpch{L^\infty (P^{\oM}_{\vph_0})}$
such that $\Gamma^{\oM} \circ \Gamma =  \Gamma^{\oM} . $
Then Lemma~\ref{lemm:linf} implies that there exists a $P^{\oM}_{\vph_0}$-$P^{\oM}_{\vph_0}$
weak Markov kernel $\kappa (\cdot | \cdot)$ such that
$[ \kappa (E|\cdot) ]_\oM = \Gamma (  [\chi_E ]_\oM )$
for each $E \in \Sigma .$
Then we have 
\[
	\oM(E)
	=
	\Gamma^{\oM} ([\chi_E]_{\oM})
	=
	\Gamma^{\oM}
	\circ
	\Gamma ([\chi_E]_{\oM})
	=
	\Gamma^{\oM}
	( [ \kappa (E|\cdot) ]_\oM  )
	=
	\int_\Omega
	\kappa (E|\omega)
	d\oM (\omega)
\]
for every $E \in \Sigma .$
Thus the kernel minimal sufficiency of $\oM$ implies 
that
$
\Gamma ([\chi_E]_{\oM}) 
= [\kappa (E|\cdot)]_{\oM}
= [\chi_E]_{\oM}  $
for every $E \in \Sigma ,$
and hence we obtain 
$\Gamma  = \id_{ L^\infty (P^{\oM}_{\vph_0})  } .$
Therefore $\Gamma^{\oM}$ is minimal sufficient.

(ii)$\implies$(i).
Assume (ii).
We take an arbitrary $\oM$-$\oM$
weak Markov kernel $\kappa(\cdot | \cdot)$ satisfying~\eqref{eq:kernelsr}.
Since $\kappa (\cdot | \cdot)$ is also a
$P^{\oM}_{\vph_0}$-$P^{\oM}_{\vph_0}$
weak Markov kernel,
Lemma~\ref{lemm:linf} assures that
there exists a channel 
$\Gamma \in \cpch{      L^\infty (P^{\oM}_{\vph_0})          }$
such that $\Gamma ([\chi_E]_{\oM})  =  [\kappa (E|\cdot)]_{\oM} $
for every $E \in \Sigma .$
Then the condition~\eqref{eq:kernelsr}
implies 
$
\Gamma^{\oM}  \circ \Gamma ( [\chi_E]_{\oM}  ) 
=
\Gamma^{\oM} ( [\chi_E]_{\oM}  )
$
for all $E \in \Sigma ,$
and hence we have 
$\Gamma^{\oM}  \circ \Gamma = \Gamma^{\oM} .$
Thus the minimal sufficiency of $\Gamma^{\oM}$ implies 
$\Gamma =  \id_{ L^\infty (P^{\oM}_{\vph_0})  } $
and therefore we have $\kappa(E|\omega) =  \chi_E(\omega) ,$
$\oM(\omega)$-a.e.\
for
every $E \in \Sigma ,$
which proves the kernel minimal sufficiency of $\oM .$

Now we assume that $\Minast$ is separable and 
$(\Omega , \Sigma)$ is standard Borel.
Let $(\vph_n)_{n \geq 1} ,$ $T,$ and $\oM_T$ be the same as in Lemma~\ref{lemm:lsb}.

(i)$\implies$(iii).
Assume (i).
Then Lemma~\ref{lemm:lsb}~(i) and the standard Borel property 
of $(\Omega , \Sigma)$ imply that there exists  
an $\oM$-$\oM_T$ regular Markov kernel $\kappa (\cdot | \cdot)$
such that 
\[
	\oM (E) 
	=
	\int_{\Reali}
	\kappa (E|t) 
	d \oM_T (t)
	=
	\int_\Omega 
	\kappa (E|T(\omega))
	d \oM 
	(\omega)
\]
holds for each $E \in \Sigma .$
Therefore the assumption~(i) implies 
$\kappa (E|T ( \omega )) = \chi_E (\omega) ,$
$\oM (\omega)$-a.e.\
for each $E\in \Sigma .$
Since $(\Omega , \Sigma)$ is standard Borel, 
there exists a countable family $\{ E_n \}_{n \geq 1} \subseteq \Sigma$
that separates all the points of $\Omega .$
Thus there exists an $\oM$-null set $N \in \Sigma$
such that
\begin{equation}
	\kappa (E_n | T(\omega))
	=
	\chi_{E_n} (\omega ),
	\quad
	(\forall n \geq 1  ,  \forall \omega \in \Omega \setminus N) .
	\label{eq:kachi}
\end{equation}
Now suppose that $\omega , \omega^\prime \in \Omega \setminus N$
and
$T(\omega) = T(\omega^\prime) .$
Then \eqref{eq:kachi} implies that 
$\chi_{E_n} (\omega) = \chi_{E_n} (\omega^\prime)$
for all $n \geq 1 ,$
and therefore $\omega = \omega^\prime .$
Thus $T$ is injective on $\Omega \setminus N .$
Since an image of an injective measurable mapping between 
standard Borel spaces is measurable,
the restriction $T  \rvert_{\Omega \setminus N}$ 
of $T$ to $\Omega \setminus N$
is a Borel isomorphism between standard Borel spaces
$(\Omega \setminus N , \Sigma \cap ( \Omega \setminus N ))$
and 
$(\tilde{\Omega} ,  \realbi \cap \tilde{\Omega}) ,$
where we have defined $\tilde{\Omega} := T (\Omega \setminus N) ,$
\[
	\Sigma \cap ( \Omega \setminus N )
	:=
	\set{ E \cap ( \Omega \setminus N ) | E \in \Sigma  } ,
\]
and
\[
	\realbi \cap \tilde{\Omega}
	:=
	\set{E \cap \tilde{\Omega} | E \in \realbi} .
\]
Thus if we define $S \colon \Reali \to \Omega $
by
\[
	S (t)
	:=
	\begin{cases}
	\left( T\rvert_{\Omega \setminus N} \right)^{-1}
	(t) ,
	& (t \in \tilde{\Omega }) ;
	\\
	\omega_0  ,
	& ( t \in \Reali \setminus \tilde{\Omega} ),
	\end{cases}
\]
where $\omega_0 \in \Omega$ is arbitrary,
then $S$ is $\realbi / \Sigma$-measurable and $(\oM_T)_S = \oM .$
Since $\oM_T$ is a relabeling minimal sufficient POVM postprocessing equivalent to $\oM ,$
this shows that $\oM$ is also relabeling minimal sufficient.

(iii)$\implies$(i).
Assume (iii).
According to the uniqueness theorem for the relabeling minimal sufficient
POVM
(Ref.~\onlinecite{10.1063/1.4934235,*10.1063/1.4961516}, Theorem~5, see also the erratum),
$(\Omega , \Sigma , \oM)$ and 
$\realmspi{\oM_T}$ are almost isomorphic,
i.e.\
there exist $\oM$-null set $N_1 \in \Sigma ,$
$\oM_T$-null set $N_2 \in \realbi ,$
and a Borel isomorphism $h$ from 
$(\Omega \setminus N_1 , \Sigma \cap (\Omega \setminus N_1))$
to
$(\Reali \setminus N_2 , \realbi \cap (\Reali \setminus N_2))$
such that $\oM_T (E) = \oM  (h^{-1} (E))$
for all $E \in \realbi \cap (\Reali \setminus N_2) .$
This almost isomorphism induces an isomorphism between the corresponding QC channels
$\Gamma^{\oM}$ and $\Gamma^{\oM_T} ,$
indicating $\Gamma^{\oM} \cong \Gamma^{\oM_T}  .$
Thus it is sufficient to show that $\oM_T$ is kernel minimal sufficient. 
Suppose that $\kappa (\cdot|\cdot)$ is an $\oM_T$-$\oM_T$ weak Markov kernel
satisfying
\[
	\oM_T 
	(E)
	=
	\int_{\Reali}
	\kappa (E|t_2)
	d \oM_T (t_2)
\]
for all $E \in \realbi .$
Since $\realspi$ is standard Borel,
there exists a regular Markov kernel $\tkap (\cdot | \cdot)$
such that
$\kappa (E|t_2) =  \tkap (E|t_2) ,$
$\oM_T (t_2)$-a.e.\
for all $E \in \realbi .$
Then we define a POVM
$\oN$ on the direct product space
$(\Reali \times \Reali ,  \realbi \otimes \realbi )$
by
\[
	\oN  (E)
	:=
	\int_{\Reali}
	\tkap (E \rvert_{t_2} | t_2)
	d \oM_T (t_2) ,
	\quad
	(E \in  \realbi \otimes \realbi ),
\]
where $E \rvert_{t_2} :=  \set{ t_1 \in \Reali |  (t_1 , t_2 ) \in E  } .$
From the definition of $\oN ,$ we have $\oN \preceq \oM_T .$
If we define canonical projections 
\begin{gather*}
	f \colon  \Reali \times \Reali  
	\ni (t_1 , t_2)
	\longmapsto
	t_1 \in \Reali ,
	\\
	g \colon  \Reali \times \Reali  
	\ni (t_1 , t_2)
	\longmapsto
	t_2 \in \Reali ,
\end{gather*}
then the POVMs induced by these maps and $\oN$ are given by
\begin{gather*}
	\oN_f (E)
	=
	\int_{\Reali}
	\tkap (E|t_2) d \oM_T (t_2) 
	= \oM_T (E)
	\quad
	(E \in \realbi) ,
	\\
	\oN_g (E)
	= 
	\oM_T(E)
	\quad
	(E\in \realbi ) ,
\end{gather*}
indicating $\oN \preceq \oM_T = \oN_f = \oN_g  \preceq \oN . $
Thus from Lemmas~\ref{lemm:sstat} and \ref{lemm:lsb} we obtain
\[
	t_1
	=
	\left(
	\frac{
	d P^{\oM_T}_{\vph_n}
	}{
	d P^{\oM_T}_{\vph_0}
	}
	(t_1)
	\right)_{n \geq 1}
	=
	\left(
	\frac{
	d P^{\oN}_{\vph_n}
	}{
	d P^{\oN}_{\vph_0}
	}
	(t_1 , t_2)
	\right)_{n \geq 1}
	=
	\left(
	\frac{
	d P^{\oM_T}_{\vph_n}
	}{
	d P^{\oM_T}_{\vph_0}
	}
	(t_2)
	\right)_{n \geq 1}
	=
	t_2 ,
	\quad
	\oN(t_1, t_2 ) \text{-a.e.}
\]
Therefore if we put $\tilde{N} := \set{ (t_1,t_2) \in \Reali \times \Reali | t_1 \neq t_2   } ,$
then we have
\[
	0 = \oN(\tilde{N})
	=
	\int_{\Reali }
	\tkap (\Reali \setminus \{  t \}|t)
	d \oM_T (t) ,
\]
which implies $\tkap (\Reali \setminus \{ t \} | t ) = 0 ,$
$\oM_T (t)$-a.e.
Thus there exists an $\oM_T$-null set $N \in \realbi$ such that 
$\tkap (\cdot | t)$ is concentrated on $\{ t \}$ for all $t \in \Reali \setminus N .$
Hence we have
\[
	\kappa (E|t)
	=
	\tkap(E|t)
	=
	\chi_E (t) ,
	\quad
	\oM_T (t)\text{-a.e.}
\]
for all $E\in \realbi , $
proving the kernel minimal sufficiency of $\oM_T .$
\end{proof}

If we do not assume
in Theorem~\ref{theo:mspovms}
the standard Borel property of the outcome space,
the equivalence (i) or (ii)$\iff$(iii)
does not hold according to the following example,
which is the one considered in the appendix of Ref.~\onlinecite{10.1063/1.4934235,*10.1063/1.4961516}.

\begin{exam}
\label{exam:pvm}
Let $\mu$ be the Lebesgue measure defined on the Borel 
$\sigma$-algebra
$\cB ( [0,1] )$ of the unit interval $[0,1] $
and let $\Min $ be the set 
$\cL  ( L^2 (\mu)  )$
of bounded operators on the Hilbert space $L^2 (\mu).$
We define a PVM $([0,1] ,  \cB  ([0,1]) , \oM  )$ on $\Min$ by
\[
	\oM(E) [f]_\mu
	:=
	[\chi_E f]_{\mu}
	\quad
	(E \in \cB ([0,1])  ,  [f]_\mu \in L^2 (\mu))
\]
and $([0,1] , \bar{\cB} ([0,1]) , \bar{\oM}) $
by $\bar{\oM}(F) := \oM (E) $
($E \in \cB ([0,1])   ,   F \in \bar{\cB} ([0,1])    ,  $ $ E \triangle  F  $ is $\mu$-null),
where $\bar{\cB} ([0,1])$ is the family of Lebesgue measurable sets on $[0,1]$
and $E \triangle F := (E \setminus F) \cup (F \setminus E)$ is the symmetric difference.
Then Proposition~\ref{prop:pvm} implies that $\oM$ and $\bar{\oM}$ are both
kernel minimal sufficient and Lemma~3 of 
Ref.~\onlinecite{10.1063/1.4934235,*10.1063/1.4961516}
implies $\oM \simeq \bar{\oM} .$
Now we show that $\bar{\oM}$ is not relabeling minimal sufficient.
Suppose that $\bar{\oM}$ is relabeling minimal sufficient.
Then there should exist a $\cB ([0,1])/\bar{\cB} ([0,1])$-measurable mapping
$f \colon [0,1]\to [0,1]$ such that $\oM_f = \bar{\oM} .$
If we put $\kappa (E|x) :=  \chi_E (f(x))$
$(E \in \bar{\cB} ([0,1])  ,   x \in [0,1]) ,$ 
then $\kappa(\cdot | \cdot) $
is a regular Markov kernel satisfying 
\begin{equation}
	\bar{\oM} (E)
	=
	\int_{[0,1]}
	\kappa (E|x)
	d \oM (x)
	\quad
	(E \in \bar{\cB} ([0,1])) ,
	\label{eq:MkN}
\end{equation}
which contradicts the appendix of Ref.~\onlinecite{10.1063/1.4934235,*10.1063/1.4961516} 
in which it is proven that there is no regular Markov kernel
satisfying~\eqref{eq:MkN}.
Therefore $\bar{\oM}$ is not relabeling minimal sufficient. 
\end{exam}

\subsection{Characterization of discreteness}
\label{sebsec:disc}
A POVM $(\Omega , \Sigma , \oM)$ on a $\sigma$-finite
von Neumann algebra $\Min$ is called \emph{discrete} if 
$\Sigma$ is the power set $2^\Omega$ of $\Omega .$
For such $\oM ,$
the outcome space of $\Gamma^{\oM}$
coincides with $\ell^\infty (\Omega_0) ,$
where $\Omega_0 := \set{\omega \in \Omega | \oM (\{  \omega \}) \neq 0  }$
and $\ell^\infty (\Omega_0)$ denotes the set of bounded complex functions on $\Omega_0 .$
A non-zero projection $P$ on a von Neumann algebra $\M$ is called 
\emph{atomic} if there is no non-zero projection on $\M$ strictly smaller than $P.$
An abelian von Neumann algebra $\M$ is called 
\emph{totally atomic} if 
$\M$ is isomorphic to $\ell^\infty (\Omega)$
for some set $\Omega .$
An abelian von Neumann algebra $\M$ is totally atomic if and only if
there exists a family of mutually orthogonal atomic projections
$(P_\omega)_{\omega \in \Omega}$ on $\M$ such that
$\sum_{\omega \in \Omega} P_\omega = \1_{\M} .$
If we have such atomic projections 
$( P_\omega )_{\omega \in \Omega} ,$ 
then the mapping
\[
	\ell^\infty (\Omega)
	\ni f
	\longmapsto
	\sum_{\omega \in \Omega} 
	f(\omega)
	P_\omega
	\in \M
\]
is an isomorphism from $\ell^\infty (\Omega)$
onto $\M .$

The following lemma is immediate from Ref.~\onlinecite{10.2307/24491050}.

\begin{lemm}
\label{lemm:condi}
Let $\cH$ be a separable Hilbert space and let 
$\M$ be an abelian von Neumann subalgebra of $\LH . $
Then $\M$ is totally atomic if and only if there exists a faithful
conditional expectation 
from $\LH$ onto $\M .$
\end{lemm}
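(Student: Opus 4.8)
The plan is to establish the two implications separately. The direction ``$\M$ totally atomic $\implies$ a faithful normal conditional expectation onto $\M$ exists'' I would prove by an explicit construction. Assume $\M$ is totally atomic and fix a family $(P_\omega)_{\omega\in\Omega}$ of mutually orthogonal atomic projections of $\M$ with $\sum_{\omega\in\Omega}P_\omega=\1_{\LH}$, so that every $B\in\M$ is a $\sigma$-weakly convergent sum $B=\sum_{\omega\in\Omega}\lambda_\omega P_\omega$ with $(\lambda_\omega)$ bounded. Since $\cH$ is separable, $\Omega$ is countable and each $P_\omega\cH$ is a separable Hilbert space, hence carries a faithful density operator $\rho_\omega$ (supported on $P_\omega\cH$). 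Define $\condi\colon\LH\to\M$ by
\[
	\condi(A):=\sum_{\omega\in\Omega}\tr(\rho_\omega A)\,P_\omega\qquad(A\in\LH),
\]
the sum converging $\sigma$-weakly because $\sum_\omega P_\omega=\1_{\LH}$ and $|\tr(\rho_\omega A)|\le\|A\|$. Then $\condi$ is normal, completely positive and contractive, and it restricts to the identity on $\M$ since $\tr(\rho_\omega B)=\lambda_\omega$ for $B=\sum_{\omega'}\lambda_{\omega'}P_{\omega'}\in\M$; hence $\condi$ is a conditional expectation onto $\M$. It is faithful: $\condi(A^\ast A)=0$ forces $\tr(\rho_\omega A^\ast A)=0$, hence $AP_\omega=0$ by faithfulness of $\rho_\omega$ on $P_\omega\cH$, for every $\omega$, and summing over $\omega$ yields $A=0$.

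For the converse I would invoke the structure theory for ranges of normal conditional expectations on a type I factor from Ref.~\onlinecite{10.2307/24491050}, the infinite-dimensional counterpart of the decomposition recalled in Example~\ref{exam:finite}. Given a faithful normal conditional expectation $\condi\colon\LH\to\M$ with $\M$ abelian, that theory provides a decomposition $\cH=\bigoplus_i\cH_i\otimes\cK_i$ with $\M=\bigoplus_i\cL(\cH_i)\otimes\1_{\cK_i}$. Because $\M$ is abelian while each $\cL(\cH_i)$ is a factor, every $\cH_i$ is one-dimensional, so $\cL(\cH_i)\otimes\1_{\cK_i}=\cmplx P_i$, where $P_i$ is the projection of $\cH$ onto its $i$-th summand. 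The projections $(P_i)_i$ are then mutually orthogonal atomic projections of $\M$ summing to $\1_{\LH}$, so $\M$ is totally atomic.

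The construction direction is routine; the weight of the lemma is in the converse, and the main obstacle there is exactly the structure theorem for ranges of normal conditional expectations on $\LH$, which is what Ref.~\onlinecite{10.2307/24491050} supplies --- equivalently, one must exclude a faithful normal conditional expectation from $\LH$ onto a diffuse abelian subalgebra (for instance the multiplication algebra $L^\infty[0,1]\subseteq\cL(L^2[0,1])$). If one wanted to argue without quoting the reference, the natural first steps --- pass to the projection $z\in\M$ carrying the atomless part of $\M$, restrict $\condi$ to a faithful normal conditional expectation $z\LH z\to\M z$, and look for a contradiction between atomlessness of $\M z$ and normality of $\condi$ --- reduce the problem to, but do not bypass, the technical core of Ref.~\onlinecite{10.2307/24491050}.
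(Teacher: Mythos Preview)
Your proposal is correct and takes essentially the paper's approach: the paper simply states that the lemma is immediate from St{\o}rmer's result (Ref.~\onlinecite{10.2307/24491050}) with no further argument, and you likewise invoke that reference for the substantive converse while adding a clean explicit construction for the easy direction. One minor caution: St{\o}rmer's 1972 paper treats normal projections onto abelian subalgebras directly rather than via the general block decomposition $\M=\bigoplus_i\cL(\cH_i)\otimes\1_{\cK_i}$ you describe, so your attribution of that Koashi--Imoto-style structure theorem to that reference is a slight overreach --- but the statement you actually need (no faithful normal conditional expectation onto a diffuse abelian subalgebra of $\LH$) is indeed what St{\o}rmer supplies.
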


Now we can show the following theorem which characterizes the discreteness of a POVM
up to postprocessing equivalence.

\begin{theo}
\label{theo:discreteness}
Let $(\Omega , \Sigma , \oM)$ be a POVM
on a $\sigma$-finite von Neumann algebra $\Min .$
Then the following conditions are equivalent.
\begin{enumerate}[(i)]
\item
$\oM$ is postprocessing equivalent to a discrete POVM.
\item
$\Gamma^{\oM}$ is concatenation equivalent to a channel with a fully quantum outcome space.
\end{enumerate}
\end{theo}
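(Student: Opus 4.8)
The plan is to characterise both conditions through a single invariant, namely the outcome algebra of the minimal sufficient channel attached to $\Gamma^{\oM}$. By Corollary~\ref{coro:main} there is a minimal sufficient CP channel $\Gamma_0 \in \cpchset{\M_0}{\Min}$ with $\Gamma_0 \eqcp \Gamma^{\oM}$, unique up to isomorphism; equivalently, by Theorem~\ref{theo:main} and Proposition~\ref{prop:chex}, $\E_0 := \E_{\Gamma_0}$ is the minimal sufficient statistical experiment equivalent to $\E_{\Gamma^{\oM}}$, and by Theorem~\ref{theo:equiv} the family $(\Gamma_{0\ast}(\vph))_{\vph \in \SMin}$ is faithful on $\M_0$. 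Since $\Gamma^{\oM}$ has the \emph{abelian} outcome space $L^\infty(P^{\oM}_{\vph_0})$ and the construction in the proof of Theorem~\ref{theo:main} only passes to a corner and then to the fixed-point subalgebra of a semigroup of Schwarz channels, $\M_0$ is again abelian; and as $L^\infty(P^{\oM}_{\vph_0})$ is $\sigma$-finite and every von Neumann subalgebra of a $\sigma$-finite von Neumann algebra is $\sigma$-finite, $\M_0$ is $\sigma$-finite. I will show that (i) $\iff$ ``$\M_0$ is totally atomic'' $\iff$ (ii).

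For (i) $\iff$ ``$\M_0$ totally atomic'': if $\oM \simeq \oN$ with $\oN$ discrete, then $\Gamma^{\oM} \eqcp \Gamma^{\oN}$ by Corollary~\ref{coro:obsconc}, and the outcome algebra of $\Gamma^{\oN}$ is $\ell^\infty(\Omega_0)$, which is totally atomic; since $\Gamma_0$ is also the minimal sufficient channel equivalent to $\Gamma^{\oN}$, $\M_0$ is isomorphic to a fixed-point subalgebra of a semigroup of Schwarz channels on a corner of $\ell^\infty(\Omega_0)$, and every corner and every von Neumann subalgebra of a totally atomic abelian von Neumann algebra is totally atomic, so $\M_0$ is totally atomic. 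Conversely, if $\M_0$ is totally atomic, write $\M_0 \cong \ell^\infty(S) = L^\infty(\mu)$ with $\mu$ the (localizable) counting measure on $(S, 2^S)$; since $(\Gamma_{0\ast}(\vph))_{\vph}$ is faithful on $\M_0$, Proposition~\ref{prop:conv} yields a POVM $\oN$ on $\Min$ with outcome space $(S, 2^S)$ and $\Gamma^{\oN} \cong \Gamma_0$. This $\oN$ is discrete because its $\sigma$-algebra is a power set, and $\Gamma^{\oN} \cong \Gamma_0 \eqcp \Gamma^{\oM}$ gives $\oN \simeq \oM$ by Corollary~\ref{coro:obsconc}.

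For ``$\M_0$ totally atomic'' $\iff$ (ii): if $\M_0 \cong \ell^\infty(S)$, let $\condi_{\mathrm d}\colon \cL(\ell^2(S)) \to \ell^\infty(S)$ be the (faithful, normal) conditional expectation onto the diagonal and $\iota\colon \ell^\infty(S) \hookrightarrow \cL(\ell^2(S))$ the inclusion; both are CP channels, and $\condi_{\mathrm d} \circ \iota = \id$. Then $\Lambda := \Gamma_0 \circ \condi_{\mathrm d} \in \cpchset{\cL(\ell^2(S))}{\Min}$ has a fully quantum outcome space, $\Lambda \cocp \Gamma_0$ via $\condi_{\mathrm d}$, and $\Gamma_0 = \Lambda \circ \iota \cocp \Lambda$, so $\Lambda \eqcp \Gamma_0 \eqcp \Gamma^{\oM}$, proving (ii). Conversely, assume $\Gamma^{\oM} \eqcp \Lambda$ with $\Lambda \in \cpchset{\cL(\cK)}{\Min}$. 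Then $\E_\Lambda \eqcp \E_{\Gamma^{\oM}}$, so by the uniqueness in Theorem~\ref{theo:main} the minimal sufficient statistical experiment produced by running the construction of Theorem~\ref{theo:main} on $\E_\Lambda$ is isomorphic to $\E_0$; in particular its outcome algebra $\N_0$ satisfies $\N_0 \cong \M_0$. By that construction, after the reduction of Lemma~\ref{lemm:faithful} to the faithful family on the fully quantum algebra $\cL(P_\Lambda\cK) = P_\Lambda\cL(\cK)P_\Lambda$ (with $P_\Lambda = \bigvee_{\vph \in \SMin}\s(\Lambda_\ast(\vph))$), $\N_0$ is the range of a conditional expectation $\condi$ belonging to the semigroup $\F$ of Schwarz channels fixing that family; since $\condi \in \F$, $\condi(A^\ast A) = 0$ forces every state of the faithful family to vanish on $A^\ast A$, hence $A = 0$, so $\condi$ is a \emph{faithful} normal conditional expectation of $\cL(P_\Lambda\cK)$ onto $\N_0$. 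Because $\N_0 \cong \M_0$ is $\sigma$-finite it carries a faithful normal state $\omega_0$, and then $\omega_0 \circ \condi$ is a faithful normal state on $\cL(P_\Lambda\cK)$, so $P_\Lambda\cK$ is separable. Lemma~\ref{lemm:condi} now applies to the abelian subalgebra $\N_0 \subseteq \cL(P_\Lambda\cK)$ and shows $\N_0$ — hence $\M_0$ — is totally atomic.

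The substantive content is the implication (ii) $\implies$ (i); its core is the transport of the minimal sufficiency construction to the fully quantum representative $\Lambda$, which realises $\M_0$, up to isomorphism, as the range of a \emph{faithful} normal conditional expectation of a fully quantum algebra — precisely the hypothesis of Lemma~\ref{lemm:condi}. I expect the two delicate points to be: (1) checking that the conditional expectation furnished by Lemma~\ref{lemm:ergodic} in that construction is faithful, which follows from its lying in $\F$ together with the faithfulness of the family obtained in Lemma~\ref{lemm:faithful}; and (2) the $\sigma$-finiteness bootstrap ($\M_0$ $\sigma$-finite $\Rightarrow$ $\cL(P_\Lambda\cK)$ $\sigma$-finite $\Rightarrow$ $P_\Lambda\cK$ separable) needed to invoke Lemma~\ref{lemm:condi}. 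The remaining implications are routine bookkeeping with Corollaries~\ref{coro:main} and \ref{coro:obsconc} and Proposition~\ref{prop:conv}.
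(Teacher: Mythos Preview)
Your proof is correct and follows essentially the same route as the paper: for (ii)$\Rightarrow$(i) you run the minimal-sufficiency construction on the fully quantum representative to realise the (abelian) minimal sufficient outcome algebra as the range of a faithful normal conditional expectation on $\cL(P_\Lambda\cK)$ and then invoke Lemma~\ref{lemm:condi}, and for (i)$\Rightarrow$(ii) you use the diagonal expectation $\cL(\ell^2(S))\to\ell^\infty(S)$. The only differences are organisational---you phrase the argument as a three-way equivalence through the invariant ``$\M_0$ is totally atomic'' and obtain separability of $P_\Lambda\cK$ via $\sigma$-finiteness of $\N_0$ rather than directly from the faithfulness of $\Lambda_\ast(\vph_0)$ on the reduced algebra (the paper's shorter route, since $\s(\Lambda_\ast(\vph_0))=P_\Lambda$ when $\vph_0$ is faithful).
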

\begin{proof}
Assume (i). 
Then there exists a discrete POVM
$(\Omega_1 ,  2^{\Omega_1} , \oM_1)$ 
postprocessing equivalent to $\oM .$
We can take $\oM_1$ such that
$\oM_1 (\{ \omega_1  \}) \neq 0$
for all $\omega_1 \in \Omega_1 .$
Then Corollary~\ref{coro:obsconc} implies
$\Gamma^{\oM} \eqcp \Gamma^{\oM_1} .$
Here, $\Gamma^{\oM_1}$ is the mapping
$\Gamma^{\oM_1} \colon \ell^\infty (\Omega_1) \to \Min$ given by
\[
	\Gamma^{\oM_1}
	(f)
	=
	\sum_{\omega_1 \in \Omega_1}
	f(\omega_1)
	\oM_1 (\{ \omega_1\} ).
\]
We define a Hilbert space 
$ 
\ell^2 (\Omega_1)  
:=  
\set{ f\colon \Omega \to \cmplx |  
\sum_{\omega_1 \in \Omega_1}  
|f(\omega_1 )|^2 < \infty} ,
$
and a conditional expectation from
$\cL  (  \ell^2(\Omega_1)  )$ onto 
$\ell^\infty (\Omega_1)$
by
\[
	\condi (A)
	=
	\sum_{\omega_1 \in \Omega_1}
	\braket{ \delta_{\omega_1} | A \delta_{\omega_1}   }
	\ket{\delta_{\omega_1}}
	\bra{\delta_{\omega_1}}
	\quad
	(A \in \cL  ( \ell^2 (\Omega_1)  )) ,
\]
where $\braket{\cdot |  \cdot}$ is the inner product on $\ell^2 (\Omega_1)$ 
defined by
\[
	\braket{f | g}
	:=
	\sum_{\omega_1 \in \Omega_1}
	\overline{f(\omega_1)}
	g(\omega_1)
	\quad
	(f,g\in \ell^2 (\Omega_1)),
\]
$\ket{f} \bra{g}$
($f,g \in \ell^2(\Omega_1)$) is the von Neumann-Schatten product
defined by $\ket{f} \bra{g} h =  \braket{g|h} f$
($h\in \ell^2(\Omega_1) $),
and
\[  
	\delta_{\omega_1} (\omega) 
	:=
	\begin{cases}
	1, & (\omega = \omega_1) ;
	\\
	0, & (\omega \neq \omega_1) .
	\end{cases}
\]
Here we identify $\ell^\infty (\Omega_1)$ with 
$
\set{  
\sum_{\omega_1 \in \Omega_1} f(\omega_1)    
\ket{\delta_{\omega_1}}
\bra{\delta_{\omega_1}}
|
f \in \ell^\infty (\Omega_1)
} .
$
We also define a channel $\Gamma \in \cpchset{ \cL  (  \ell^2(\Omega_1)  )  }{  \Min  }$
by 
\begin{equation}
	\Gamma (A) 
	:= 
	\sum_{\omega_1 \in \Omega_1} 
	\braket{\delta_{\omega_1} | A \delta_{\omega_1}  }
	\oM_1 (\{ \omega_1\}).
	\label{eq:fquant}
\end{equation}
Then we have 
$\Gamma = \Gamma^{\oM_1} \circ \condi$
and 
$\Gamma^{\oM_1} =  \Gamma \rvert_{\ell^\infty (\Omega_1)} .$
Therefore we obtain $\Gamma^{\oM} \eqcp \Gamma^{\oM_1} \eqcp \Gamma,$ 
proving the condition~(ii).

Assume (ii).
Then there exist a Hilbert space $\cK$ and a channel
$\Gamma \in \cpchset{\LK}{\Min}$ satisfying
$\Gamma^{\oM} \eqcp \Gamma .$
Since $\Gamma$ is concatenation equivalent to 
$\Gamma \rvert_{  \cL  ( P_0 \cK  )   } 
\in \cpchset{  \cL  ( P_0 \cK  )  }{\Min}
,$
where $P_0$ is the support of $\Gamma_\ast (\vph_0) ,$
we may assume that $\Gamma_\ast (\vph_0)$ is faithful and therefore
that $\cK$ is separable.
Then from the proof of Theorem~\ref{theo:main},
there exist an Umegaki minimal sufficient subalgebra 
$\M_0$ of $\LK$
with respect to $(\Gamma_\ast (\vph))_{\vph \in \SMin} $
and
a conditional expectation $\condi$ from $\LK$ onto $\M_0$
satisfying
$\Gamma \circ \condi = \Gamma .$
Since 
$
\Gamma_\ast (\vph_0)  
=
\vph_0 \circ \Gamma 
=
\vph_0 \circ \Gamma \circ \condi
$
is faithful on $\LK ,$
$\condi$ is a faithful conditional expectation.
Moreover,
$\M_0$ is abelian
because, from the uniqueness of the minimal sufficient channel, 
$\M_0$ is isomorphic to a von Neumann subalgebra
of $L^{\infty} (P^{\oM}_{\vph_0}) ,$
the outcome space of $\Gamma^{\oM} .$
Therefore Lemma~\ref{lemm:condi} implies that 
$\M_0$ is totally atomic.
Thus the restriction $\Gamma \rvert_{\M_0} \in \cpchset{\M_0}{\Min}$
is isomorphic to $\Gamma^{\oM_0}$ for a discrete POVM $\oM_0$ on $\Min .$
Since $\Gamma \rvert_{\M_0}$
is concatenation equivalent to $\Gamma$ and $\Gamma^{\oM} ,$
the condition~(i) follows from Corollary~\ref{coro:obsconc}.
\end{proof}

\begin{rem}
\label{rem:holevo}
In Ref.~\onlinecite{PITHolevo2012} Holevo points out that
the nonexistence of the continuous analog of the fully quantum channel~\eqref{eq:fquant}
is related to the nonexistence of a normal conditional expectation 
from a fully quantum space onto its continuous abelian subalgebra, 
which is our Lemma~\ref{lemm:condi}.
Thus our Theorem~\ref{theo:discreteness}, together with its proof, 
explicitly elucidates this relation.
\end{rem}

\begin{rem}
\label{rem:cd}
The reason why Theorem~\ref{theo:discreteness}
is for the characterization of the discreteness of the \emph{postprocessing equivalence class} of a POVM $\oM ,$
not of the POVM $\oM$ itself,
is that
any discrete POVM is always postprocessing equivalent to a continuous POVM on the real line,
which can be shown as follows.

Let $\oM$ be a discrete POVM on $\Min .$ 
Without loss of generality we may assume that the outcome space of $\oM$
is $(\natun , 2^\natun),$ where $\natun$ denotes the set of natural numbers.
We define a mapping $\kappa (\cdot | \cdot ) \colon \realb \times \natun \to [0,1]$
by
\[
	\kappa (E|n)
	:=
	\mu ( [n , n+1 ) \cap E  )
	,
	\quad
	(n \in \natun , E \in \realb),
\]
where $\mu$ is the Lebesgue measure on $\realsp .$
We define a POVM $\realmsp{ \oN }$ by
\[
	\oN (E)
	:=
	\sum_{n \in \natun}
	\kappa (E|n) \oM (\{ n \}),
	\quad
	(E \in \realb).
\]
By definition we have $\oN \preceq \oM .$
On the other hand,
\[
\oM  (\{ n \} )  =  \oN  ([n , n+1))
=
\int_{\Real}
\chi_{[n,n+1)} (x)
d \oN (x) ,
\]
which implies $ \oM  \preceq \oN .$
Therefore we obtain $\oM  \simeq \oN .$
Furthermore, $\oN$ is continuous in the sense that
$\oN  (\{ x \}) = 0$ for all $x \in \Real .$
Thus we have shown that $\oM$ is postprocessing equivalent to a continuous POVM $\oN $
on the real line.
\end{rem}

\begin{acknowledgments}
The author would like to thank 
Takayuki Miyadera (Kyoto University)
for helpful discussions and comments.
He also would like to thank Erkka Haapasalo (Kyoto University)
for valuable comments on the first version of this paper.
\end{acknowledgments}


\begin{thebibliography}{24}%
\makeatletter
\providecommand \@ifxundefined [1]{%
 \@ifx{#1\undefined}
}%
\providecommand \@ifnum [1]{%
 \ifnum #1\expandafter \@firstoftwo
 \else \expandafter \@secondoftwo
 \fi
}%
\providecommand \@ifx [1]{%
 \ifx #1\expandafter \@firstoftwo
 \else \expandafter \@secondoftwo
 \fi
}%
\providecommand \natexlab [1]{#1}%
\providecommand \enquote  [1]{``#1''}%
\providecommand \bibnamefont  [1]{#1}%
\providecommand \bibfnamefont [1]{#1}%
\providecommand \citenamefont [1]{#1}%
\providecommand \href@noop [0]{\@secondoftwo}%
\providecommand \href [0]{\begingroup \@sanitize@url \@href}%
\providecommand \@href[1]{\@@startlink{#1}\@@href}%
\providecommand \@@href[1]{\endgroup#1\@@endlink}%
\providecommand \@sanitize@url [0]{\catcode `\\12\catcode `\$12\catcode
  `\&12\catcode `\#12\catcode `\^12\catcode `\_12\catcode `\%12\relax}%
\providecommand \@@startlink[1]{}%
\providecommand \@@endlink[0]{}%
\providecommand \url  [0]{\begingroup\@sanitize@url \@url }%
\providecommand \@url [1]{\endgroup\@href {#1}{\urlprefix }}%
\providecommand \urlprefix  [0]{URL }%
\providecommand \Eprint [0]{\href }%
\providecommand \doibase [0]{http://dx.doi.org/}%
\providecommand \selectlanguage [0]{\@gobble}%
\providecommand \bibinfo  [0]{\@secondoftwo}%
\providecommand \bibfield  [0]{\@secondoftwo}%
\providecommand \translation [1]{[#1]}%
\providecommand \BibitemOpen [0]{}%
\providecommand \bibitemStop [0]{}%
\providecommand \bibitemNoStop [0]{.\EOS\space}%
\providecommand \EOS [0]{\spacefactor3000\relax}%
\providecommand \BibitemShut  [1]{\csname bibitem#1\endcsname}%
\let\auto@bib@innerbib\@empty
\bibitem [{\citenamefont {Halmos}\ and\ \citenamefont
  {Savage}(1949)}]{halmos1949application}%
  \BibitemOpen
  \bibfield  {author} {\bibinfo {author} {\bibfnamefont {P.~R.}\ \bibnamefont
  {Halmos}}\ and\ \bibinfo {author} {\bibfnamefont {L.~J.}\ \bibnamefont
  {Savage}},\ }\bibfield  {title} {\enquote {\bibinfo {title} {{Application of
  the Radon-Nikodym theorem to the theory of sufficient statistics}},}\
  }\href@noop {} {\bibfield  {journal} {\bibinfo  {journal} {Ann. Math. Stat.}\
  }\textbf {\bibinfo {volume} {20}},\ \bibinfo {pages} {225--241} (\bibinfo
  {year} {1949})}\BibitemShut {NoStop}%
\bibitem [{\citenamefont {Bahadur}(1954)}]{bahadur1954}%
  \BibitemOpen
  \bibfield  {author} {\bibinfo {author} {\bibfnamefont {R.~R.}\ \bibnamefont
  {Bahadur}},\ }\bibfield  {title} {\enquote {\bibinfo {title} {{Sufficiency
  and Statistical Decision Functions}},}\ }\href {\doibase
  10.1214/aoms/1177728715} {\bibfield  {journal} {\bibinfo  {journal} {Ann.
  Math. Stat.}\ }\textbf {\bibinfo {volume} {25}},\ \bibinfo {pages} {423--462}
  (\bibinfo {year} {1954})}\BibitemShut {NoStop}%
\bibitem [{\citenamefont {Umegaki}(1959)}]{umegaki1959}%
  \BibitemOpen
  \bibfield  {author} {\bibinfo {author} {\bibfnamefont {H.}~\bibnamefont
  {Umegaki}},\ }\bibfield  {title} {\enquote {\bibinfo {title} {{Conditional
  expectation in an operator algebra, III}},}\ }\href {\doibase
  10.2996/kmj/1138844157} {\bibfield  {journal} {\bibinfo  {journal} {Kodai
  Math. Sem. Rep.}\ }\textbf {\bibinfo {volume} {11}},\ \bibinfo {pages}
  {51--64} (\bibinfo {year} {1959})}\BibitemShut {NoStop}%
\bibitem [{\citenamefont {Umegaki}(1962)}]{umegaki1962}%
  \BibitemOpen
  \bibfield  {author} {\bibinfo {author} {\bibfnamefont {H.}~\bibnamefont
  {Umegaki}},\ }\bibfield  {title} {\enquote {\bibinfo {title} {{Conditional
  expectation in an operator algebra. IV. Entropy and information}},}\ }\href
  {\doibase 10.2996/kmj/1138844604} {\bibfield  {journal} {\bibinfo  {journal}
  {Kodai Math. Sem. Rep.}\ }\textbf {\bibinfo {volume} {14}},\ \bibinfo {pages}
  {59--85} (\bibinfo {year} {1962})}\BibitemShut {NoStop}%
\bibitem [{\citenamefont {Petz}(1986)}]{Petz1986}%
  \BibitemOpen
  \bibfield  {author} {\bibinfo {author} {\bibfnamefont {D.}~\bibnamefont
  {Petz}},\ }\bibfield  {title} {\enquote {\bibinfo {title} {{Sufficient
  subalgebras and the relative entropy of states of a von Neumann algebra}},}\
  }\href {\doibase 10.1007/BF01212345} {\bibfield  {journal} {\bibinfo
  {journal} {Commun. Math. Phys.}\ }\textbf {\bibinfo {volume} {105}},\
  \bibinfo {pages} {123--131} (\bibinfo {year} {1986})}\BibitemShut {NoStop}%
\bibitem [{\citenamefont {Petz}(1988)}]{PETZ01031988}%
  \BibitemOpen
  \bibfield  {author} {\bibinfo {author} {\bibfnamefont {D.}~\bibnamefont
  {Petz}},\ }\bibfield  {title} {\enquote {\bibinfo {title} {{Sufficiency of
  channels over von Neumann algebras}},}\ }\href {\doibase
  10.1093/qmath/39.1.97} {\bibfield  {journal} {\bibinfo  {journal} {Q. J.
  Math.}\ }\textbf {\bibinfo {volume} {39}},\ \bibinfo {pages} {97--108}
  (\bibinfo {year} {1988})},\ \Eprint
  {http://arxiv.org/abs/http://qjmath.oxfordjournals.org/content/39/1/97.full.pdf+html}
  {http://qjmath.oxfordjournals.org/content/39/1/97.full.pdf+html} \BibitemShut
  {NoStop}%
\bibitem [{\citenamefont {Gu{\c{t}}{\u{a}}}\ and\ \citenamefont
  {Jen{\v{c}}ov{\'a}}(2007)}]{gutajencova2007}%
  \BibitemOpen
  \bibfield  {author} {\bibinfo {author} {\bibfnamefont {M.}~\bibnamefont
  {Gu{\c{t}}{\u{a}}}}\ and\ \bibinfo {author} {\bibfnamefont {A.}~\bibnamefont
  {Jen{\v{c}}ov{\'a}}},\ }\bibfield  {title} {\enquote {\bibinfo {title}
  {{Local asymptotic normality in quantum statistics}},}\ }\href {\doibase
  10.1007/s00220-007-0340-1} {\bibfield  {journal} {\bibinfo  {journal}
  {Commun. Math. Phys.}\ }\textbf {\bibinfo {volume} {276}},\ \bibinfo {pages}
  {341--379} (\bibinfo {year} {2007})}\BibitemShut {NoStop}%
\bibitem [{\citenamefont {Torgersen}(1991)}]{torgersen1991comparison}%
  \BibitemOpen
  \bibfield  {author} {\bibinfo {author} {\bibfnamefont {E.}~\bibnamefont
  {Torgersen}},\ }\href@noop {} {\emph {\bibinfo {title} {{Comparison of
  Statistical Experiments}}}}\ (\bibinfo  {publisher} {Cambridge University
  Press},\ \bibinfo {year} {1991})\BibitemShut {NoStop}%
\bibitem [{\citenamefont {{\L}uczak}(2014)}]{Luczak2014}%
  \BibitemOpen
  \bibfield  {author} {\bibinfo {author} {\bibfnamefont {A.}~\bibnamefont
  {{\L}uczak}},\ }\bibfield  {title} {\enquote {\bibinfo {title} {Quantum
  sufficiency in the operator algebra framework},}\ }\href {\doibase
  10.1007/s10773-013-1747-4} {\bibfield  {journal} {\bibinfo  {journal} {Int.
  J. Theor. Phys.}\ }\textbf {\bibinfo {volume} {53}},\ \bibinfo {pages}
  {3423--3433} (\bibinfo {year} {2014})}\BibitemShut {NoStop}%
\bibitem [{\citenamefont {K\"ummerer}\ and\ \citenamefont
  {Nagel}(1979)}]{ISI:A1979HD47500016}%
  \BibitemOpen
  \bibfield  {author} {\bibinfo {author} {\bibfnamefont {B.}~\bibnamefont
  {K\"ummerer}}\ and\ \bibinfo {author} {\bibfnamefont {R.}~\bibnamefont
  {Nagel}},\ }\bibfield  {title} {\enquote {\bibinfo {title} {{Mean ergodic
  semigroups on $W^\ast$-algebras}},}\ }\href@noop {} {\bibfield  {journal}
  {\bibinfo  {journal} {{Acta Sci. Math.}}\ }\textbf {\bibinfo {volume}
  {{41}}},\ \bibinfo {pages} {{151--159}} (\bibinfo {year}
  {{1979}})}\BibitemShut {NoStop}%
\bibitem [{\citenamefont {Kuramochi}(2015)}]{10.1063/1.4934235}%
  \BibitemOpen
  \bibfield  {author} {\bibinfo {author} {\bibfnamefont {Y.}~\bibnamefont
  {Kuramochi}},\ }\bibfield  {title} {\enquote {\bibinfo {title} {{Minimal
  sufficient positive-operator valued measure on a separable Hilbert space}},}\
  }\href {\doibase http://dx.doi.org/10.1063/1.4934235} {\bibfield  {journal}
  {\bibinfo  {journal} {J. Math. Phys.}\ }\textbf {\bibinfo {volume} {56}},\
  \bibinfo {eid} {102205} (\bibinfo {year} {2015}),\
  http://dx.doi.org/10.1063/1.4934235}\BibitemShut {NoStop}%
\bibitem [{\citenamefont {Kuramochi}(2016)}]{10.1063/1.4961516}%
  \BibitemOpen
  \bibfield  {author} {\bibinfo {author} {\bibfnamefont {Y.}~\bibnamefont
  {Kuramochi}},\ }\bibfield  {title} {\enquote {\bibinfo {title} {{Erratum:
  “Minimal sufficient positive-operator valued measure on a separable Hilbert
  space” [J. Math. Phys. 56, 102205 (2015)]}},}\ }\href {\doibase
  http://dx.doi.org/10.1063/1.4961516} {\bibfield  {journal} {\bibinfo
  {journal} {J. Math. Phys.}\ }\textbf {\bibinfo {volume} {57}},\ \bibinfo
  {eid} {089901} (\bibinfo {year} {2016}),\
  http://dx.doi.org/10.1063/1.4961516}\BibitemShut {NoStop}%
\bibitem [{\citenamefont {Takesaki}(1979)}]{takesakivol1}%
  \BibitemOpen
  \bibfield  {author} {\bibinfo {author} {\bibfnamefont {M.}~\bibnamefont
  {Takesaki}},\ }\href@noop {} {\emph {\bibinfo {title} {{Theory of Operator
  Algebras I}}}}\ (\bibinfo  {publisher} {Springer},\ \bibinfo {year}
  {1979})\BibitemShut {NoStop}%
\bibitem [{\citenamefont {Choi}(1974)}]{choi1974}%
  \BibitemOpen
  \bibfield  {author} {\bibinfo {author} {\bibfnamefont {M.-D.}\ \bibnamefont
  {Choi}},\ }\bibfield  {title} {\enquote {\bibinfo {title} {{A Schwarz
  inequality for positive linear maps on $C^{\ast}$-algebras}},}\ }\href
  {http://projecteuclid.org/euclid.ijm/1256051007} {\bibfield  {journal}
  {\bibinfo  {journal} {Illinois J. Math.}\ }\textbf {\bibinfo {volume} {18}},\
  \bibinfo {pages} {565--574} (\bibinfo {year} {1974})}\BibitemShut {NoStop}%
\bibitem [{\citenamefont {Segal}(1951)}]{segal1951}%
  \BibitemOpen
  \bibfield  {author} {\bibinfo {author} {\bibfnamefont {I.}~\bibnamefont
  {Segal}},\ }\bibfield  {title} {\enquote {\bibinfo {title} {{Equivalences of
  measure spaces}},}\ }\href@noop {} {\bibfield  {journal} {\bibinfo  {journal}
  {{Amer. J. Math.}}\ }\textbf {\bibinfo {volume} {{73}}},\ \bibinfo {pages}
  {{275--313}} (\bibinfo {year} {{1951}})}\BibitemShut {NoStop}%
\bibitem [{\citenamefont {Takesaki}(2003)}]{takesakivol2}%
  \BibitemOpen
  \bibfield  {author} {\bibinfo {author} {\bibfnamefont {M.}~\bibnamefont
  {Takesaki}},\ }\href@noop {} {\emph {\bibinfo {title} {{Theory of Operator
  Algebras II}}}}\ (\bibinfo  {publisher} {Springer},\ \bibinfo {year}
  {2003})\BibitemShut {NoStop}%
\bibitem [{\citenamefont {Koashi}\ and\ \citenamefont
  {Imoto}(2002)}]{PhysRevA.66.022318}%
  \BibitemOpen
  \bibfield  {author} {\bibinfo {author} {\bibfnamefont {M.}~\bibnamefont
  {Koashi}}\ and\ \bibinfo {author} {\bibfnamefont {N.}~\bibnamefont {Imoto}},\
  }\bibfield  {title} {\enquote {\bibinfo {title} {Operations that do not
  disturb partially known quantum states},}\ }\href {\doibase
  10.1103/PhysRevA.66.022318} {\bibfield  {journal} {\bibinfo  {journal} {Phys.
  Rev. A}\ }\textbf {\bibinfo {volume} {66}},\ \bibinfo {pages} {022318}
  (\bibinfo {year} {2002})}\BibitemShut {NoStop}%
\bibitem [{\citenamefont {Hayden}\ \emph {et~al.}(2004)\citenamefont {Hayden},
  \citenamefont {Jozsa}, \citenamefont {Petz},\ and\ \citenamefont
  {Winter}}]{Hayden2004}%
  \BibitemOpen
  \bibfield  {author} {\bibinfo {author} {\bibfnamefont {P.}~\bibnamefont
  {Hayden}}, \bibinfo {author} {\bibfnamefont {R.}~\bibnamefont {Jozsa}},
  \bibinfo {author} {\bibfnamefont {D.}~\bibnamefont {Petz}}, \ and\ \bibinfo
  {author} {\bibfnamefont {A.}~\bibnamefont {Winter}},\ }\bibfield  {title}
  {\enquote {\bibinfo {title} {{Structure of States Which Satisfy Strong
  Subadditivity of Quantum Entropy with Equality}},}\ }\href {\doibase
  10.1007/s00220-004-1049-z} {\bibfield  {journal} {\bibinfo  {journal}
  {Commun. Math. Phys.}\ }\textbf {\bibinfo {volume} {246}},\ \bibinfo {pages}
  {359--374} (\bibinfo {year} {2004})}\BibitemShut {NoStop}%
\bibitem [{\citenamefont {Holevo}(2012)}]{PITHolevo2012}%
  \BibitemOpen
  \bibfield  {author} {\bibinfo {author} {\bibfnamefont {A.~S.}\ \bibnamefont
  {Holevo}},\ }\bibfield  {title} {\enquote {\bibinfo {title} {Information
  capacity of a quantum observable},}\ }\href {\doibase
  10.1134/S0032946012010012} {\bibfield  {journal} {\bibinfo  {journal} {Probl.
  Inf. Transm.}\ }\textbf {\bibinfo {volume} {48}},\ \bibinfo {pages} {1--10}
  (\bibinfo {year} {2012})}\BibitemShut {NoStop}%
\bibitem [{\citenamefont {Dorofeev}\ and\ \citenamefont
  {de~Graaf}(1997)}]{Dorofeev1997349}%
  \BibitemOpen
  \bibfield  {author} {\bibinfo {author} {\bibfnamefont {S.}~\bibnamefont
  {Dorofeev}}\ and\ \bibinfo {author} {\bibfnamefont {J.}~\bibnamefont
  {de~Graaf}},\ }\bibfield  {title} {\enquote {\bibinfo {title} {Some
  maximality results for effect-valued measures},}\ }\href
  {http://dx.doi.org/10.1016/S0019-3577(97)81815-0} {\bibfield  {journal}
  {\bibinfo  {journal} {Indag. Math. (N.S.)}\ }\textbf {\bibinfo {volume}
  {8}},\ \bibinfo {pages} {349 -- 369} (\bibinfo {year} {1997})}\BibitemShut
  {NoStop}%
\bibitem [{\citenamefont {Jen\v{c}ov\'{a}}, \citenamefont {Pulmannov\'{a}},\
  and\ \citenamefont {Vincekov\'{a}}(2008)}]{jencova2008}%
  \BibitemOpen
  \bibfield  {author} {\bibinfo {author} {\bibfnamefont {A.}~\bibnamefont
  {Jen\v{c}ov\'{a}}}, \bibinfo {author} {\bibfnamefont {S.}~\bibnamefont
  {Pulmannov\'{a}}}, \ and\ \bibinfo {author} {\bibfnamefont {E.}~\bibnamefont
  {Vincekov\'{a}}},\ }\bibfield  {title} {\enquote {\bibinfo {title} {Sharp and
  fuzzy observables on effect algebras},}\ }\href {\doibase
  10.1007/s10773-007-9396-0} {\bibfield  {journal} {\bibinfo  {journal} {Int.
  J. Theor. Phys.}\ }\textbf {\bibinfo {volume} {47}},\ \bibinfo {pages}
  {125--148} (\bibinfo {year} {2008})}\BibitemShut {NoStop}%
\bibitem [{\citenamefont {Kechris}(1995)}]{kechris1995classical}%
  \BibitemOpen
  \bibfield  {author} {\bibinfo {author} {\bibfnamefont {A.}~\bibnamefont
  {Kechris}},\ }\href@noop {} {\emph {\bibinfo {title} {Classical Descriptive
  Set Theory}}}\ (\bibinfo  {publisher} {Springer},\ \bibinfo {year}
  {1995})\BibitemShut {NoStop}%
\bibitem [{\citenamefont {Sakai}(1971)}]{sakaibook}%
  \BibitemOpen
  \bibfield  {author} {\bibinfo {author} {\bibfnamefont {S.}~\bibnamefont
  {Sakai}},\ }\href@noop {} {\emph {\bibinfo {title} {{$C^\ast$-Algebras and
  $W^\ast$-Algebras}}}}\ (\bibinfo  {publisher} {Springer},\ \bibinfo {year}
  {1971})\BibitemShut {NoStop}%
\bibitem [{\citenamefont {St{\o}rmer}(1972)}]{10.2307/24491050}%
  \BibitemOpen
  \bibfield  {author} {\bibinfo {author} {\bibfnamefont {E.}~\bibnamefont
  {St{\o}rmer}},\ }\bibfield  {title} {\enquote {\bibinfo {title} {{On
  projection maps of von Neumann algebras}},}\ }\href
  {http://www.jstor.org/stable/24491050} {\bibfield  {journal} {\bibinfo
  {journal} {Math. Scand.}\ }\textbf {\bibinfo {volume} {30}},\ \bibinfo
  {pages} {46--50} (\bibinfo {year} {1972})}\BibitemShut {NoStop}%
\end{thebibliography}
\end{document}